\newtheorem{theorem}{Theorem}
\newtheorem{lemma}[theorem]{Lemma}
\newtheorem{proposition}[theorem]{Proposition}
\newtheorem{prop}[theorem]{Proposition}
\newtheorem{corollary}[theorem]{Corollary}
\newtheorem{conjecture}[theorem]{Conjecture}
\newtheorem{remark}[theorem]{Remark}
\newtheorem{definition}{Definition}
\newenvironment{proofof}{\indent \indent {\em Proof of}}{\hfill$\bf  \blacksquare$\bigskip}
\newcommand{\LL}{\mathit{L}}
\newcommand{\hD}{\widehat D}
\newcommand{\hhD}{\widehat{\widehat D}}
\newcommand{\mH}{\mathcal{H}}
\newcommand{\mf}{\mathcal{F}}
\newcommand{\mE}{\mathcal{E}}
\newcommand{\be}{\begin{eqnarray}}
\newcommand{\ben}{\begin{eqnarray*}}
\newcommand{\ee}{\end{eqnarray}}
\newcommand{\een}{\end{eqnarray*}}
\newcommand{\Rmnum}[1]{\expandafter\@slowromancap\romannumeral #1@}
\newcommand{\post}[2]{
\centering \leavevmode
\includegraphics[width=#2cm]{#1}
 }
\begin{document}
\title{Stability of a Peer-to-Peer Communication System }

 \author{
\authorblockN{Ji Zhu and Bruce Hajek}\\
\authorblockA{Department of Electrical and Computer Engineering \\
 and the Coordinated Science Laboratory  \\
 University of Illinois at Urbana-Champaign}
 \thanks{This research was supported in part by the National Science Foundation under grant NSF Grant CCF 10-16959.
An earlier version of this work was presented at the {\em 2011 ACM Symposium on Principles of Distributed Computing.}
}  
\thanks{Copyright (c) 2012 IEEE. Personal use of this material is permitted.ÊÊHowever, permission to use this material for any other purposes must be obtained from the IEEE by sending a request to pubs-permissions@ieee.org.}
}  
\maketitle

\begin{abstract}
This paper focuses on the stationary
portion of file download in an unstructured peer-to-peer
network, which typically follows for many hours
after a flash crowd initiation. The model includes
the case that peers can have some pieces at the time
of arrival.   The contribution of
the paper is to identify how much help is needed from
the seeds, either fixed seeds or peer seeds (which
are peers remaining in the system after obtaining a
complete collection) to stabilize the system.
The dominant cause for instability is the
missing piece syndrome, whereby one piece becomes
very rare in the network.  It is shown that stability can
be achieved with only a small amount of help from
peer seeds--even with very little help from a fixed
seed, peers need dwell as peer seeds on average only
long enough to upload one additional piece.
The region of stability is insensitive to the
piece selection policy.  Network coding can
substantially increase the region of stability in
case a portion of the new peers arrive with
randomly coded pieces.

\vspace{.1in}
\noindent
{\bf Keywords:} Peer to peer, missing piece syndrome, random peer contact, random useful piece selection,
Foster-Lyapunov stability, Markov process
\end{abstract}

\section{Introduction}
Second generation P2P networks such as
\textit{BitTorrent} \cite{Cohen03}, divide a file to be distributed into distinct pieces and enable peers (or clients) to share these pieces efficiently. BitTorrent, with its rarest first and choke algorithms \cite{Cohen03,legout2006rarest},  has been shown in practice to scale well with the number of participating peers \cite{YangDeVeciana04,QiuSrikant04,Menasche_etal10,legout2006rarest,Tewari,GkantsidisRodriguez05,karagiannis2004p2p}.


Understanding how a BitTorrent like P2P system works over a long period of time is difficult, due to the following details.
Each peer maintains a set of neighbors it can connect with. According to the choking algorithm,  a peer unchokes three
neighbors from which the peer has the fastest download rate, at the same time it also unchokes a randomly chosen neighbor which has pieces needed by the peer. The choking algorithm works as a distributed peer selection mechanism to continuously shape the topology of the network; it is influenced by heterogeneous link speeds and by the sets of pieces available at different peers.
Peers track the pieces available at their neighbors and the selection of pieces to be downloaded is biased towards
the rarest pieces first.
Consequently, analytical models capturing all aspects of BitTorrent in detail are intractable. Simulations have revealed
extensive insight about the scalability, robustness, and efficiency of P2P networks, but simulations alone can cover only
a small portion of the range of parameter values and network settings.   Analysis complements simulations by helping to
identify potential pitfalls and as a means to understand and avoid them.

The following stochastic model of P2P networks is examined in \cite{HajekZhu10_isit,HajekZhu10_full}.
A seed uploads at a constant rate $U_s$; peers arrive as a rate $\lambda$ Poisson process; the seed and peers apply
uniform random peer selection and diverse piece selection policies; each peer leaves as soon as it has all pieces.
It is shown in \cite{HajekZhu10_isit,HajekZhu10_full} that the stability region is governed
by the {\em missing piece syndrome}. The missing piece syndrome is an abnormal condition appearing when  there are many peers in the system and all of them are missing the same piece. Such a large group of peers missing the same piece
severely limits the spread of the piece in the network.  Peers without the missing piece quickly join the group and peers
with the missing piece quickly depart. The main result in \cite{HajekZhu10_isit,HajekZhu10_full} is that the network
may never recover from the missing piece syndrome if the upload rate of the seed is less than the arrival rate of new peers,
and the network is positive recurrent if the upload rate of the seed is smaller than the arrival rate of new peers.



This paper extends the basic results of \cite{HajekZhu10_isit,HajekZhu10_full} in two particular ways:  peers can already
have some pieces at the time of their arrival, and peers can dwell awhile in the network after obtaining a complete collection.
The main result in this paper, Theorem \ref{thm:main},
 provides the stability region of the network within the space of values of arrival rates, seed uploading capacity, and peer dwelling time. The proof of the main result is shaped by showing that the system either is trapped by
the missing piece syndrome, or that it always escapes the missing piece syndrome,  depending on the parameter values.
This paper reveals the least amount of time peers must dwell after obtaining the entire file so that the whole network is positive recurrent.   A corollary of our result is that if each peer can upload one additional piece after obtaining the whole file before
departing, the network is stable under any positive seed uploading capacity and any arrival rates. In BitTorrent, the size of a
single piece is typically a small fraction of the entire file (about 0.5\%) so that it is a light burden for a peer to dwell
in the network long enough to upload one more piece after obtaining a complete collection.
The proof techniques are similar to those used in \cite{HajekZhu10_isit,HajekZhu10_full}, but are
modified to handle the more general model here.
For the proof of the positive recurrence for other parameter values, a Lyapunov function is used as in  \cite{HajekZhu10_isit,HajekZhu10_full}, but it is no longer quadratic, and a variation of the standard
big "O" notation is introduced.  There are quadratic terms in the
Lyapunov function, but some related terms
are added to cover the case that sufficient downloading capacity has to build up as new arrivals bring
new pieces with them.  

Four extensions to Theorem 1 are also presented
in this paper.  The first extension is to  point out that Theorem 1 remains true for a wide variety of piece
selection policies, as long
as they select useful pieces when present, and the same uniform, random peer selection policy is used.
The second extension is to point out how Theorem 1 can be modified to incorporate network coding.
Such an extension was also given in \cite{HajekZhu10_full} for the less general model there, which
specified, in particular, that peers have no pieces when they arrive.  In that context it was shown in
\cite{HajekZhu10_full} that network coding does not increase the region of stability of the peer to peer
system.   In contrast, we find here that when peers arrive with some
(randomly coded) pieces, network coding substantially increases the region of stability.  The third extension
addresses variations of the model such that the time between two consecutive transfer attempts is reduced
if there is no useful piece to transfer.  The fourth
extension is to consider the borderline case, between the necessary and sufficient conditions of
Theorem 1. 

The organization of the paper is as follows.   Related work is presented in Section \ref{sec:related}.
The network model and Theorem 1, the main result of this paper, are described in Section \ref{sec:Model}.
Section \ref{sec:example} presents three examples that illustrate Theorem 1.  Section \ref{sec:outline_of_proof}
presents an outline of the proof of Theorem 1, while the detailed proof itself is given in Sections
\ref{sec:transience} and \ref{sec:positiverecurrence}, which prove the transience and positive recurrence
parts of Theorem 1, respectively.   The extensions to Theorem \ref{thm:main} are given in
Section \ref{sec:extensions}, and a brief conclusion is given in Section \ref{sec:conclusion}.
Miscellaneous results used in the main part of the paper are summarized in the appendix.

\section{Related Work}  \label{sec:related}

This section briefly points to work related to stability and the missing piece syndrome in BitTorrent like P2P networks with
models similar to the one here.   Like this paper, the paper of Massouli\'{e} and Vojnovic \cite{MassoulieVojnovic08} assumes that peers having various collections of pieces arrive according to Poisson processes, although there is no seed.   The analysis given in  \cite{MassoulieVojnovic08} is based on scaling the initial state and the arrival rates by a parameter that goes to infinity.
The asymptotic analysis gives rise to a fluid limit, described by a vector ordinary differential equation.
The existence of a symmetric equilibrium point of the fluid limit is established.
Like this paper, the paper of Leskel\"{a}, Robert, and Simatos  \cite{LeskelaRobertSimatos10} considers the case
of each peer dwelling awhile after it has obtained a complete collection.   The case in which a file is not divided at all,
and the case in which a file is divided into two pieces that must be collected by all peers in the same order, are considered, and
the required mean dwell time is identified for stabilizing the system.
Models in \cite{MassoulieVojnovic08,HajekZhu10_isit,HajekZhu10_full,LeskelaRobertSimatos10} are discussed as special cases of the model in this paper.

Two-piece P2P models under slightly different
assumptions are studied in  \cite{NorrosReittuEirola09}, and essentially the same stability condition as in \cite{HajekZhu10_isit,HajekZhu10_full} is obtained for the two-piece special case.
By modeling BitTorrent as multiple $M/G/\infty$ queues, the authors in \cite{Menasche_etal10} provide closed form steady state distributions and study the self-sustainability of their systems. In the simulation of \cite{Menasche_etal10}, the authors find their ``smooth download assumption" and ``swarm sustainability" break down if the seed upload capacity is small; this is evidence of the missing piece syndrome.

The BitTorrent choking algorithm has attracted considerable interest from researchers, due to its ability to encourage reciprocity and increase scalability.   Based on experiments for the case of flash crowds in BitTorrent, the authors in \cite{legout2006rarest}
concluded that the choke algorithm and rarest first piece selection together can foster reciprocation and guarantee close to ideal diversity of the pieces among peers. It is worth noting that the experiment in \cite{legout2006rarest} about transient states,  which appear because of the upload constraint of the seed,  gives evidence of the missing piece syndrome. In \cite{legout2007clustering}, the authors show  that the choking algorithm can facilitate the formation of clusters of similar-bandwidth peers. The authors measured the performance of BitTorrent protocols on a PlanetLab platform, and discovered that when the seed upload capacity is high, peers mainly upload
to other peers with roughly the same bandwidth.   But when the seed upload capacity is low, such clustering of peers
does not emerge.
In \cite{MenascheInfocom}, the authors compare direct reciprocity, where users exchange contents directly, and indirect reciprocity, where users upload contents based on credits of their targets. They show that an indirect reciprocity schedule can be replaced by a direct reciprocity schedule with a loss of efficiency at most a half if users can restore undemanded contents for bartering. They also provide simulations showing the benefits of having a public board which announces the content distribution and having a matchmaker which pairs users together by a maximum weight matching algorithm. 

Papers \cite{5061908,zhou2011stability,ioannidis2010distributed} concern concurrent delivery of multiple files in a P2P network. Peers can store files they do not request in order to increase reciprocation and efficiency of file distribution. Models about single-piece file sharing through mobile networks are studied in \cite{ioannidis2010distributed,zhou2011stability}.   In \cite{zhou2011stability} the authors suppose multiple single-piece files are to be downloaded by some of the peers, and peers store and exchange files they do not request. Assuming Poisson arrivals and random peer contact, the authors establish fluid limits for a broad family of file exchanging policies, and derive the stability region for a static-case policy (peers do not exchange files unless they can get their requested files). They further show that by mixing multiple swarms together the network scalability is increased in the sense that only one swarm can become unstable.  In \cite{5061908} the authors discuss multiple-channel live streaming and show how the performance increases if some peers can apply their spare capacity to distribute channels they are not watching. Papers \cite{tomozei2010flow,zhou2007simple} are also about live streaming by P2P networks. In \cite{zhou2007simple} the authors provide a simple queue model to compare rarest first and greedy piece selection policies in P2P live streaming, and propose a mixed selection policy to balance the trade-off between start-up latency and continuity.

Network coding can improve the network performance. Network coding was first proposed in \cite{AhlswedeCaiLiYeung},
where it is shown that a sender can communicate information to a set of receivers if  the min-cut max-flow bound is satisfied for connections to each receiver. Simulations with network coding applied in P2P file distribution described in \cite{GkantsidisRodriguez05}  show that in a P2P network under topologies with bad cuts, network coding can provide a much higher average file distribution rate than that provided without coding or with source coding only.  Better robustness also appears when network coding is simulated on a P2P network with dynamic arrivals and departures. In \cite{DebMedardChoute06} the authors study a gossip model under random linear coding, with each peer initially having a single unique piece, and all peers are to 
collect all pieces, and peers are assumed to apply random contact and transmit random linear combinations of the messages they
own to their targets. It is shown in \cite{DebMedardChoute06} that with network coding, the gossip can be completed in time proportional to the number of peers, with high probability.
The paper \cite{tomozei2010flow} focuses on the efficiency of network coding for P2P live streaming.   It shows
that when network coding is applied and a distributed, stochastic version of a primal-dual algorithm is used, then a
fluid scale limit admits a cost optimal operating point as a fixed point.
Network coding is considered in \cite{HajekZhu10_full} for the assumptions of that paper (peers arrive with no
pieces, there is a fixed seed, and peers depart after obtaining a complete collection).  In that context, while network
coding eliminates the need for peers to exchange lists of pieces, the condition for stability is nearly the
same as for random useful piece selection without network coding.

\section{Model and Result \label{sec:Model}} 
The model discussed in this paper is a combination of related models in \cite{MassoulieVojnovic08,QiuSrikant04,YangDeVeciana04}. 
In a single fixed seed P2P network, a large file is divided into $K$ pieces, for some $K\geq 1,$
which are stored in the fixed seed. The fixed seed is not considered to be a peer.
Each  peer in the system holds some subset of the pieces. For any subset $C$ of the total collection of pieces $\{1,2,...K\}$, a peer holding the collection of pieces $C$ is called a {\em type $C$ peer}. In some real P2P networks, peers can get some pieces from a tracker upon their arrival for initialization. To capture that case, we assume type $C$ peers arrive into the system at times of a Poisson process with rate $\lambda_C$. Although we consider all possible values of $(\lambda_C, C\in\mathcal{C})$, typically in practice, $\lambda_C$ is small or equal to zero when $|C|>1$. 

The fixed seed and all peers use the {\em random peer contact} and {\em random useful piece selection} strategies at instants of Poisson processes, with the contact-upload rate of the fixed seed denoted by $U_s$ and the contact-upload rate of any peer denoted by  $\mu, \mu>0$. Specifically, suppose the fixed seed and each peer maintain internal Poisson clocks; the clock of the fixed seed ticks at rate $U_s$, and the clock of any peer ticks as rate $\mu$. Whenever the clock of the fixed seed ticks, the fixed seed contacts a peer, say peer $A$, which is selected uniformly from among all peers. According to the random useful piece selection strategy, the fixed seed checks to see if $A$ needs any pieces, and uploads to $A$ the copy of one piece uniformly chosen from among the pieces needed by $A$. If $A$ does not need any pieces (because $A$ is a peer seed), no piece is uploaded and the  fixed seed remains silent between clock ticks.

A peer similarly uploads pieces. When its rate $\mu$ Poisson clock ticks, it contacts a peer selected at random, and checks to see whether it has pieces needed by the contacted peer. If the answer is yes,  it uploads to the contacted peer a copy of a piece uniformly chosen from among its pieces needed by the contacted peer; if the answer is no,  no piece is uploaded and the peer does not upload pieces between clock ticks. The peer contacts and piece uploads of the fixed seed and peers are assumed to be instantaneous. 

In a real P2P network, peers may upload two or more pieces to different peers at the same time, and peer selection, peer contact and piece upload are not instantaneous. For mathematical simplification, we consider a homogeneous network with the maximum number of upload links of each peer limited to one,   and apply the waiting times of Poisson clocks to model the total time consumed for peer selection, contact, and piece upload. So $1/\mu$ and $1/U_s$ are approximately the average piece transmission time from peer to peer and from the fixed seed to peer in a real P2P network.

Assume that each peer, after becoming a peer seed,
dwells in the system for an exponentially distributed length of time with mean $1/\gamma$, with $0<\gamma\leq \infty$. The case $\gamma=\infty$ is shorthand notation for the case that peers depart immediately after collecting all pieces.
Intuitively, smaller values of $\gamma$ yield better system performance, because peer seeds can upload more pieces if they stay in the system longer. Our result identifies the smallest mean peer seed dwelling time (i.e. largest $\gamma$) sufficient for a stable system. If the rate $U_s$ of the fixed seed is sufficiently large, or if the rates $\lambda_C$ are large enough for some nonempty $C$, the system can be stable even if peers do not become peer seeds (i.e. even if $\gamma=\infty$).
The arrivals of new peers, the peer seed dwell times, and the ticking of Poisson clocks, are mutually independent. The notation and assumptions of the model are summarized as follows:

\begin{itemize}
\item{$\cal C:$} Set of all subsets of $\mathcal{F} = \{1, \ldots , K\}$, where $K\geq 1$ is the number of pieces, and $\mathcal{F}$ is the collection of all pieces.
\item{Type $C$ peer:} A peer with set of pieces $C\in\mathcal{C}$ is a type $C$ peer, which  becomes a type $C \cup \{i\}$ peer if the seed or another peer uploads piece $i\not\in C$ to it. A type $\mathcal{F}$ peer is also called a peer seed.
\item{Type $C$ group:} The set of type $C$ peers in the system.
\item{Arrivals:} Exogenous arrivals of type $C$ peers form a rate $\lambda_C\in[0,\infty)$  Poisson process. To avoid triviality, assume the total arrival rate of peers --- $\lambda_{total} = \sum_{C:C\in\mathcal{C}} \lambda_C$ --- is strictly positive.
Also, without loss of generality, if $\gamma=\infty,$ assume $\lambda_{\mathcal{F}}=0.$
\item{Random peer contact:} The fixed seed contacts a uniformly chosen peer at instants of a Poisson process with rate $U_s\in[0,\infty)$. Every peer contacts a uniformly chosen peer at instants of a Poisson process with rate $\mu\in(0,\infty)$.
\item{Random useful piece upload:} When $A$ contacts $B$, if $B$ does not have all pieces that $A$ has, $A$ uploads to $B$ a copy of one piece uniformly chosen  from among the pieces $A$ has but $B$ does not have. Otherwise no piece is uploaded.
\item{Departures:} If $\gamma \in (0,\infty)$, every peer becomes a peer seed after obtaining all $K$ pieces, and subsequently remains in the system  for an exponentially distributed length of time with mean $1/\gamma$ before departing. If $\gamma=\infty$, then $\lambda_\mathcal{F} = 0$ and peers depart immediately after obtaining all $K$ pieces.
\end{itemize}

Under the assumptions above, the system is a Markov chain with state vector $\mathbf{x} = (x_C:C\in \mathcal{C})\in \mathbb{Z}_+^{|\mathcal{C}|}$ if $\gamma\in(0,\infty)$, and $\mathbf{x}=(x_C:C\in \mathcal{C}-\{\mf\})\in \mathbb{Z}_+^{|\mathcal{C}|-1}$ if $\gamma=\infty$, where $x_C$ is defined to be the number of type $C$ peers, except we define $x_C=0$ in the case $C=\mf$ and $\gamma=\infty$. Define $\Gamma_{C,C'}$ for $C,C'\in\mathcal{C}$ as follows:
\be \label{eq:defGamma}
\Gamma_{C,C'} : = 
\frac{x_C}{n}  \left(\frac{U_s}{K-|C|}+ \mu \sum_{S: i\in S\in\mathcal{C}}
\frac{x_S}{|S-C|}   \right)
\ee
if $n\geq 1$ and $C'=C \cup\{i\}\text{ for some } i\in\mf-C,$  and $\Gamma_{C,C'}=0$ else, 
where $n: = \sum_{C:C\in\mathcal{C}} x_C$ is the total number of peers. In words, unless $C'=\mathcal{F}$ and $\gamma=\infty$, $\Gamma_{C,C'}$ is the aggregate rate of transition of peers from type $C$ to type $C'$; If $C'=\mathcal{F}$ and $\gamma=\infty$, $\Gamma_{C,C'}$ is the aggregate rate of departures from the system of peers of type $C$.

Let $\mathbf{e}_C$ denote the vector with the same dimension as $\mathbf{x}$, with a one in position $C$ and other coordinates equal to zero.
The positive entries of the generator matrix $Q= (q(\mathbf{x},\mathbf{x'}) )$ are given by:
\begin{itemize}
\item if $\gamma\in(0,\infty)$, $\mathbf{x} = (x_C:C\in \mathcal{C})$,
\begin{eqnarray*}
q(\mathbf{x} , \mathbf{x}+\mathbf{e}_C ) & = &  \lambda_C   \\
q(\mathbf{x}, \mathbf{x}-\mathbf{e}_{\mathcal{F}})&=& \gamma x_{\mathcal{F}}\\
q( \mathbf{x},  \mathbf{x} - \mathbf{e}_C+\mathbf{e}_{C\cup \{i\}} ) &= &  \Gamma_{C,C\cup\{i\}},   \mbox{if} ~i\notin C.
\end{eqnarray*}
\item if $\gamma=\infty$, $\mathbf{x}=(x_C:C\in \mathcal{C}-\{\mf\})$,
$$
\begin{array}{rl}
q(\mathbf{x} , \mathbf{x}+\mathbf{e}_C ) & =  \lambda_C   \\
q( \mathbf{x},  \mathbf{x} - \mathbf{e}_C+\mathbf{e}_{C\cup \{i\}} ) &=  \Gamma_{C,C\cup\{i\}},   \mbox{if} \\
&~~~~~|C|\leq K-2,i\notin C.\\
q( \mathbf{x},  \mathbf{x} - \mathbf{e}_C) & =  \Gamma_{C,\mf},  ~\mbox{if} ~ |C|= K-1.
\end{array}
$$
\end{itemize}

The following theorem, which is the main result of this paper, describes the stability region of the P2P system.

\begin{theorem} \label{thm:main}
Let $U_s\in[0,\infty)$, $\mu\in(0,\infty)$, $\gamma\in(0,\infty]$, $\{\lambda_C: C\in\mathcal{C},\lambda_C\in[0,\infty)\}$ with $\lambda_\mf=0$ if $\gamma=\infty$, and $\lambda_{total}>0$ be given. \\
(a) The Markov process with generator matrix $Q$ is transient if either of the following two conditions is true:
\begin{itemize}
\item $0<\mu<\gamma\leq\infty$ and for some $k\in\mathcal{F}$,
\begin{eqnarray}
\lambda_{total}
>\frac{ U_s+\sum_{C: k\in C} \lambda_C(K+1-|C|) }{1-\frac{\mu}{\gamma}} \label{eq:qcd}
\end{eqnarray}
\item $0<\gamma\leq\mu$ and for some piece $k\in\mathcal{F}$, no copies of piece $k$ can enter the system.
\end{itemize}
(b) Conversely, the process is positive recurrent and $E[N]<\infty,$  where $N$ denotes a random variable with
the stationary distribution of number of peers in the system,  if either of the following two conditions is true:
\begin{itemize}
\item $0<\mu<\gamma\leq \infty$ and for any $k\in\mathcal{F}$,
\begin{eqnarray}
\lambda_{total} < \frac{U_s+\sum_{C: k\in C} \lambda_C(K+1-|C|) }{1-\frac{\mu}{\gamma}}.\label{eq:pcd}
\end{eqnarray}
\item $0<\gamma\leq\mu$ and for any $k\in\mathcal{F}$, it is possible for new copies of piece  $k$  to enter the system.
\end{itemize}
\end{theorem}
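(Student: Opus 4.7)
The plan is to prove Theorem~\ref{thm:main} in two halves—transience for part~(a), positive recurrence for part~(b)—and within each half to handle separately the regimes $0<\mu<\gamma\leq\infty$ and $0<\gamma\leq\mu$. Both halves revolve around the \emph{missing piece syndrome}: for a distinguished piece $k\in\mf$, one tracks the number $y_k$ of peers lacking $k$ and asks whether the rate at which new copies of $k$ are supplied to those peers (from the seed, from partially-complete arrivals already carrying $k$, and from peer seeds) matches the arrival rate of fresh peers missing $k$. The weights $K+1-|C|$ and the denominator $1-\mu/\gamma$ appearing in (\ref{eq:pcd})--(\ref{eq:qcd}) arise naturally from this accounting: each type-$C$ arrival with $k\in C$ contributes on average at most $(K+1-|C|)$ useful transmissions of $k$ while still incomplete, and then, as a peer seed, an expected geometric number of further uploads $\mu/\gamma+(\mu/\gamma)^2+\cdots$ before its departure clock rings, for a total multiplier of $(K+1-|C|)/(1-\mu/\gamma)$.

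For part~(a), the subcase $0<\gamma\leq\mu$ with some piece $k$ unable to enter is essentially immediate: no peer ever completes, no peer ever departs, and the population stochastically dominates a pure-birth process of rate $\lambda_{total}>0$. For the main subcase $0<\mu<\gamma\leq\infty$, I would fix $k$ achieving (\ref{eq:qcd}) and argue that once $y_k$ is large relative to $n-y_k$, the drift of $y_k$ is bounded below by the strictly positive gap implied by (\ref{eq:qcd}). Reachability of the syndrome region from any initial state, together with a standard transience criterion built from an unbounded function that is a supermartingale outside the syndrome (in the spirit of~\cite{HajekZhu10_full}), then concludes. Part~(b)'s subcase $0<\gamma\leq\mu$ with every piece enterable is likewise the easier one: the positive injection rate of each piece, combined with the finite mean sojourn times, allows a linear Foster--Lyapunov function $V(\mathbf{x})=\sum_C w_C x_C$ with weights $w_C$ increasing in the deficit $K-|C|$ to have strictly negative drift outside a finite set.

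The hardest case is part~(b) under $0<\mu<\gamma\leq\infty$ with the strict inequality (\ref{eq:pcd}) holding for every $k$. I would seek a Lyapunov function of the form
\begin{equation*}
V(\mathbf{x}) \;=\; \alpha\, n^2 \;+\; \sum_{k\in\mf}\beta_k\, g_k(\mathbf{x}) \;+\; R(\mathbf{x}),
\end{equation*}
where $n=\sum_C x_C$, $g_k$ is a quadratic penalty that activates when piece $k$ becomes scarce, and $R$ is the non-quadratic correction term alluded to in the introduction: $R$ absorbs the fact that a type-$C$ arrival with $|C|<K$ must first download its $(K-|C|)$ missing pieces before contributing its full quota of uploads of $k$, so some downloading capacity has to build up as new arrivals enter. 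I expect the main obstacle to be choosing $\alpha$, $\{\beta_k\}$, and $R$ so that $QV(\mathbf{x})\leq -\epsilon$ holds uniformly in the identity of the potentially rare piece, which will require the customized big-$O$ bookkeeping mentioned in the introduction (lower-order terms whose signs change with which piece is currently scarce). Once $QV\leq -\epsilon$ is secured outside a finite set, Foster's criterion delivers positive recurrence, and the quadratic growth of $V$ in $n$ yields $E[N]<\infty$.
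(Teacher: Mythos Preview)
Your overall strategy aligns with the paper's, but there are concrete gaps.

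The linear Lyapunov function $V(\mathbf{x})=\sum_C w_C x_C$ you propose for part~(b) under $0<\gamma\leq\mu$ cannot work. Take a state in which all $n$ peers have type $S=\mf-\{k\}$ and $U_s=0$ (permitted by the hypothesis, which only requires $\lambda_C>0$ for some $C\ni k$). At this state there are no peer-to-peer transfers and no departures, so the drift of any linear $V$ with positive weights is $\sum_C\lambda_C w_C>0$; and even where a linear function did have negative drift, it would yield at best $QV\leq-\epsilon$, not the $QV\leq-\xi n$ needed for $E[N]<\infty$. The paper uses essentially the same quadratic-plus-correction form in \emph{both} regimes of $\gamma$, namely $W'=\sum_{C}r^{|C|}\bigl(\tfrac12 E_C^2+pE_C\phi(H_C')\bigr)$, where $E_C=\sum_{C'\subseteq C}x_{C'}$ and $H_C'$ is a weighted count of helper peers. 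The product term $E_C\phi(H_C')$ is the crux: when $E_S$ is large but $H_S'=0$, the quadratic $\tfrac12 E_S^2$ has positive drift, and it is $E_S\phi(H_S')$---with negative drift because arriving helpers increase $H_S'$ while $\phi'\leq 0$---that rescues the combination. This correction is not a lower-order perturbation; it is of the same order as the quadratic term precisely in the states that matter.

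Your transience argument under $0<\mu<\gamma$ also underestimates the difficulty. The instantaneous drift of $y_k$ is \emph{not} bounded below by the gap in~\eqref{eq:qcd}: a young peer that receives piece $k$ (an ``infection'') can upload $k$ to many one-club peers before departing, and those peers, now peer seeds, spawn further uploads in a branching cascade, so $n-y_k$ can become large and the drift of $y_k$ arbitrarily negative at some instants. The paper therefore abandons drift and instead bounds the \emph{cumulative} process $D_t$ of piece-$k$ downloads by coupling the original system to an explicit two-type (infected versus former-one-club) branching process whose offspring means are computed in closed form; Kingman's moment bound then shows that $D_t$ stays below $A_t$ (arrivals lacking $k$) for all $t$ with positive probability, from a suitably extreme initial state. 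Finally, for the main positive-recurrence case your Lyapunov sketch indexes the penalty terms by pieces $k\in\mf$, whereas the paper indexes by all subsets $C\in\mathcal{C}$ with geometric weights $r^{|C|}$ and must control the term $T_S$ for whichever (possibly low-cardinality) type $S$ dominates the population; whether your coarser indexing can be made to work is at best unclear and would need separate justification.
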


We remark that when we say new copies of piece $k$ can enter the system, we mean $U_s>0$ or $\lambda_C>0$ for some $C\in\mathcal{C}$ such that $k\in C$. And we remark that condition \eqref{eq:pcd} holding for all $k\in\mf$ is equivalent to the following:
for any $S\in\mathcal{C}-\{\mf\}$,
\begin{eqnarray}
\lefteqn{\triangle_S:=\nonumber}\\&& \sum_{C:C\subseteq S}\lambda_C\nonumber \\
&& -\frac{U_s+\sum_{C:C\not\subseteq S}\lambda_C\left(K-|C|+\frac{\mu}{\gamma}\right)}{1-\frac{\mu}{\gamma}} <0. \label{eq:repcd}
\end{eqnarray}
In particular, \eqref{eq:repcd} holds for all $S\in\mathcal{C}-\{\mf\}$ if it holds for all $S \in \{\mf-\{k\}:k\in\mf\}$.

\newpage
\section{Three Examples \label{sec:example}}

To illustrate Theorem \ref{thm:main}, we examine the three examples of P2P networks shown
in Figure \ref{fig:example}.

\begin{figure}[bth]
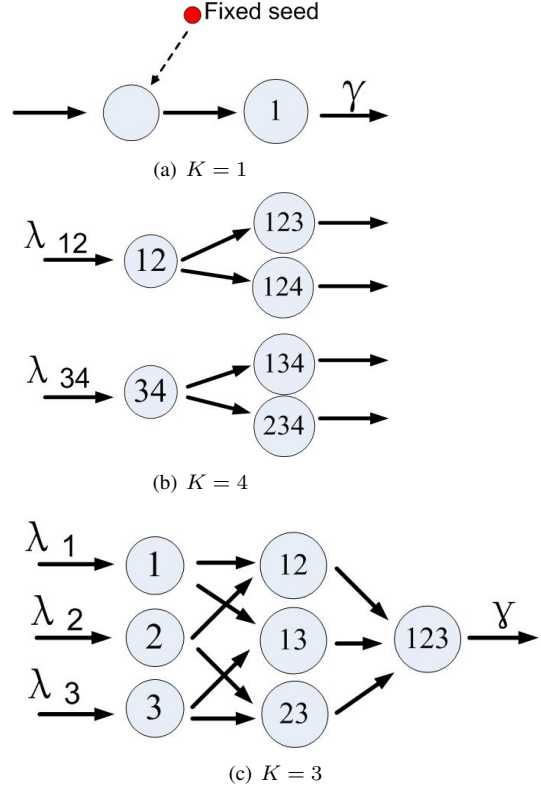

\subfigure[$K=1$]{
\post{K1}{5} \label{fig:K1}
}
\subfigure[$K=4$]{
\post{K1234}{5} \label{fig:K1234}}
\subfigure[$K=3$]{
\post{K123}{7} \label{fig:K123}
}

\caption{Examples} \label{fig:example}
\end{figure}

{\bf Example $\bf 1$:} This example is treated in \cite{LeskelaRobertSimatos10}. As shown in Figure \ref{fig:K1},  the file is transferred as a single piece, that is, $K=1$. New peers without any piece arrive into the system at the  times of a Poisson process with rate $\lambda_0$. After obtaining the piece a peer becomes a peer seed. At rate $U_s$, the fixed seed contacts and uploads the piece to new peers, which become  peer seeds after obtaining the piece. When peer seeds are in the system, they randomly contact and upload copies of the piece to new peers with rate $\mu$, which creates more peer seeds. After staying for an exponentially distributed time period with mean $1/\gamma$, a peer seed leaves the system. This example illustrates  our model with parameters  $K=1$, $U_s,\mu,\gamma,\lambda_\emptyset = \lambda_0\in(0,\infty)$, and $ \lambda_{\{1\}} = 0$.

The stability of a system is determined by its ability to recover from a heavy load. First consider the case that there are many peer seeds in the system. Because every peer seed departs at rate $\gamma$, in essence, the service rate $\gamma x_{\mathcal{F}}$ scales linearly with the number of peer seeds, $x_{\mathcal{F}}$, as in an infinite server system, so the system can recover no matter how
 many peer seeds there are. Secondly consider the case that  there are many type $\emptyset$ peers and few peer seeds. For a long time period, when the fixed seed or a peer seed randomly contacts a peer to upload a piece, the probability they contact a type $\emptyset$ peer is close to one. So the group of type $\emptyset$ peers receives uploads from the fixed seed  at rate almost $U_s$. Once a peer becomes a peer seed, it can upload more pieces to type $\emptyset$ peers, creating more peer seeds, which upload more pieces.
So every peer seed can create a branching process of departures from the type $\emptyset$ group. 
The mean amount of time a peer seed stays in the system is $1/\gamma$, and during its stay it uploads pieces to type $\emptyset$ peers at rate close to $\mu$. So  on average, a peer seed can upload to $\mu/\gamma$ type $\emptyset$ peers. By the theory of branching process, if $\mu/\gamma\geq 1$, the expected number of descendants of a peer seed is infinite, which stabilizes the process. If $\mu/\gamma < 1$, on average every peer seed has $\mu/\gamma \over 1-\mu/\gamma$ descendants. Hence, every upload of the piece by the fixed seed to a  type $\emptyset$ peer causes, on average, about $1\over 1-\mu/\gamma$ departures from the type $\emptyset$ group. Comparing to $\lambda_0$, the arrival rate of type $\emptyset$ peers, this suggests that the system is stable if either $\mu\geq\gamma$, or $\mu<\gamma$ and $\lambda_0<U_s{1\over 1-\mu/\gamma}$. Conversely, if $\mu>\gamma$ and $\lambda_0>U_s{1\over 1-\mu/\gamma}$, the arrival rate of type $\emptyset$ peers is larger than the average rate of departures from the type $\emptyset$ group, indicating that the system cannot always recover from the heavy load of type $\emptyset$ group and so it is unstable. This conclusion is  confirmed by \cite{LeskelaRobertSimatos10} and Theorem \ref{thm:main}. 

{\bf Example $\bf 2$:} As shown in Figure \ref{fig:K1234}, the file is divided into four pieces, that is, $K=4$. There are two types of new peers, type $\{1,2\}$ and type $\{3,4\}$, which arrive as two independent Poisson processes with respective rates $\lambda_{12}$ and $\lambda_{34}$. There is no fixed seed in the system. Peers contact and upload pieces to each other so that they can depart. Peers depart immediately after obtaining all four pieces; there are no peer seeds in the system. This example illustrates our model with parameters  $K=4$, $U_s=0$, $\gamma=\infty$, $\mu,\lambda_{\{1,2\}} = \lambda_{12}, \lambda_{\{3,4\}} = \lambda_{34}\in(0,\infty)$, $\lambda_C=0$ for $C\neq \{1,2\},\{3,4\}$.

Consider the ability of the system to recover from a heavy load. First, consider the network starting from a state such that all peers are type $\{1,2,4\}$ and there are so many type $\{1,2,4\}$ peers that the fraction of them among all peers is close to one for a long time. On one hand, most new  type $\{1,2\}$ peers download piece $4$ from a type $\{1,2,4\}$ peer and join the type $\{1,2,4\}$ group, so the arrival rate of type $\{1,2,4\}$ peers is close to $\lambda_{12}$. On the other hand, most new  type $\{3,4\}$ peers download pieces $1$ and $2$ from type $\{1,2,4\}$ peers and then depart, with an expected lifetime in the system approximately ${2\over\mu}$. During its lifetime, a type $\{3,4\}$ peer uploads piece $3$ to two type $\{1,2,4\}$ peers on average and thereby
induces two departures on average. So the medium term aggregate departure rate of type $\{1,2,4\}$ peers is close to $2\lambda_{34}$. Hence, if $\lambda_{12}<2\lambda_{34}$, the system is able to recover from a heavy load of type $\{1,2,4\}$ (or $\{1,2,3\}$) peers. Conversely, if the inequality goes the other way, that is, $\lambda_{12}>2\lambda_{34}$, the arrival rate of type $\{1,2,4\}$ peers is larger than the aggregate departure rate of type $\{1,2,4\}$ peers. So the type $\{1,2,4\}$ group will keep growing. Thus if $\lambda_{12}>2\lambda_{34}$ the system cannot always recover from a heavy load of type $\{1,2,4\}$ (or $\{1,2,3\}$) peers. Similarly, if $\lambda_{34}<2\lambda_{12}$ the system can recover from a heavy load of type $\{2,3,4\}$ (or $\{1,3,4\}$) peers. And the system cannot always recover from the same heavy load if $\lambda_{34}>2\lambda_{12}$.

The situation is similar if there is a heavy load of type $\{1,2\}$ (or $\{3,4\}$) peers, while the other groups are empty. The arrival rate of type $\{1,2\}$ peers is $\lambda_{12}$. The aggregate departure rate of type $\{1,2\}$ peers, from the uploads of both type $\{3,4\}$ peers and
type $\{1,2,x\},x=3,4$ peers (which are former type $\{1,2\}$ peers), is larger than $2\lambda_{34}$. So if $\lambda_{12}<2\lambda_{34}$ the system is able to recover from the heavy load of type $\{1,2\}$ peers.

Secondly, consider the case that there are heavy loads in groups of at least two types, e.g. type $\{1,2\}$ and $\{1,2,3\}$. There is at least one type of peer that can upload to the other type of peer, e.g. type $\{1,2,3\}$ peers can upload to type $\{1,2\}$ peers. There are many uploads from type $\{1,2,3\}$ peers to type $\{1,2\}$ peers so that the departure rate from the type $\{1,2\}$ group is large, which stabilizes the system. This suggests that the system is stable if  $\lambda_{12}<2\lambda_{34}$ and $\lambda_{34}<2\lambda_{12}$, and unstable if either $\lambda_{12}>2\lambda_{34}$ or $\lambda_{34}>2\lambda_{12}$.  This conclusion is confirmed by Theorem \ref{thm:main}. 

{\bf Example $\bf 3$:} As shown in Figure \ref{fig:K123}, the file is divided into three pieces, that is, $K=3$. New peers arrive at a total rate $\lambda_{total}$, and each peer arrives with one piece, having piece $i$ with probability $\lambda_i/\lambda_{total}$. So there are three types of new peers, type $\{1\}$, type $\{2\}$, and type $\{3\}$, which arrive as three independent Poisson processes with rates $\lambda_1$, $\lambda_2$ and $\lambda_3$, respectively. There is no fixed seed in the system. At rate $\mu$ each, peers randomly contact and upload pieces to each other. After collecting all three pieces, every peer stays in the system as a peer seed for an exponentially distributed time with mean $1/\gamma, \gamma>\mu$. This example illustrates our model with parameters 
$K=3$, $U_s=0$, $0<\mu<\gamma\leq \infty$, $\lambda_{\{1\}}=\lambda_1, \lambda_{\{2\}} = \lambda_2, \lambda_{\{3\}} = \lambda_3\in(0,\infty)$, $\lambda_C = 0$ for $|C|\neq1$.

Consider whether the system can recover from a heavy load. First, consider the network starting from a state such that all peers are type $\{1,2\}$ and there are so many type $\{1,2\}$ peers that the fraction of them among all peers is close to one for a long time. By the reasoning of example two,  almost every new  type $\{1\}$ and type $\{2\}$ peer joins the type $\{1,2\}$ group, so the arrival rate of the type $\{1,2\}$ group is close to $\lambda_1+\lambda_2$. Over the medium term, every new  type $\{3\}$ peer has an expected lifetime approximately ${2\over\mu} +{1\over\gamma}$, with ${2\over\mu}$ being the expected time for the type $\{3\}$ peer to download two pieces from type $\{1,2\}$ peers, and with ${1\over\gamma}$ being the expected time for the type $\{3\}$ peer to be a peer seed. During its lifetime every type $\{3\}$ peer uploads approximately $2+{\mu\over\gamma}$ pieces to type $\{1,2\}$ peers on average. By the reasoning of example one, every peer seed creates a branching process of departures of type $\{1,2\}$ peers, with the total number of new peer seeds (including the root) equal to $1\over 1-\mu/\gamma$. Thus, on average, every new  type $\{3\}$ peer induces $(2+{\mu\over\gamma}){1\over 1-\mu/\gamma}$ departures from type $\{1,2\}$ group, so the medium term aggregate departure rate of type $\{1,2\}$ peers is approximately $\lambda_3(2+{\mu\over\gamma}){1\over 1-\mu/\gamma}$. Hence if $\lambda_1+\lambda_2<\lambda_3(2+{\mu\over\gamma}){1\over 1-\mu/\gamma}$, the system is able to recover from a heavy load of type $\{1,2\}$ group. Conversely, if $\lambda_1+\lambda_2>\lambda_3(2+{\mu\over\gamma}){1\over 1-\mu/\gamma}$, type $\{1,2\}$ group will keep increasing and the system cannot always recover from the heavy load. Similarly, if $\lambda_2+\lambda_3<\lambda_1(2+{\mu\over\gamma}){1\over 1-\mu/\gamma}$, or $\lambda_1+\lambda_3<\lambda_2(2+{\mu\over\gamma}){1\over 1-\mu/\gamma}$, the system is able to recover from a heavy load of type $\{2,3\}$, or $\{1,3\}$ group. And if either of the two inequalities is reversed, the system cannot always recover from a corresponding heavy load. 

Secondly, through considerations similar to those in example one and two, we can see that the conditions of heavy load in other single-type group or heavy load in multiple-type groups can also be recovered from if the three inequalities above hold. This suggests that the system is stable if
\begin{eqnarray*}
\begin{cases} \lambda_1+\lambda_2<\lambda_3(2+{\mu\over\gamma}){1\over 1-\mu/\gamma}\\ \lambda_2+\lambda_3<\lambda_1(2+{\mu\over\gamma}){1\over 1-\mu/\gamma}\\ \lambda_1+\lambda_3<\lambda_2(2+{\mu\over\gamma}){1\over 1-\mu/\gamma}
\end{cases}.
\end{eqnarray*} 
If any one of the three inequalities is reversed, it indicates the system is unstable. This is  consistent with Theorem \ref{thm:main}. Note that if peers depart immediately after obtaining a complete collection (i.e. $\gamma=\infty$), then the stability condition becomes
\begin{eqnarray*}
\begin{cases} \lambda_1+\lambda_2<2\lambda_3\\ \lambda_2+\lambda_3<2\lambda_1\\ \lambda_1+\lambda_3<2\lambda_2
\end{cases}.
\end{eqnarray*} 
If $\lambda_1,\lambda_2,\lambda_3$ are not all equal, at least one equality is reversed, so the system is unstable. This special case when $\gamma=\infty$ is considered in \cite{MassoulieVojnovic08}, and is discussed in Section \ref{sec:borderline} below.

\section{Outline of the Proof}  \label{sec:outline_of_proof}
The analysis of the above three examples suggests that when we consider the system to be in heavy load, the worst distribution of
load is that nearly all peers have the same type $C$ with $|C|=K-1$. If the system is able to recover from that kind of heavy load, it can recover from other kinds of heavy load. With this intuition in mind, a sketch of the proof of Theorem \ref{thm:main} is offered as follows.

First, we sketch the proof of Theorem \ref{thm:main}(a) about transience when $0<\mu<\gamma<\infty$. Without loss of generality, assume that \eqref{eq:qcd} is true for $k=1$, or equivalently, $\triangle_{\mf-\{1\}}>0$.

\begin{figure}
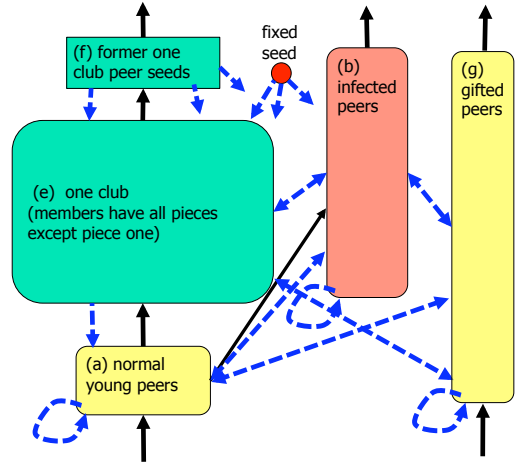

\post{general_one_club}{10}
\caption{Flow of peers (solid lines) and pieces (dashed lines) in the system.   \label{fig:generation}}
\end{figure}

Consider the following partition of peers into five groups, as shown in Figure \ref{fig:generation}.
\begin{itemize}
\item{\em Normal young peer}: A normal young peer is a peer that is missing at least two pieces, one of them
being piece one.
\item{\em Infected peer}: An infected peer is a peer that obtained piece one after arriving, but before obtaining all the other pieces. Once a peer is infected, it remains infected until it leaves the system; it is considered to be infected even when it is a peer seed.
\item{\em Gifted peer}: A gifted peer is a peer that arrived with piece one. A gifted peer is gifted for its entire time in the system; it is considered to be gifted even when it is a peer seed.
\item{\em One-club peer}: A one-club peer is a peer that has all pieces except piece one. That is, the one-club is the group of peers of type $\{2,3,...K\}$.
\item{\em Former one-club peer}: A former one-club peer is a peer in the system that is not a one-club peer but at some earlier time was a one-club peer. Note that a former one-club peer is a peer seed. The converse is not true, because infected peers and gifted peers can be peer seeds.
\end{itemize}
Consider the system starting from an initial state in which there are many peers in the system, and all of them are one-club peers. The system evolves as shown in Figure \ref{fig:generation}. Piece one can arrive into the system from outside the system in two ways: uploads by the fixed seed or arrivals of gifted peers. Ignore for a second the effect of normal young peers getting piece one (and becoming infected). Most of the uploads by the fixed seed are uploads of piece one to one-club peers. One such upload creates a new peer seed, which on average will upload piece one to about $\mu/\gamma$ more one-club peers, and each of those will upload piece one to about $\mu/\gamma$ more one-club peers, and so forth, in a branching process. Each upload of a piece by the fixed seed thus ultimately causes, on average, about $1\over 1-\mu/\gamma$ departures from the one-club. Each gifted peer, with type $C$ on arrival, for some $C$ with $1\in C$, will directly upload to, on average, about $K-|C|+\mu/\gamma$ one-club peers, and those will become peer seeds which  also could upload to about $\mu/\gamma$ more one-club peers, and so fourth, so that the total expected number of one-club departures caused by the type $C$ gifted peer is $(K-|C|+\mu/\gamma){1\over 1-\mu/\gamma}$. Summing these quantities and subtracting them from the arrival rate of peers without piece one gives $\triangle_{\mathcal{F}-\{1\}}$. So $\triangle_{\mathcal{F}-\{1\}}>0$ indicates that the arrival rates of peers missing piece one is larger than the upload rate of piece one, causing the one-club size to grow linearly with time.

The above analysis neglects the possibility that normal young peers can also receive piece one, creating infected peers. An infected peer can upload to one club peers, creating former one-club peers, and to normal young peers, creating more infected peers. This results in a branching process comprised of infected peers and former one-club peers. By the theory of branching process, the expected number of infected offspring of a former one-club peer or an infected peer will converge to zero, as the fraction of one-club peers converges to one. Hence, when the one-club is large enough, the existence of infected peers does not appreciably affect the growth of the one-club. The detailed proof of transience is offered in Section \ref{sec:transience}.

Second, we sketch the proof of Theorem \ref{thm:main}(b) about positive recurrence for the case $0<\mu<\gamma< \infty$ under the assumption that $\eqref{eq:repcd}$ is valid for all $S\in\mathcal{C}-\{\mf\}$. The above discussion suggests that when $\triangle_{\mathcal{F}-\{1\}}<0$, the departure rate of the one-club is larger than the arrival rate of peers missing piece one, therefore, the system has the ability to recover from a single heavy load in the one-club. Moreover, when $k=2,3,...K$ and there is a single heavy load in the type $\mathcal{F}-\{k\}$ group, similar reasoning suggests that the system can recover
if $\triangle_{\mathcal{F}-\{k\}}<0$. 
To get a better idea of the proof, here we  consider other distributions of heavy load. 
\begin{itemize}
\item Suppose there is a single heavy load in some type $S$ group with $|S|\leq K-2$. Uploads from the fixed seed (with rate $U_s$) and from new  peers holding pieces not in $S$ (with rate $\sum_{C:C\not\subseteq S}\lambda_C$) keep creating departures from the type $S$ group. If we ignore the period of time from when a peer departs from the type $S$ group until the same peer becomes a peer seed, we see that the average remaining lifetime of every peer which departs from the type $S$ group is greater than or equal to
$1\over\gamma$. In this lifetime the peer uploads on average approximately $\mu/\gamma$ pieces to type $S$ peers, which creates more departures from the type $S$ group.  Including the root, every departure from the type $S$ group can ultimately cause at least $1\over 1-\mu/\gamma$ departures from the type $S$ group, on average. Because every new  type $C$ peer with $C\not\subseteq S$ eventually uploads on average $K-|C|+\mu/\gamma$ pieces to type $S$ peers, the departure rate of type $S$ group is larger than $\left[U_s+\sum_{C:C\not\subseteq S}\lambda_C(K-|C|+\mu/\gamma)\right]{1\over 1-\mu/\gamma}$. Because  peers mainly download pieces from type $S$ peers, almost all new type $C$ peers with $C\subseteq S$ ultimately join the type $S$ group. So the near term arrival rate of type $S$ group is less than but close to $\sum_{C:C\subseteq S}\lambda_C$, which is smaller than the aggregate departure rate of type $S$ peers by \eqref{eq:repcd}. So the system can recover from the heavy load. 
\item Suppose there is a single heavy load in the type $\mathcal{F}$ group, that is, the group of peer seeds. The departure rate of peer seeds, $\gamma x_{\mathcal{F}}$, scales linearly with the number of peer seeds, $x_{\mathcal{F}}$, as in an infinite server
queueing system. So the system can recover however large the group of peer seeds is.
\item Suppose there are heavy loads in at least two groups of different types, say types $C_1$ and $C_2$. In this condition, either $C_1\not\subsetneq C_2$ or $C_2\not\subsetneq C_1$ is true, so peers in at least one of the groups, say $C_1$, can upload pieces to peers in the other group, say $C_2$. The  rate of peers departing from the type $C_2$ group is quite high, due to the large rate of uploads from type $C_1$ peers, so the system can quickly escape from that region of the state space. 
\end{itemize}

The above paragraphs summarize how the system can recover from all distributions of heavy load. To provide a proof of stability it must also be shown that the load cannot spiral up without bound through some oscillatory behavior. For that we use a Lyapunov function and apply the Foster-Lyapunov stability criterion. The detailed proof is offered in Section \ref{sec:positiverecurrence}.

\section{Proof of Transience in Theorem \ref{thm:main}\label{sec:transience}}

In the following the detailed proof of Theorem \ref{thm:main}(a) is given. It is obvious the system is transient if no copies of piece $k$ can enter the system.  Without loss of generality, assume $0<\mu<\gamma\leq\infty$ and assume $\triangle_{\mathcal{F}-\{1\}}>0$. For a given time $t\geq 0,$ 
define the following random variables, using the terminology of Section \ref{sec:outline_of_proof} and Figure \ref{fig:generation}:
\begin{itemize}
\item $Y_t^a:$ number of normal young peers (group (a)) at time $t$.
\item $Y_t^b:$ number of infected peers (group (b)) at time $t$.
\item $Y_t^g:$ number of gifted peers (group (g)) at time $t$.
\item $Y_t^e:$ number of one-club peers (group(e)) at time $t$.
\item $Y_t^f:$ number of former one-club peers (group (f)) at time $t$.
\item $A_t:$ cumulative number of arrivals, up to time $t$, of peers without piece one at time of arrival
\item $D_t:$ cumulative number of downloads of piece one, up to time $t.$ (Peers arriving with piece one are not counted.)
\item $N_t:$ number of peers at time $t$.
\end{itemize}

The system is modeled by an irreducible, countable-state Markov chain. A property of such random processes is
that either all states are transient, or no state is transient.  Therefore, to prove Theorem \ref{thm:main}(a),
it is sufficient to prove that some particular state is transient.    With that in mind, we assume that the initial state is the one with
$N_o$ peers, and all of them are one-club peers, where $N_o$ is a large constant specified below.
Given a small number $\xi$ with $0<\xi < 1,$  let $\tau$ be the extended
stopping time defined by  $\tau=\min\{t \geq 0 : Y^e_t+Y^f_t   \leq   (1-\xi) N_t  \},$ with the usual convention that $\tau=\infty$ if  
 $Y^e_t+Y^f_t  >  (1- \xi) N_t$ for all $t.$
 It suffices to prove that
\begin{equation}  \label{eq:target}
P\{ \tau = \infty ~\mbox{and} \lim_{t\rightarrow\infty} N_t=+\infty \} > 0.
\end{equation}

The probability of the event in \eqref{eq:target} depends on only the out-going transition rates for states
such that $Y^e +Y^f  >  (1-\xi) N .$    Thus, we can and do
prove \eqref{eq:target} instead for an alternative system, that has the same initial state, and the same
out-going transition rates for all states such that  $Y^e +Y^f  >   (1-\xi)   N,$   as the original system.   
The alternative system, however, guarantees an upper bound on the aggregate rate of downloads of piece
one by peers in group (a), and  a lower bound on the rate of downloads by the set of peers
in groups (a), (e) and (f).   This can be done so that the alternative system has the  following six
properties, the first four of which hold for the original system, and the last two of which hold for the original
system on the states with $Y^e +Y^f  >   (1-\xi)   N:$
\begin{enumerate}
\item A peer with a complete collection departs according to an exponentially distributed random variable
with parameter $\gamma.$
\item Each peer in group  (b), (g), or (f) uploads to the set of peers in group (e) with rate at most $\mu.$
\item The fixed seed  uploads to the set of peers in group (e) with rate at most $U_s.$
\item  Peers in group (a) can download piece one only from peers in groups (b), (g), or (f), or from the fixed seed.
\item  Whenever the internal Poisson clock of a peer in group (b), (g), or (f), or the fixed seed, ticks, the 
probability the tick results in contacting a peer in group (a) to upload to is less than or equal to $\xi.$
\item A peer in group (a), (b), or (g) that is not yet a seed peer receives usable download opportunities at rate
greater than or equal to $(1-\xi)\mu.$  (If $Y^e +Y^f   > (1- \xi )N,$  these opportunities can be provided
by the peers in groups (e) and (f). )
\end{enumerate}
The alternative system can be defined by supposing that on the
states with $Y^e +Y^f  \leq    (1-\xi)   N:$  (i)  the opportunities for peers in groups (b), (g) or (f), or the fixed seed, to
download to peers in group (a) are discarded with
some state-dependent positive probability, and (ii)  there is a phantom seed, having all pieces except piece one, that
uploads pieces to peers in groups (a), (b), or (g) as necessary for property 6) above to hold.  
{\em For the remainder of this proof we consider the alternative
system, but for brevity of notation, use the same notation for it as for the original system, and refer to it as the original system.}


Only peers in groups (a) and (e) download piece one;
peers in the other three groups already have  piece one.  
A peer in group (a) downloading piece one immediately moves to group (b), and
a peer in group (e) downloading piece one immediately moves to group (f).
Thus, a download of piece one creates either a group (b) peer or
a group (f) peer.  A group (b) peer or group (f) peer stays in the same group until
it leaves the system.  While a peer in group (b) or (f) is in the system
it can generate more peers in groups (b) and (f) by uploading piece one, and those
peers are considered to be offspring spawned by the peer.      Since offspring can themselves
spawn offspring,  there is a branching process, and a group (b) or group (f) peer
has a set of descendants.


We shall consider the evolution of a portion of the system under some statistical assumptions
that are different from those in the original system.  We refer to it as the
{\em autonomous branching system} (ABS) because strong independence
assumptions are imposed.    The ABS
pertains only to those peers that have piece one.   It is shown below that the original
system can be stochastically coupled to the ABS so that uploads of
piece one happen in the original system only when they also happen in the ABS. 
We begin by considering only group  (b) and group (f) peers.  
In the original system,  a group (b) peer was formerly a group (a) peer, and a group (f) peer
was formerly a group (e) peer; such previous history is irrelevant for the system under the
ABS; the description below concerns such a peer only from the
time it becomes a group (b) or group (f) peer.   The statistical assumptions for the ABS
 involving these peers are as follows:
\begin{itemize}
\item A group (b) peer is required to download $K-1$ pieces;
usable opportunities for such downloads arrive according to a Poisson
process of rate $\mu(1-\xi).$   (The interpretation is that, when a group (b) peer appears,
any piece it might have had besides piece one is ignored or discarded.) 
After the $K-1$ downloads, the group (b) peer remains in the system as a seed peer
for a {\em seed dwell duration} that is exponentially distributed with parameter $\gamma.$
\item A group (f) peer remains in the  system for a seed dwell duration that is
exponentially distributed with parameter $\gamma.$
\item A group (b) peer or group (f) peer spawns group (b) peers according to a
Poisson process of rate $\xi\mu$ and  it spawns group (f) peers according to a
Poisson process of rate $\mu.$
\item The Poisson processes for spawning offspring, as well as the seed dwell durations,
are mutually independent.
\end{itemize}
The above assumptions uniquely determine the distribution of the number
of offspring, and therefore the total number of descendants, of a group (b)
or group (f) peer.   On average, a group (b) peer is in the system (as a group (b) peer)
for $\frac{K-1}{(1-\xi)\mu} +  \frac{1}{\gamma}$ time units, and thus on
average a group (b) peer spawns 
$\xi (\frac{K-1}{1-\xi} +  \frac{\mu}{\gamma})$
 offspring of type
(b) and $\frac{K-1}{1-\xi} +  \frac{\mu}{\gamma}$ offspring in group (f).   Similarly, on average, 
a peer in group (f) spawns  $ \frac{\xi\mu}{\gamma}$  offspring of type
(b) and $ \frac{\mu}{\gamma}$ offspring in group (f).  
Let $m_b$ denote one plus the mean number of descendants of a group (b) peer and
let $m_f$ denote one plus the mean number of descendants of a group (f) peer, in the ABS.
Then by the theory of branching processes, 
${m_b \choose m_f}$ is the minimum nonnegative solution to the equations
$$
{m_b \choose m_f} = {1 \choose 1} + \left( \begin{array}{cc}
  \xi\left(\frac{K-1}{1-\xi} +\frac{\mu}{\gamma}\right)  & \frac{K-1}{1-\xi} +\frac{\mu}{\gamma} \\
\frac{\xi\mu}{\gamma}  & \frac{\mu}{\gamma}  \end{array} \right) {m_b \choose m_f}.
$$
The two-by-two matrix involved here has rank one, and the solution is easily found to be finite
if
\begin{equation}  \label{eq.stability.xi}
\xi\left(\frac{K-1}{1-\xi} +\frac{\mu}{\gamma}\right)  + \frac{\mu}{\gamma}  < 1.
\end{equation}
If \eqref{eq.stability.xi} holds,
$$
{m_b \choose m_f} = {1 \choose 1} + \frac{1+\xi}{1-\xi(\frac{K-1}{1-\xi} +\frac{\mu}{\gamma})-\frac{\mu}{\gamma}}{\frac{K-1}{1-\xi} +\frac{\mu}{\gamma}  \choose \frac{\mu}{\gamma}},
$$
and, in addition, the second moment of the number of descendants of a peer of either group (b) or (f) is finite and
monotonically increasing in $\xi.$
Note that
$$
{m_b \choose m_f}\stackrel{\xi\rightarrow 0}{\rightarrow}
\left( \begin{array}{c}   \frac{K}{1-\frac{\mu}{\gamma}}   \\   \frac{1}{1-\frac{\mu}{\gamma}}    \end{array}\right).
$$

Next, we extend the scope of the ABS
to include a gifted peer; this entails the following statistical assumptions:
\begin{itemize}
\item A gifted peer with piece collection $C$ upon arrival is required to download $K-|C|$
pieces; usable opportunities for such downloads arrive according to a Poisson process
of rate $\mu(1-\xi).$    After the $K-|C|$ downloads, the group (b) peer remains in the
system as a seed peer for a seed dwell duration that is exponentially distributed with
parameter $\gamma.$

\item While a gifted peer is in the system, it spawns group (b) peers according
to a Poisson process of rate $\xi\mu$ and it spawns group (f) peers according to a
Poisson process of rate $\mu.$
\item The Poisson processes for spawning offspring, as well as the seed dwell duration,
are mutually independent.
\end{itemize}
The mean time a gifted peer with initial piece collection $C$ 
is in the system is thus $\frac{K-|C|}{\mu(1-\xi)} + \frac{1}{\gamma},$
so the mean total number of descendants of a gifted peer ({\em not} including the gifted peer itself)
is given by
\ben
m_g(C)  & = &  \left(  \frac{K-|C|}{\mu(1-\xi)} + \frac{1}{\gamma}\right)  (\xi \mu m_b + \mu m_f)  \\
&=&  \left( \frac{ K-|C| }{1-\xi} + \frac{\mu}{\gamma}   \right)  (\xi  m_b + m_f).
\een
Note that  $ m_g(C)  \stackrel{\xi\rightarrow 0}{\rightarrow}    \left(  K-|C| + \frac{\mu}{\gamma}   \right) \frac{1}{1-\frac{\mu}{\gamma}}.$

Finally, we extend the scope of the ABS
to include the processes of arrivals of gifted peers and the uploads of the fixed seed; this
entails the following assumptions:

\begin{itemize}
\item For each $C$ with $1\in C$, gifted peers with initial piece collection $C$
arrive according to a Poisson process of rate $\lambda_C$
(as in the original model).
\item The fixed seed spawns peers in group (b) according to a Poisson process of rate $\xi U_s$
and it spawns peers in group (f) according to a Poisson process of rate $U_s.$
\item The Poisson processes of arrivals are mutually independent.
\item  Gifted peers and offspring of the fixed seed are considered to be {\em root peers.}  The
evolution of the descendants of root peers are mutually independent.
\end{itemize}

Let $\widehat{D}_t$ denote the cumulative number of group (b) and group (f) peers appearing
in the ABS up to time $t.$

\begin{lemma}  \label{lemma.onecompare}
The process $(D_t: t\geq 0)$ is stochastically dominated (see the appendix for
the definition) by $(\hD_t: t\geq 0).$
\end{lemma}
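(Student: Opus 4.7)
My plan is to prove the stronger pathwise statement that on a suitable probability space, $D_t \leq \hD_t$ for all $t\geq 0$ almost surely; stochastic dominance follows immediately. The idea is that every download of piece one in the (alternative) original system can be matched injectively to the birth of a group (b) or group (f) peer in the ABS, with some ABS births possibly left unmatched.

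I would construct the coupling in three layers. First, the gifted-peer arrival Poisson processes are coupled identically in both systems, since the rates agree. Second, I would couple the fixed seed's uploads of piece one in the original to the rate-$U_s$ and rate-$\xi U_s$ ABS Poisson processes that spawn (f) and (b) peers from the fixed seed: by property (3), the rate at which the fixed seed uploads piece one to the group (e) pool is at most $U_s$ (every upload into group (e) is piece one, since a group (e) peer needs only piece one), and by property (5) the rate of contacts to group (a) peers is at most $\xi U_s$, hence the rate of piece-one uploads to group (a) is at most $\xi U_s$; thinning then matches each actual upload to a distinct ABS seed-spawn event. Third, for each peer alive in group (b), (g), or (f), I couple its rate-$\mu$ contact clock in the original to the rate-$\mu$ and rate-$\xi\mu$ ABS spawning processes of its counterpart; by properties (2) and (5) the piece-one upload rates to group (e) and group (a) are bounded by $\mu$ and $\xi\mu$ respectively, and again thinning produces the matching.

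For the coupling to be well-defined, the ABS counterpart of each active peer must be alive at least as long as the original peer, so that the thinning has a domain over which to operate. For a (b) peer the ABS requires $K-1$ downloads at rate $\mu(1-\xi)$ followed by an Exp$(\gamma)$ seed dwell; by property (6) the original receives usable download opportunities at rate at least $\mu(1-\xi)$, and by property (1) the post-collection dwell is Exp$(\gamma)$. Coupling the exponential interarrival times of the download opportunities and coupling the seed dwells equal, the ABS lifetime pathwise dominates the original peer's lifetime. Analogous arguments handle gifted peers (with $K-|C|$ downloads) and (f) peers (whose entire lifetime is just the Exp$(\gamma)$ seed dwell, coupled equal). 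Putting these ingredients together, each piece-one download in the original is mapped to a distinct (b) or (f) birth in the ABS, yielding $D_t \leq \hD_t$ pointwise. The main obstacle is essentially bookkeeping: specifying the state-dependent thinning carefully so that the driving Poisson processes of the ABS remain the genuine, mutually independent Poisson processes required by the autonomous-branching definition, while still correctly reproducing the state-dependent dynamics of the original system.
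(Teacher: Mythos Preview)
Your proposal is correct and follows essentially the same coupling argument as the paper: identify gifted arrivals, realize the fixed seed's piece-one uploads to groups (e) and (a) as thinnings of the ABS rate-$U_s$ and rate-$\xi U_s$ spawning processes, realize each (b)/(g)/(f) peer's piece-one uploads as thinnings of its ABS rate-$\mu$ and rate-$\xi\mu$ spawning processes, and use property~(6) together with equal seed-dwell coupling to ensure each such peer lives no longer in the original system than its ABS counterpart. The paper organizes the bookkeeping you flag via a ``script'' device (each root peer arrives carrying the Poisson processes and dwell variables for itself and all descendants, passing sub-scripts upon spawning), and it makes explicit one point you should also record: there is no conflict between using a peer's script simultaneously for its uploads of piece one and its downloads of other pieces, because the recipients of piece-one uploads are group~(a) or~(e) peers that are not yet following any script.
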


\begin{proof}
We  describe a particular method of coupling the ABS and the original system. 
By this, we mean a way to construct both processes on a single probability space.
To do this, we start with the random variables governing the ABS, and
describe how the original system (i.e. a system with the statistical description of the
original system) can be overlaid on the same probability space, in
such a way that $D_t \leq \hD_t$ for all $t$ with probability one.

The first step is to adopt a new way of thinking about the ABS.  In the ABS,
the sets of descendants of the root peers form a partition of all group (b) and
group (f) peers in the ABS (for this purpose, the descendants of a root peer
include the root peer itself if the root peer is an offspring of the fixed seed,
but not if the root peer is a gifted peer).   Imagine that each root peer arrives with a
randomly generated {\em script} for itself and its descendants.  The script includes the sample paths
of the Poisson processes that determine: when pieces are to be downloaded,
when group (b) peers are spawned, and when group (f) peers are spawned, as
well as the seed dwell durations sampled from the exponential distribution with
parameter $\gamma.$   Whenever some peer in the ABS system spawns another,
the portion of the script held by the parent associated with that offspring and
its descendants becomes a script for that offspring.

The next step is to build the original system using the same random variables, using the following
assumptions.  When thinning of Poisson processes is mentioned, it refers to randomly rejecting
some points of a Poisson process to produce another point process with a specified intensity that
is smaller than the rate of the Poisson process.
\begin{enumerate}
\item  The original system has independent Poisson arrivals of peers of type $C$ at rate $\lambda_C,$
for all $C$ with $1\not\in C.$  These arrivals are not modeled in the ABS, and are to be
generated for the original system independently of the ABS.
\item   The arrival processes of gifted peers of type $C$ in the original system for all $C$
with $1\in C$ are identical to those in the ABS.
\item   The point process of times that the seed uploads piece one to one-club peers
is a thinning of the rate $U_s$ Poisson process governing creation of group (f)
peers in the ABS system.
\item   The point process of times that the seed uploads piece one to normal young peers
is a thinning of the rate $\xi U_s$ Poisson process governing creation of group (b)
peers in the ABS system.  
\item   The point process of times that a peer in group (b), (g), or (f) uploads piece one to
one-club peers is a thinning of the rate $\mu$ Poisson process in the script of the peer
for spawning group (f) peers.   
\item   The point process of times that a peer in group (b), (g), or (f) uploads piece one to
normal young peers is a thinning of the rate $\xi \mu$ Poisson process in the script of the peer
for spawning group (b) peers.   
\item A peer in the original system in group (b) or (g) that does not have a complete collection,
downloads useful pieces from a peer in group (e) or (f) at the jump times of the rate $\mu(1-\xi)$ Poisson
process for downloads in its script.  The peer can also make downloads at other times, to bring the
total intensity of downloads from groups (e) and (f) up to at least $ \frac{\mu(Y^e+Y^f)}{N}.$
\item The peer seed dwell time for any peer is specified in its script.
\end{enumerate}

A remark is in order about why the construction is possible.
When one peer transfers a piece to another peer in the original system, it is considered an
upload for the first peer and a download for the second.   Thus, the timing of such transfers
cannot be simultaneously governed by internal scripts of the two peers.    In the construction noted here,
such conflict does not occur, because the scripts are used to determine times that piece one can be uploaded,
and the peers that are downloading piece one are in group (a) or (e), and are thus not yet following
a script.   And the scripts are used for downloading of pieces other than piece one, but do not constrain
times that pieces other than piece one are uploaded.

The resulting coupling satisfies the following properties.
\begin{itemize}
\item Any peer in group (b), (f), or (g) in the original system is also in the ABS, in the same group and with the same
time of arrival to that group.    (Such peers can remain in the ABS longer than they stay in the original system.)
\item Any peer in group (f) in the original system (and thus also in the ABS)  departs from both systems
at the same time.   (Peers in groups (b) or (g) in the original system can stay longer in the ABS than in the original
system.)
\item Whenever some peer $p_1$ in the original system uploads piece one to some other peer $p_2$, 
peer $p_1$ simultaneously spawns peer $p_2$ in the ABS.  Afterwards, peer $p_2$ is either in group (b)
or in group (f) in both systems.
\item There can be more group (b) and more group (f) peers in the ABS than in the original system because the
spawning rates in the ABS system are greater than in the original system, and group (b) and group (f) peers in
the ABS can have fewer pieces in the ABS system than they have in the original system, and thus they can stay
longer in the ABS system than in the original system.
\end{itemize}

In particular, by the third point above, whenever piece one is uploaded in the original system a peer of group (b) or (f)
is created in the ABS system.   Therefore, $D_t \leq \hD_t$ for all $t\geq 0$ with probability one, which by the definition
of stochastic domination, proves the lemma.
\end{proof}

\begin{corollary}  \label{cor.D}
Given $\epsilon > 0,$  if $\xi$ is sufficiently small, then for all $B$ sufficiently large, 
\begin{equation}
P\left\{ 
\begin{array}{c} 
D_t  <  B  + ~~~~~~~~~~~ \\ 
  \frac{   U_s +  \sum_{C: 1\in C} \lambda_C\left(K-|C| +\frac{\mu}{\gamma} \right) }{ 1-\frac{\mu}{\gamma}}t + \epsilon t    \\   
  \mbox{for all}~ t\geq 0   
 \end{array}  \right\} \geq 0.9.  \label{eq.Dineq} 
\end{equation}
\end{corollary}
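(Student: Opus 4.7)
The plan is to combine Lemma \ref{lemma.onecompare} with a compound-Poisson law of large numbers applied to the total-descendant process of the ABS. By the lemma, $D_t \leq \hD_t$ almost surely in the coupling, so it suffices to prove \eqref{eq.Dineq} with $\hD_t$ in place of $D_t$. To get uniform-in-$t$ control, I would introduce $M_t$, defined as the total number of group (b) and group (f) peers that will \emph{ever} be spawned in the ABS by any root peer arriving by time $t$. Since every group (b) or (f) peer appearing by time $t$ descends from (or is) some root peer that has arrived by time $t$, we have $\hD_t \leq M_t$ almost surely for every $t$.

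Next, $M_t$ decomposes as a sum of independent compound-Poisson contributions indexed by the root-peer type. The fixed seed spawns group (f) roots at rate $U_s$, each contributing an independent total-descendant count with mean $m_f$ and finite second moment; it spawns group (b) roots at rate $\xi U_s$, each with mean $m_b$ and finite second moment; and gifted peers of type $C$ with $1 \in C$ arrive at rate $\lambda_C$, each contributing a count with mean $m_g(C)$ and finite second moment. Second-moment finiteness under \eqref{eq.stability.xi} was already recorded above. Hence $M_t$ has mean rate
$$r(\xi) := U_s m_f + \xi U_s m_b + \sum_{C:\, 1 \in C} \lambda_C\, m_g(C),$$
and using $m_f \to 1/(1-\mu/\gamma)$, $\xi m_b \to 0$, and $m_g(C) \to (K-|C|+\mu/\gamma)/(1-\mu/\gamma)$ as $\xi \to 0$, one checks that $r(\xi)$ converges to the coefficient $r_0 := \bigl(U_s + \sum_{C:\, 1\in C} \lambda_C(K-|C|+\mu/\gamma)\bigr)/(1-\mu/\gamma)$ on the right-hand side of \eqref{eq.Dineq}.

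With these ingredients, I would fix $\xi > 0$ small enough that \eqref{eq.stability.xi} holds and $r(\xi) < r_0 + \epsilon/2$. By the strong law for compound-Poisson processes, $M_t/t \to r(\xi)$ almost surely, so $M_t - (r_0+\epsilon)t \leq M_t - r(\xi)t - (\epsilon/2)t \to -\infty$ a.s., forcing $S := \sup_{t \geq 0}\bigl(M_t - (r_0+\epsilon)t\bigr)$ to be almost surely finite. Choosing $B$ large enough that $P\{S < B\} \geq 0.9$ and combining with $D_t \leq \hD_t \leq M_t$ then yields \eqref{eq.Dineq}. The delicate step is this last one: passing from the pointwise SLLN to an honest tail bound on the uniform supremum $S$ requires the negative drift $-\epsilon/2$ to dominate the compound-Poisson fluctuations over all large $t$, which can be established by splitting $[0,\infty)$ into unit intervals, applying Doob's maximal inequality to the martingale $M_t - r(\xi)t$ on each, and summing a geometric-in-$n$ tail bound; this is the only point at which uniform-in-$t$ control, rather than a pointwise estimate, is needed.
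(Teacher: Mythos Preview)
Your approach is essentially the paper's: both dominate $\hD_t$ by the compound Poisson process that lumps all descendants of each root peer at the root's arrival time (the paper calls it $\hhD_t$, you call it $M_t$), then exploit the negative drift $r(\xi) - r_0 - \epsilon < 0$ to bound the supremum. The only difference is the final tool. The paper simply invokes Kingman's moment bound (Proposition~\ref{prop:compoundKingman} in the appendix), which for a compound Poisson process $C$ with batch second moment $m_2$ and mean rate $\alpha m_1 < \epsilon$ gives directly $P\{C_t < B + \epsilon t \text{ for all } t\} \geq 1 - \alpha m_2/(2B(\epsilon-\alpha m_1))$; taking $B$ large finishes. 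Your SLLN route also works, and in fact the ``delicate step'' you flag is not delicate: once $M_t - (r_0+\epsilon)t \to -\infty$ a.s., local boundedness of c\`adl\`ag paths on compact intervals forces $S = \sup_{t\geq 0}(M_t - (r_0+\epsilon)t) < \infty$ a.s., hence $P\{S<B\}\to 1$ as $B\to\infty$, and no Doob/block argument is needed.
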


\begin{proof}
By Lemma \ref{lemma.onecompare}, it suffices to prove Corollary \ref{cor.D} with $D$
replaced by $\hD.$
Let $\hhD$ be a random process associated with the ABS,  denoting the cumulative counting
process that results if all the descendants of a root peer are counted at the time the root peer arrives. 
The, processes $\hD$ and $\hhD$ count the same downloads of piece one, but $\hhD$
does so sooner, so $\hD_t \leq \hhD_t$ for all $t.$  Thus, it suffices to prove Corollary \ref{cor.D} with
$D$ replaced by $\hhD.$   The process $\hhD$ is a compound Poisson process,
which can be decomposed into the sum of several independent compound Poisson processes: one for
each type $C$ with $1\in C,$  and one for peer seeds generated directly by the fixed seed.
The mean arrival rate for $\hhD$ satisfies:
\begin{eqnarray*}
 \frac{E\left[\hhD_t\right] }{t} & = &  U_s (\xi  m_b + m_f)  +  \sum_{C: 1\in C}  \lambda_C m_g(C)  \\
& \stackrel{\xi \rightarrow 0}{\rightarrow}  &
\frac{ U_s   +\sum_{C: k\in C} \lambda_C\left(K -|C|  +\frac{\mu}{\gamma} \right) }{1-\frac{\mu}{\gamma}}.
\end{eqnarray*}
and the batch sizes have finite second moments for $\xi$ sufficiently small, and the second moments are increasing in $\xi.$
Therefore, Corollary \ref{cor.D} follows from Kingman's moment bound (see Proposition \ref{prop:compoundKingman} in the appendix.)
\end{proof}

\begin{lemma}
Given $\epsilon > 0,$  if $B$ is sufficiently large,
\begin{equation}
P\left\{ A_t >   -B+  \left(  \sum_{C:k\not\in C}\lambda_C  -  \epsilon\right) t .~\forall t\geq 0\right\}    \geq 0.9. \label{eq.Aineq}  
\end{equation}
\end{lemma}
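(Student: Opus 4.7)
The plan is to recognize this statement as a direct analogue of Corollary \ref{cor.D}, but applied to a lower bound on a pure Poisson arrival stream rather than to an upper bound on a compound Poisson branching stream. First I would observe that, by hypothesis, exogenous arrivals of peers of each type $C$ with $k\not\in C$ form independent Poisson processes, so their superposition $A_t$ is itself a Poisson process with rate
\[
\lambda^- \;:=\; \sum_{C:\, k\not\in C} \lambda_C.
\]
The event in \eqref{eq.Aineq} is equivalent to
\[
\Bigl\{\, \sup_{t\geq 0}\bigl[(\lambda^- - \epsilon)\,t - A_t\bigr] < B \,\Bigr\}.
\]

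Next I would apply Kingman's moment bound (Proposition \ref{prop:compoundKingman}) to the process $M_t := (\lambda^- - \epsilon)t - A_t$. The process $-M_t$ is a compound Poisson process with unit batch sizes at rate $\lambda^-$ against a deterministic linear drain of slope $\lambda^- - \epsilon$, so $M_t$ has strictly negative mean drift $-\epsilon < 0$ and bounded (unit) jumps in the adverse direction. All hypotheses of Proposition \ref{prop:compoundKingman} therefore hold trivially---indeed the second moment of the unit batch size is $1$, independent of the parameter $\xi$ that appeared in the compound case---and the proposition yields
\[
E\!\left[\,\sup_{t\geq 0} M_t\,\right] \;<\; \infty.
\]

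Finally, Markov's inequality lets me choose $B$ sufficiently large, e.g.\ $B \geq 10\, E[\sup_{t\geq 0} M_t]$, so that
\[
P\!\left\{\, \sup_{t\geq 0} M_t \geq B \,\right\} \;\leq\; \frac{E[\sup_{t\geq 0} M_t]}{B} \;\leq\; 0.1,
\]
which rearranges to \eqref{eq.Aineq}. There is no real obstacle here beyond invoking the appendix proposition; the only point to double-check is that the lemma is stated uniformly in $t\geq 0$, and this is exactly what the sample-path supremum formulation in Kingman's moment bound delivers, just as it did in the proof of Corollary \ref{cor.D}.
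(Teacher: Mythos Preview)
Your approach is essentially the paper's: recognize that $A_t$ is a Poisson process of rate $\lambda^- = \sum_{C:k\notin C}\lambda_C$ and invoke Proposition~\ref{prop:compoundKingman}. The paper's proof is literally two lines to that effect.

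There is one small technical wrinkle in your write-up, though. Proposition~\ref{prop:compoundKingman} as stated does \emph{not} assert that $E[\sup_{t\geq 0} M_t] < \infty$; it gives only the tail estimate
\[
P\Bigl\{\sup_{t\geq 0}\bigl(C_t - \epsilon t\bigr) \geq B\Bigr\} \;\leq\; \frac{\alpha m_2}{2B(\epsilon - \alpha m_1)},
\]
which is $O(1/B)$ and hence not integrable, so finite expectation does not follow from it. Your Markov-inequality step is therefore both unjustified and unnecessary: the proposition already hands you the probability bound you want. Concretely, apply it to $C_t = -A_t$ (a compound Poisson process with rate $\alpha=\lambda^-$, jump mean $m_1=-1$, and $m_2=1$) with the proposition's slope taken to be $\epsilon - \lambda^-$; the hypothesis $\epsilon - \lambda^- > \alpha m_1 = -\lambda^-$ reduces to $\epsilon>0$, and the conclusion reads
\[
P\bigl\{-A_t < B + (\epsilon-\lambda^-)t \ \forall t\bigr\} \;\geq\; 1 - \frac{\lambda^-}{2B\epsilon},
\]
which is exactly \eqref{eq.Aineq} once $B \geq 5\lambda^-/\epsilon$. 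So drop the expectation detour and invoke the tail bound directly.
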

\begin{proof}
The process $A$ is a Poisson process with rate $ \sum_{C:k\not\in C}\lambda_C.$
Thus,  \eqref{eq.Aineq} follows from Kingman's moment bound
(see Proposition \ref{prop:compoundKingman} in the appendix.)

\end{proof}

\begin{lemma}   \label{lemma.MG1compare}
The process $Y^a_t+Y^b_t+Y^g_t$ is stochastically dominated by the number of customers in an $M/GI/\infty$ queueing system
with initial state zero,  arrival rate $\lambda_{total},$ and service times having mean $\frac{K}{\mu(1-\xi)}+\frac{1}{\gamma}.$
\end{lemma}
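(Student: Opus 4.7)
The plan is to construct a coupling on a single probability space under which the alternative system and an $M/GI/\infty$ queue with arrival rate $\lambda_{total}$ and iid service times of the specified mean live together, in such a way that $Y_t^a + Y_t^b + Y_t^g$ is dominated sample-path-wise by the number of customers in the queue. Both processes will share the same Poisson($\lambda_{total}$) process of arrival epochs, and for each peer arrival $p$ I would attach an independent script consisting of a rate-$(1-\xi)\mu$ Poisson process $\Pi_p$ (with inter-arrival gaps $\tau_{p,1}, \tau_{p,2}, \ldots$) and an independent $\mathrm{Exp}(\gamma)$ random variable $\sigma_p$, with the scripts jointly independent across peers. The service time of the $p$-th queue customer will be $S_p := \tau_{p,1} + \cdots + \tau_{p,K} + \sigma_p$, which is iid with mean $\frac{K}{\mu(1-\xi)} + \frac{1}{\gamma}$, as required.

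On the alternative-system side, I would exploit the freedom in the definition of the phantom seed---which is only specified to upload enough for property 6 to hold---by prescribing that the phantom seed upload a useful piece to peer $p$ at every tick of $\Pi_p$ that occurs while $p$ is in groups (a), (b), or (g) and has not yet collected all $K$ pieces. This is feasible because any such peer is missing at least one piece other than piece one (recall that peers in (a) are missing piece one and at least one other piece, and peers in (b) and (g) already have piece one), and the phantom seed holds every piece except piece one. The injected intensity is $(1-\xi)\mu$, which together with any naturally occurring useful downloads keeps the total useful-download rate at or above $(1-\xi)\mu$, consistent with property 6. Consequently peer $p$ has collected all $K$ pieces by the time of the $K$-th tick of $\Pi_p$ after its arrival; it then remains in group (b) or (g) as a peer seed for an additional $\mathrm{Exp}(\gamma)$ dwell, which I would couple with $\sigma_p$. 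Thus the sojourn of peer $p$ in (a), (b), (g) is bounded above by $S_p$, and identifying peer $p$ with the $p$-th queue customer yields $Y_t^a + Y_t^b + Y_t^g \leq N_t^{M/GI/\infty}$ for all $t\geq 0$.

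The main delicacy is to confirm that this augmented realization---in which the phantom seed's uploads are driven by the external, peer-indexed Poisson clocks $\Pi_p$ rather than by the running shortfall in useful-download rate---is itself a legitimate realization of the alternative system. This amounts to a routine but careful Poisson-superposition/thinning verification: one shows that making the phantom seed fire at every tick of $\Pi_p$ (rather than only when necessary) produces extra transitions that are themselves a thinning of an independent Poisson process and therefore can be adjoined without changing the joint law of observable transitions of the alternative system. Once this verification is in hand, independence of the scripts $(\Pi_p,\sigma_p)$ across peers gives the iid service times, and the sample-path inequality gives the lemma.
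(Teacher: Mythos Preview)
Your overall strategy matches the paper's: attach to each arriving peer an internal rate-$(1-\xi)\mu$ Poisson clock and an independent $\mathrm{Exp}(\gamma)$ variable, and argue that the sojourn in groups (a), (b), (g) is bounded by the sum of $K$ inter-tick times plus the exponential dwell. The $M/GI/\infty$ comparison then follows by identifying arrivals.

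However, the coupling construction you propose has a genuine gap. You want to realize the alternative system by having the phantom seed upload at \emph{every} tick of $\Pi_p$, and you justify this by appealing to ``freedom in the definition of the phantom seed.'' But the alternative system is constrained to have the \emph{same out-going transition rates as the original system on all states with} $Y^e+Y^f>(1-\xi)N$; that identity is precisely what permits one to prove \eqref{eq:target} for the alternative system in place of the original. On those good states the original system has no phantom seed, so if you force phantom-seed uploads at $\Pi_p$ ticks there, the alternative system no longer matches the original, and the reduction at the start of Section~\ref{sec:transience} collapses. Your final paragraph flags this as ``the main delicacy,'' but the proposed resolution---adjoin extra transitions ``without changing the joint law of observable transitions''---cannot work: extra transitions change the law, and a Lemma~\ref{lemma.MG1compare} proved for the augmented system says nothing about the alternative system used in Lemma~\ref{lemma.onecompare} and Corollary~\ref{cor.D}.

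The paper's route avoids modifying the alternative system. Property~6 guarantees that any non-seed peer in (a), (b), or (g) has useful-download intensity at least $(1-\xi)\mu$, drawn from whatever sources are available (group (e)/(f) peers on good states, the phantom seed on bad states). One then embeds the rate-$(1-\xi)\mu$ Poisson process $\Pi_p$ \emph{inside} peer $p$'s actual download point process---a standard thinning/superposition construction since the stochastic intensity dominates $(1-\xi)\mu$---so that by the $j$-th tick of $\Pi_p$ peer $p$ has received at least $j$ useful pieces. The peer may exit (a)$\cup$(b)$\cup$(g) even sooner (it may arrive with pieces, download at a higher intensity, or move to the one-club without ever becoming a seed---a possibility your write-up overlooks, though it only helps), but in any case its sojourn is bounded by $\tau_{p,1}+\cdots+\tau_{p,K}+\sigma_p$. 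The correction you need is to regard $\Pi_p$ as a sub-process of the download stream, not as the firing schedule of the phantom seed.
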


\begin{proof}
The idea of the proof is to show how, with a possible enlargement of the
underlying probability space, an $M/GI/\infty$  system can be constructed on the same probability
space as the original system, so that for any time $t$, $Y^a_t+Y^b_t+Y^g_t$ is
less than or equal to the number of peers in the $M/GI/\infty$ system.   Let the $M/GI/\infty$ system have
the same arrival process as the original system--it is a Poisson process of rate $\lambda_{total}.$

An important point is that any peer in group (a), (b), or (g) is either receiving useful download
opportunities at rate at least $(1-\xi)\mu,$  or is a peer seed (possible if it is in group (b) or (g)) and
is thus waiting for a departure time that is exponentially distributed with parameter $\gamma.$
We can thus imagine that any arriving peer has an internal Poisson clock that ticks at
rate $\mu(1-\xi)$ and an internal, exponentially distributed random variable with parameter
$\gamma.$   Whenever its internal clock ticks, it can download a useful piece, until it either
joins the one club (in which case it leaves group (a) and joins group (e)) or it becomes a peer
seed, in which case it remains in the system as a peer seed for an amount of time equal to
its internal exponential random variable of parameter $\gamma.$

An arriving peer in the original system may already have some pieces at the time of arrival, or its
 intensity of downloading pieces could be greater than $(1-\xi)\mu,$  or it might leave the
 union of groups (a), (b), and (g)  by becoming a one-club peer.  These factors cause to reduce the time
 that a peer remains in the union of groups  (a), (b) and (g).    The  $M/GI/\infty$ system system is constructed by
 ignoring those speedup factors.    Specifically,  in the $M/GI/\infty$ system,
each arriving peer has to download $K$ pieces at times governed by its internal Poisson clock, and then remain
as a peer seed for a time duration given by its internal exponentially distributed random variable
for seed time.   The service time distribution for the $M/GI/\infty$ system is thus the sum of
$K$ independent exponential random variables with parameter $\mu(1-\xi)$ plus a single exponential
random variable with parameter $\gamma.$   Any peer that is in groups (a), (b), or (g) in the
original system will be in the $M/GI/\infty$ system, and the mean service time for the
$M/GI/\infty$ system is $\frac{K}{\mu(1-\xi)}+\frac{1}{\gamma}.$
\end{proof}

\begin{corollary}  Given $\epsilon_o > 0 $ and $\xi > 0$,    if $B$ is sufficiently large,
\begin{eqnarray}
P\{ Y^a_t  +Y^b_t+Y^g_t  <   B  +  \epsilon_o t ~~ \mbox{for all}~ t\geq 0  \}    \geq 0.9.   \label{eq.Yineq}  
\end{eqnarray}
\end{corollary}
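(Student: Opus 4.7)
The plan is to reduce the claim to a uniform-in-time sample path bound on the queue size of the $M/GI/\infty$ system appearing in Lemma \ref{lemma.MG1compare}, and then establish that bound using second-moment information plus a discretization/union-bound argument. The goal is a high-probability envelope $B+\epsilon_o t$, where any positive slope $\epsilon_o$ is acceptable because the comparison system is stable.

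First, by Lemma \ref{lemma.MG1compare} it suffices to prove the corresponding statement with $Y^a_t+Y^b_t+Y^g_t$ replaced by the number $M_t$ of customers in an $M/GI/\infty$ queue with arrival rate $\lambda_{total}$, initial state zero, and service time $S$ distributed as the sum of $K$ independent exponentials of parameter $\mu(1-\xi)$ and one independent exponential of parameter $\gamma$. In particular, $S$ has finite mean $\bar{s}=\frac{K}{\mu(1-\xi)}+\frac{1}{\gamma}$ and finite second moment. Using the classical fact that such an $M/GI/\infty$ queue started empty has $M_t$ Poisson-distributed with mean $\lambda_{total}\int_0^t P\{S>u\}\,du$, both $E[M_t]$ and $\mathrm{Var}(M_t)$ are bounded by $\lambda_{total}\bar{s}$ uniformly in $t$.

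Next, to pass from the pointwise bound to a sample path bound, I would discretize: on each unit interval $[n,n+1)$, the monotonicity of arrivals and the nonnegativity of departures give $\sup_{t\in[n,n+1)}M_t \leq M_n + (A_{n+1}-A_n)$, where $A$ is the Poisson arrival process of rate $\lambda_{total}$. Setting $Z_n := M_n+(A_{n+1}-A_n)$, both $E[Z_n]$ and $E[Z_n^2]$ are $O(1)$ uniformly in $n$. Applying Chebyshev's inequality,
\[
P\{Z_n > B + \epsilon_o n\} \leq \frac{E[Z_n^2]}{(B+\epsilon_o n - E[Z_n])^2}
\]
for $B$ sufficiently large that the denominator is positive. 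Summing over $n\geq 0$ produces a convergent series whose tail is $O(1/B)$, so $B$ can be chosen large enough that the total is at most $0.1$, and hence $P\{M_t < B+\epsilon_o t \text{ for all } t\geq 0\} \geq 0.9$. The same argument works via Proposition \ref{prop:compoundKingman}, applied to $(Z_n)$ viewed as a batch-arrival process sampled at integer times, mirroring how the preceding two corollaries in this section were derived.

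The only delicate step is the second one: verifying that the moment bounds on $M_t$ are uniform in $t$ and that the discretization does not lose anything essential. This is routine because the stability of $M/GI/\infty$ with finite-mean service time gives uniform-in-$t$ Poisson marginals, and because the slack in $\epsilon_o n$ dominates any constant-order fluctuation. No issue arises from the dependence between $M_n$ and $A_{n+1}-A_n$, since both Chebyshev and Kingman-type bounds only require finite second moments of $Z_n$, which hold uniformly.
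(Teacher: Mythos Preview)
Your reduction via Lemma \ref{lemma.MG1compare} to the $M/GI/\infty$ process $M_t$ is exactly what the paper does. From there, however, the paper simply cites Lemma \ref{lemma.mginfty} in the appendix, which already gives the maximal bound
\[
P\{M_t \geq B + \epsilon t \text{ for some } t \geq 0\} \leq \frac{e^{\lambda(m+1)}2^{-B}}{1-2^{-\epsilon}},
\]
so the corollary is a one-line application with $m=\frac{K}{\mu(1-\xi)}+\frac{1}{\gamma}$ and $\epsilon=\epsilon_o$. You instead rederive a weaker (polynomial-tail) version of this maximal bound from scratch, using the Poisson marginals of $M_t$, the discretization $\sup_{t\in[n,n+1)}M_t\le M_n+(A_{n+1}-A_n)$, Chebyshev, and a union bound. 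That argument is correct and self-contained; it just does more work than necessary given that Lemma \ref{lemma.mginfty} is already available.

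One caution: your closing remark that Proposition \ref{prop:compoundKingman} could be applied to $(Z_n)$ ``viewed as a batch-arrival process'' is not right. Kingman's bound is for compound Poisson processes with independent increments, whereas the sequence $(M_n)$ does not have independent increments (customers present at time $n$ may still be present at time $n+1$), so $(Z_n)$ is not of the required form. This does not affect your main Chebyshev argument, which stands on its own, but you should drop that sentence.
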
 
\begin{proof}
The Corollary follows from Lemmas  \ref{lemma.MG1compare} and
  \ref{lemma.mginfty} with $m$ in Lemma  \ref{lemma.mginfty} equal to
$\frac{K}{\mu(1-\xi)}+\frac{1}{\gamma}$ and $\epsilon$ equal to
$\epsilon_o.$ 
\end{proof}

\vspace{.1in}
The proof of Theorem \ref{thm:main}(a)  is now completed.
\begin{itemize}
\item Select $\epsilon > 0$ so that $2\epsilon < \triangle_{\mathcal{F}-\{1\}}.$ 
\item Select $\xi>0$ so small that \eqref{eq.Dineq} holds for sufficiently large $B.$
\item Select $\epsilon_o$ small enough that
$
\frac{\epsilon_o}{ \triangle_{\mathcal{F}-\{1\}}-2\epsilon} < \xi.
$
\item Select $B$ large enough that \eqref{eq.Dineq}, \eqref{eq.Aineq}, and \eqref{eq.Yineq} hold.
\item  Select $N_o$ large enough that  $\frac{B}{N_o-2B}\leq \xi$.
\end{itemize}
Let $\cal E$ be the intersection of the three events on the left sides of  \eqref{eq.Dineq},
\eqref{eq.Aineq}, and \eqref{eq.Yineq}.  By the choices of the constants,
 \eqref{eq.Dineq}, \eqref{eq.Aineq}, and \eqref{eq.Yineq} hold, so that
 $P\{{\cal E} \} \geq 0.7.$   To complete the proof, it will be shown that
$\cal E$ is a subset of the event in \eqref{eq:target}, thereby establishing  \eqref{eq:target}.
Since $N_t$ is greater than
or equal to the number of peers in the system that don't have piece one,
on $\cal E$, \\ $N_t \geq N_o+A_t -D_t  >  N_o - 2B + (\triangle_{\mathcal{F}-\{1\}}-2\epsilon) t $ for all $t\geq 0.$    Therefore, on $\cal E$, for
any $t\geq 0,$
\begin{eqnarray*}
\lefteqn{\frac{Y^a_t  +Y^b_t+Y^g_t }{N_t}  }\\&  <    & \frac{ B  +  \epsilon_o t  }{ N_o  - 2B + (\triangle_{\mathcal{F}-\{1\}}-2\epsilon) t }  \\
&\leq & \max \left\{  
\frac{ B   }{ N_o -2B} , ~
 \frac{  \epsilon_o   }{\triangle_{\mathcal{F}-\{1\}} -2\epsilon} 
\right\}  \leq \xi.
\end{eqnarray*}
Thus,  $\cal E$ is a subset of the event in \eqref{eq:target} as claimed.
This completes the proof of Theorem \ref{thm:main}(a).

\section{Proof of Positive Recurrence in Theorem \ref{thm:main} \label{sec:positiverecurrence}}
Theorem \ref{thm:main}(b) is proved in this section. The first subsection treats the case $0<\mu<\gamma\leq\infty$ and the second subsection treats the case $0<\gamma\leq \mu$.

\subsection{Proof of Positive Recurrence when $0<\mu<\gamma\leq \infty$ in Theorem \ref{thm:main}\label{sec:p2}}

The detailed proof of Theorem \ref{thm:main}(b) when $0<\mu<\gamma\leq \infty$ is given in this subsection. Assume $0<\mu<\gamma\leq \infty$ and assume \eqref{eq:repcd} is valid for all $S\in\mathcal{C}-\{\mf\}$. 
For any nonnegative function $F=F(\mathbf{x})$ on the state space of the system, the drift of $F$ at state $\mathbf{x}$ is defined as 
\begin{eqnarray}
Q(F)(\mathbf{x}): = \sum_{\mathbf{x}':\mathbf{x}'\neq \mathbf{x}} q(\mathbf{x}, \mathbf{x}') \left[F(\mathbf{x}')-F(\mathbf{x})\right]. \label{eq:drift}
\end{eqnarray}
If, as usual, the diagonal elements $q(\mathbf{x},\mathbf{x})$ of the transition matrix $Q$ are chosen so that row sums are zero, $Q(F)$ is the product of the matrix $Q$ and function $F$, viewed as a vector.   In this paper, we apply the following
lemma implied by the Foster-Lypunov criterion.
\begin{lemma} \label{prop:FL}
The P2P Markov process is positive recurrent and $E[N]<+\infty,$  where $N$ is a random variable with the stationary
distribution for the number of peers in the system, if there is a nonnegative function $W(\mathbf{x})$ on the state space of the process, such that (i)  $\{\mathbf{x}:W(\mathbf{x})\leq c\}$ is a finite set for any constant $c$, and
(ii)  there exists $n_o\geq 0$ and $ \xi>0$ so that $QW\leq -\xi n <0$ whenever $n\geq n_o.$ 
We call such a $W$ a valid Lyapunov function.
\end{lemma}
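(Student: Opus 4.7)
The plan is to deduce the lemma from the standard continuous-time Foster--Lyapunov drift theorem, augmenting the positive-recurrence conclusion with a drift-integration argument over a return cycle to obtain the moment bound $E[N]<\infty$.

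First, I would establish positive recurrence. Because every state is a vector of nonnegative integer type counts summing to $n$, the sublevel set $F:=\{\mathbf{x}:n(\mathbf{x})<n_o\}$ contains only finitely many states, and on its complement hypothesis (ii) gives $QW\leq -\xi n_o<0$. On the finite set $F$ the total outgoing rate $|q(\mathbf{x},\mathbf{x})|$ is finite, so $QW$ is bounded there. These are exactly the hypotheses of Foster's drift criterion for irreducible continuous-time countable Markov chains, so the P2P chain is positive recurrent with a unique stationary distribution $\pi$.

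For the moment bound, I would strengthen the drift inequality to a global form. Using that $QW$ is bounded on $F$, there is a constant $c$ with $QW(\mathbf{x})\leq -\xi n(\mathbf{x})+c\,\mathbf{1}_F(\mathbf{x})$ for all $\mathbf{x}$. Fix a reference state $\mathbf{x}_0$ and let $\tau$ be the first return time to $\mathbf{x}_0$, so $E_{\mathbf{x}_0}[\tau]<\infty$ by positive recurrence. I would apply Dynkin's formula on the localized interval $[0,\tau\wedge T_k]$, where $T_k$ is the exit time from the sublevel set $\{W\leq k\}$, which is finite for each $k$ by hypothesis (i) and satisfies $T_k\to\infty$ almost surely. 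Using $W(X_{\tau\wedge T_k})\geq 0$ and letting $k\to\infty$ by monotone convergence, one obtains $\xi E_{\mathbf{x}_0}\bigl[\int_0^\tau n(X_s)\,ds\bigr]\leq W(\mathbf{x}_0)+c\,E_{\mathbf{x}_0}[\tau]<\infty$. The standard Kac-type cycle formula $\pi(f)=E_{\mathbf{x}_0}[\int_0^\tau f(X_s)\,ds]/E_{\mathbf{x}_0}[\tau]$ then yields $E_\pi[N]<\infty$.

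The main technical obstacle is this Dynkin-formula step: because $W$ may be unbounded and one only has a pointwise drift inequality, the formula cannot be applied directly up to $\tau$. The localization via $T_k$, made possible by the coercivity condition (i), is the standard remedy; with that in hand the two halves of the argument combine cleanly to deliver both positive recurrence and the moment bound on $N$.
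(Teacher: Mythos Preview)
Your proof is correct and follows essentially the same route as the paper: both reduce the lemma to the combined Foster--Lyapunov stability criterion and moment bound by observing that $QW$ is bounded on the finite set $\{n<n_o\}$, so that $QW(\mathbf{x})\leq -\xi n(\mathbf{x})+c\,\mathbf{1}_{\{n<n_o\}}(\mathbf{x})$ globally. The only difference is presentational: the paper invokes this combined criterion (its Proposition~\ref{cor.FosterCompContinuous}, citing Meyn--Tweedie) as a black box with $f=\xi n$ and $g=\widehat{B}\,\mathbf{1}_{\{n<n_o\}}$, whereas you unpack its proof via the Dynkin--localization and Kac cycle formula steps.
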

\begin{proof}
For any $\mathbf{x}$, $q(\mathbf{x},\mathbf{x}')$ is nonzero for only finitely many values of $\mathbf{x}'$, so $QW$ is finite for all $\mathbf{x}$. 
Therefore the constant
$\widehat{B},$ defined by  $\widehat{B}=\max_{\mathbf{x}:n<n_o} QW(\mathbf{x}),$ is finite.  The lemma follows from the combined Foster-Lyapunov
stability criterion and moment bound--see Proposition \ref{cor.FosterCompContinuous} in the appendix--with
$V=W,$ $f(\mathbf{x})=\xi n,$  and $g(\mathbf{x})=\widehat{B}I_{\{n< n_o\}}.$
\end{proof}

The Lyapunov function we  use is,  if $0<\mu<\gamma<\infty:$
\begin{equation}
W:=   \sum_{C:C\in \mathcal{C}} r^{|C|} T_C,  \label{eq:defW}
\end{equation}
where
$$
 T_C :=
\begin{cases}
{1\over2}E_C^2 + \alpha E_C\phi(H_C)& \text{ if }C\neq \mathcal{F}\\
{1\over2}n^2&\text{ if }C=\mathcal{F}
\end{cases},
$$
and if $0<\mu<\gamma=\infty:$
\be
W &: =& \sum_{C:C\in \mathcal{C}-\{\mf\}} r^{|C|} T_C \label{eq:defW2},
\ee
with the following notation:
\begin{itemize}
\item $r\in(0,{1\over2}), d\in(1,\infty),\beta\in (0,{1\over2}), \alpha\in({1\over2},1)$ are positive constants to be specified, with $r$ and $\beta$ small, $d$ large, and $\alpha$ close to one.
\item$\mE_C: = \{C':C'\subseteq C\}$, which is the collection of types of peers which are or can become type $C$ peers.
Note that $\mE_C$ is downward closed (i.e. a lower set) for any $C.$
\item $\mH_C: = \{C': C'\in\mathcal{C},C'\not\subseteq C\}$,
which is the collection of types of peers which can help type $C$ peers. 
Note that $\mE_C$ is upward closed (i.e. an upper set) for any $C.$
Also,  $\mf\in\mH_C$ for any $C\in\mathcal{C}-\mf$ and $\mH_\mf = \emptyset$.
\item$E_C: = \sum_{C':C'\in\mE_C}x_{C'}$,  e.g. $E_\mf = n.$
\item $H_C: = {1\over 1-\mu/\gamma} \sum_{C':C'\in\mH_C} (K-|C'|+\mu/\gamma) x_{C'},$ e.g. $H_\mf = 0.$
\item$\phi$ is the function with parameters $d,\beta$, defined as 
\begin{eqnarray*}
\phi(x): = \begin{cases}
(2d+{1\over 2\beta}-x)& \text{if } 0\leq x\leq 2d\\
{{\beta\over2}(x-2d-{1\over\beta})^2}& \text{if }2d<x\leq 2d+{1\over \beta}\\
0 &\text{if }x>2d+{1\over \beta}
\end{cases}.
\end{eqnarray*}
Thus $\phi'(x)=-1$ for $0\leq x\leq 2d$, $\phi'(x)=0$ for $x\geq 2d+1/\beta$, and $\phi'$ increases linearly from $-1$ to $0$ over the interval $[2d,2d+1/\beta]$. In particular, $-1\leq \phi'(x)\leq 0$ for $x\geq 0$.
\end{itemize}

In the proof, we consider the following two classes of states, where $\epsilon$ is
to be selected with $0 < \epsilon < {1\over2}. $  The classes overlap and their
union includes every nonzero state: 
\begin{definition} \label{def:class}
Class \Rmnum{1} is the set of states $\mathbf{x}$ such that  there exists $S\in\mathcal{C}-\{\mf\}$, so that $x_S/n>1-\epsilon$; class \Rmnum{2} is the set of states $\mathbf{x}$ such that there exist $C_1, C_2\in \mathcal{C}$, either $C_1$ and $C_2$ being distinct or both equal to $\mathcal{F}$, so that, $x_{C_1}/n>\epsilon/2^K$ and $ x_{C_2}/n>\epsilon/2^K$. 
\end{definition}

The main idea of the proof is to show that  $W$ is a valid Lyapunov function
for an appropriate choice of $(r,d,\beta,\alpha, \epsilon).$
The given parameters of the network, $K,U_s,\lambda=\left(\lambda_S:S\in\mathcal{C}\right),\gamma$ and $\mu$, are treated as constants. Functions on the state space are considered which may depend on
the variables $r, d,\beta,\alpha$ and $\epsilon.$  It is convenient to adopt the big theta notation $\Theta(\ast)$, with
the understanding that it is uniform in these variables; this is summarized in the following definitions. 
\begin{definition}
Given functions $f$ and $g$ on the state space, we
say $f = \Theta(g)$ if there exist constants $k_1,k_2, n_o>0$, not depending on $(r,d,\beta,\alpha, \epsilon)$, such that $ k_1|g(\mathbf{x})|\leq |f(\mathbf{x})| \leq k_2 |g(\mathbf{x})|$ for all $\mathbf{x}$ such that $n>n_o$.
\end{definition}

For example, $2=\Theta(1), \lambda_{total} n=\Theta(n)$, $d\not=\Theta(1)$, $d=\Theta(d)$. Similarly, we adopt notions of ``small enough" and ``large enough" that are uniform in $(r,d,\beta,\alpha,\epsilon)$:
\begin{definition}
The statement, ``condition $A$ is true if $x>0$ is {\em small enough}", means there exists a constant $ k>0$, not depending on $(r,d,\beta,\alpha,\epsilon)$, such that $A$ is true for any $x\in(0,k)$. Similarly, the statement, ``condition $A$ is true if $x>0$ is {\em large enough}", means there exists a constant $k>0$, not depending on $(r,d,\beta,\alpha,\epsilon)$, such that $A$ is true for any $x\in(k,\infty)$.
\end{definition}

Some additional notation is applied in the following proofs:
\begin{itemize}
\item$M_\phi := 3d+{1\over \beta}$. We have $M_\phi > \max_x \phi(x)$ and $M_\phi>\min \{x:\phi(x)= 0\} +d>1$.
\item For any $\mathcal{X}, \mathcal{X}'\subseteq \mathcal{C}$, \\$\Gamma_{\mathcal{X},\mathcal{X}'}: = \sum_{C\in\mathcal{X}}\sum_{C':C'\in\mathcal{X}'} \Gamma_{C,C'}$, where $\Gamma_{C,C'}$ is  defined in \eqref{eq:defGamma}. 
\item $D_C$ is defined by $$ D_C:=
\begin{cases}
\sum_{i:i\in\mathcal{F}}\Gamma_{C,C\cup \{i\}} & \text{if }C\neq \mathcal{F}\\
\gamma x_{\mathcal{F}} &\text{if }C=\mathcal{F}, \gamma<\infty\\
0& \text{if }C = \mathcal{F},\gamma=\infty
\end{cases}.
$$
Except in the case $C=\mf$ and $\gamma=\infty$, $D_C$ is the aggregate rate that peers leave the group of type $C$ peers.
\item For any $\mathcal{X}\subseteq \mathcal{C}$, $x_{\mathcal{X}} := \sum_{C:C\in\mathcal{X}}x_C$,  $D_{\mathcal{X}} := \sum_{C:C\in\mathcal{X}} D_C$, $D_{total}: =D_{\mathcal{C}}$, $\lambda_{\mathcal{X}} := \sum_{C:C\in\mathcal{X}} \lambda_C$,
 $\lambda_{\mathcal{X}}^{\ast} = \sum_{C:C\in\mathcal{X}}\lambda_{C}(K-|C|+\mu/\gamma)$.
\end{itemize}

Now we start to prove that $W$ given by \eqref{eq:defW} or \eqref{eq:defW2} is a valid Lyapunov function.
The following proof applies if either $0<\mu<\gamma<\infty$ or $0<\mu<\gamma=\infty$, with differences being
stated when necessary.

To begin, we identify a simple approximation to the drift of $W$. Notice that $Q(\ast)$ is linear, so if $0<\mu<\gamma<\infty$,
$$
Q(W) = \sum_{C:C\in \mathcal{C}} r^{|C|} Q(T_C),
$$
where
$$
 Q(T_C) =\begin{cases} {1\over2}Q(E_C^2) + \alpha Q(E_C\phi(H_C)) &\text{if }C\neq\mathcal{F} \\
{1\over2}Q(n^2) & \text{if }C=\mathcal{F}
\end{cases}.
$$
If $0<\mu<\gamma=\infty$,
\ben
Q(W) = \sum_{C:C\in \mathcal{C}-\{\mf\}} r^{|C|} Q(T_C).
\een
Define $\LL W$, an approximation of $Q(W)$, as follows: If $0<\mu<\gamma<\infty$,
$$
\LL W:= \sum_{C:C\in \mathcal{C}} r^{|C|} \LL T_C ,
$$
where
$$
 \LL T_C :=\begin{cases} E_CQ(E_C) + \alpha E_CQ(\phi(H_C))&\text{if }C\neq \mathcal{F}\\
nQ(n) &\text{if }C=\mathcal{F}\end{cases} . \label{eq:dotW}
$$
If $0<\mu<\gamma=\infty$,
\be
\LL W:&=& \sum_{C:C\in \mathcal{C}-\{\mf\}} r^{|C|} \LL T_C.  \label{eq;dotW2}
\ee

The following lemma provides a bound on the approximation error:
\begin{lemma} \label{lem:QWLW}
$|Q(W)-\LL W|\leq M_\phi (D_{total}+1) \Theta(1)$. 
\end{lemma}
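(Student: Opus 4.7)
The plan is to apply the discrete product rule for the generator and then bound the resulting second-order remainders. For any two functions $f,g$ on the state space,
$$Q(fg)(\mathbf{x}) = f(\mathbf{x})Q(g)(\mathbf{x}) + g(\mathbf{x})Q(f)(\mathbf{x}) + R_{f,g}(\mathbf{x}),$$
where $R_{f,g}(\mathbf{x}) := \sum_{\mathbf{x}'\neq \mathbf{x}} q(\mathbf{x},\mathbf{x}')[f(\mathbf{x}')-f(\mathbf{x})][g(\mathbf{x}')-g(\mathbf{x})]$. Specializing to $f=g=E_C$ gives $\tfrac{1}{2}Q(E_C^2)-E_C Q(E_C) = \tfrac{1}{2}R_{E_C,E_C}$, and specializing to $f=g=n$ gives the analogous identity for $n$. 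Taking $f=E_C$ and $g=\phi(H_C)$ gives $Q(E_C\phi(H_C))-E_CQ(\phi(H_C)) = \phi(H_C)Q(E_C)+R_{E_C,\phi(H_C)}$. Consequently, for $C\neq\mf$,
$$Q(T_C)-\LL T_C = \tfrac12 R_{E_C,E_C} + \alpha\bigl[\phi(H_C)Q(E_C)+R_{E_C,\phi(H_C)}\bigr],$$
and when $\gamma<\infty$, $Q(T_\mf)-\LL T_\mf=\tfrac12 R_{n,n}$.

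Next I would bound each piece. The key observation is that every transition alters exactly one coordinate $x_{C'}$ by $\pm 1$, so $|\Delta E_C|\le 1$, $|\Delta n|\le 1$, and $|\Delta H_C|$ is at most a single coefficient of the form $(K-|C'|+\mu/\gamma)/(1-\mu/\gamma)=\Theta(1)$. Because $|\phi'|\le 1$, also $|\Delta\phi(H_C)|\le|\Delta H_C|=\Theta(1)$. The total transition rate out of any nonzero state is $\lambda_{total}+D_{total}$; since $\lambda_{total}$ is a model constant, this equals $\Theta(D_{total}+1)$. Hence each remainder satisfies $|R_{f,g}|\le(\sup|\Delta f|)(\sup|\Delta g|)\cdot(\lambda_{total}+D_{total})=\Theta(1)(D_{total}+1)$. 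For the middle term, $|\phi(H_C)|\le M_\phi$ by definition of $M_\phi$, and $|Q(E_C)|$ is also bounded by the total outgoing rate, yielding $|\phi(H_C)Q(E_C)|\le M_\phi\cdot\Theta(1)(D_{total}+1)$. Finally, since $r^{|C|}\le 1$ and the sum has only $2^K$ terms (a number independent of $r,d,\beta,\alpha,\epsilon$), summing over $C$ produces $|Q(W)-\LL W|\le M_\phi(D_{total}+1)\Theta(1)$.

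The main obstacle is the careful bookkeeping for $|\Delta H_C|$ under the various transition types. An arrival of a type-$C'$ peer with $C'\in\mH_C$ changes $H_C$ by exactly $(K-|C'|+\mu/\gamma)/(1-\mu/\gamma)$; an upgrade of a peer from type $C'$ to $C'\cup\{i\}$ changes $H_C$ by at most the difference of two such coefficients (still $\Theta(1)$); and a peer-seed departure, applicable for $C\neq\mf$ since $\mf\in\mH_C$, changes $H_C$ by $(\mu/\gamma)/(1-\mu/\gamma)=\Theta(1)$. Verifying that these bounds are uniform in $r,d,\beta,\alpha,\epsilon$—so that $M_\phi$ appears only via the factor $|\phi(H_C)|\le M_\phi$ in the one term $\phi(H_C)Q(E_C)$—is the routine but tedious checking that supports the clean statement of the lemma.
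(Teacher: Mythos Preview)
Your proof is correct and follows essentially the same decomposition as the paper: split $Q(T_C)-\LL T_C$ into the second-order remainder for $E_C^2$, the cross remainder for $E_C\phi(H_C)$, and the extra term $\phi(H_C)Q(E_C)$, then bound each using that all one-step increments are $\Theta(1)$ and the total jump rate is $\lambda_{total}+D_{total}$. One minor slip: an internal transition $\mathbf{x}\to\mathbf{x}-\mathbf{e}_{C'}+\mathbf{e}_{C'\cup\{i\}}$ alters \emph{two} coordinates, not one, but your conclusions $|\Delta E_C|\le 1$, $|\Delta n|\le 1$, $|\Delta H_C|=\Theta(1)$ are still correct (indeed your final paragraph handles this case explicitly), so the argument is unaffected.
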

\begin{proof}
Compare $Q(W)$ and $\LL W$ term by term. Consider terms of the form  $Q(T_C)$ and $\LL T_C$. First assume $C\neq \mf.$
Because $\alpha<1$, we can write
$$
\left|Q(T_C)-\LL T_C\right|\leq a_1+a_2+a_3,
$$
where
\ben
a_1 & = &  \left|{1\over2}Q(E_C^2) - E_CQ(E_C)\right| \\
& = & \lambda_{\mE_C}+\Gamma_{\mE_C,\mH_C}\leq \lambda_{total}+D_{total}\\
 a_2 & = &  \bigg|Q(E_C\phi(H_C)) -\\&&~~~ \bigg[Q(E_C)\phi(H_C)+E_CQ(\phi(H_C))\bigg]\bigg| \\
a_3 &=&  |Q(E_C)\phi(H_C)|\leq M_{\phi}(\lambda_{\mE_C}+\Gamma_{\mE_C,\mH_C}).
\een
The only way  $E_C$ and $\phi(H_C)$ can simultaneously change is that some peer with type in $\mE_C$ becomes a peer with type in $\mH_C$, causing $E_C$ to  decrease by $1$, and $\phi(H_C)$  to decrease by at most ${K+\mu/\gamma\over 1-\mu/\gamma}$, so $$a_2 \leq {K+\mu/\gamma\over 1-\mu/\gamma}\Gamma_{\mE_C,\mH_C}.$$ From the discussion above and the fact that $\Gamma_{\mE_C,\mH_C}\leq D_{total}$, we have 
\be |Q(T_C)-\LL T_C| & \leq  &  M_\phi \Theta(1) + M_\phi D_{total}\Theta(1) \nonumber \\
& =  & M_\phi (D_{total}+1)\Theta(1) \label{eq:eq11}
\ee
for every $C\in\mathcal{C}-\{\mf\}$. 

Second, assume $C=\mf$ and $\gamma<\infty.$   Then,
\ben
\left|Q(T_C) - \LL T_C\right| & = & \left|{1\over2}Q(n^2) -n Q(n)\right|  \\
&= & \lambda_{total}+D_{\mf}\leq \lambda_{total}+D_{total},
\een
which implies \eqref{eq:eq11} for $C=\mf$. There are only finitely many terms of $T_C$ in $W$ ($2^K$ in total), and $r<1$: Lemma \ref{lem:QWLW} follows. \end{proof}

Now we offer Lemma \ref{lem:QQQQ1} and Lemma \ref{lem:LTSt}, both concerning upper bounds of  $\LL T_C$. They are applied for the proof of Lemma \ref{prop:LW}.

\begin{lemma} \label{lem:QQQQ1}
If $d$ is large enough, $Q(E_C)\leq \Theta(1), Q(\phi(H_C))\leq M_\phi\Theta(1)$, $\LL T_C\leq M_\phi\Theta(E_C)\leq M_\phi \Theta(n)$ for any $C\in\mathcal{C}$.
\end{lemma}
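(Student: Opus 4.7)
The plan is to verify the three bounds in turn, using the structure of the generator $Q$ and the shape of $\phi$. The first bound, $Q(E_C)\leq \Theta(1)$, is immediate: $E_C=\sum_{C'\in\mE_C}x_{C'}$ can increase only when a peer arrives with an initial type in $\mE_C$, which happens at aggregate rate $\lambda_{\mE_C}\leq\lambda_{total}=\Theta(1)$; every other transition either decreases $E_C$ (transitions from $\mE_C$ to $\mH_C$, or peer-seed departures when $C=\mf$) or leaves it unchanged.

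For the main step, $Q(\phi(H_C))\leq M_\phi\Theta(1)$, I would use that $-1\leq\phi'\leq 0$, so only events that decrease $H_C$ contribute positively to the drift of $\phi(H_C)$. These are of two types: (i) transitions $C'\to C'\cup\{i\}$ with both endpoints in $\mH_C$, each of which decreases $H_C$ by $1/(1-\mu/\gamma)$, and (ii) peer-seed departures (only when $C\neq\mf$, so that $\mf\in\mH_C$), each decreasing $H_C$ by $(\mu/\gamma)/(1-\mu/\gamma)$. Let $c$ denote the largest per-event decrease; $c$ depends only on $K,\mu,\gamma$, and the induced increase in $\phi$ per event is at most $c=\Theta(1)$. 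Choose $d\geq c$, which is the precise sense of ``$d$ large enough''. I then split on the size of $H_C$. If $H_C>2d+1/\beta+c$, then every reachable neighbor still has $H_C(\mathbf{x}')>2d+1/\beta$, so $\phi(H_C(\mathbf{x}))=\phi(H_C(\mathbf{x}'))=0$ and the contribution vanishes. Otherwise $H_C\leq 2d+1/\beta+c=\Theta(M_\phi)$, and the elementary bound $H_C\geq\frac{\mu/\gamma}{1-\mu/\gamma}x_{\mH_C}$ (and its analogue when $\gamma=\infty$, where peer seeds are absent and $K-|C'|\geq 1$ for $C'\in\mH_C$) gives $x_{\mH_C}=\Theta(M_\phi)$. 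The aggregate rate of events of types (i) and (ii) is then at most $\Gamma_{\mH_C,\mH_C}+\gamma x_\mf\leq\Theta(x_{\mH_C})=\Theta(M_\phi)$, using $D_{C'}\leq (U_s/n+\mu)x_{C'}=\Theta(x_{C'})$ for $n$ large and $x_\mf\leq x_{\mH_C}$ whenever $C\neq\mf$. Multiplying the rate by the per-event change yields $Q(\phi(H_C))\leq M_\phi\Theta(1)$. The case $C=\mf$ with $\gamma<\infty$ is trivial since $H_\mf\equiv 0$.

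The third bound follows by insertion into $\LL T_C$. For $C\neq\mf$, $\LL T_C=E_CQ(E_C)+\alpha E_CQ(\phi(H_C))\leq E_C\Theta(1)+\alpha E_C\cdot M_\phi\Theta(1)=M_\phi\Theta(E_C)$, using $M_\phi\geq 1$ and $\alpha<1$. For $C=\mf$ with $\gamma<\infty$, $\LL T_\mf=nQ(n)=n(\lambda_{total}-\gamma x_\mf)\leq\lambda_{total}n=\Theta(E_\mf)\leq M_\phi\Theta(E_\mf)$. Since $E_C\leq n$ always, $M_\phi\Theta(E_C)\leq M_\phi\Theta(n)$.

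The main obstacle is the second step. A naive $|Q(\phi(H_C))|\leq|Q(H_C)|$ estimate fails because the downloading rates inside $Q(H_C)$ scale with $n$ and can far exceed $M_\phi$. The saving observation is that $H_C$ simultaneously controls the value of $\phi$ and, via $x_{\mH_C}$, the rate of events that can move $\phi$: outside the flat tail of $\phi$ the drift vanishes, and inside, the rate is already $\Theta(M_\phi)$. The role of ``$d$ large enough'' is precisely to prevent a single event from jumping across this flat region.
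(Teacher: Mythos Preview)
Your proof is correct and follows essentially the same approach as the paper's. Both argue that only events decreasing $H_C$ contribute positively to $Q(\phi(H_C))$, bound the per-event increase of $\phi(H_C)$ by a constant depending only on $K,\mu,\gamma$, use ``$d$ large enough'' to ensure the drift vanishes once $H_C$ exceeds the support of $\phi$ by that constant, and then bound the aggregate rate of decreasing events by $\Theta(1)+\Theta(H_C)\leq M_\phi\Theta(1)$ on the remaining region. The only cosmetic difference is that the paper bounds the rate $\Gamma_{\mH_C,\mH_C}+D_\mf$ directly as $U_s+H_C(\mu+\gamma\mathbf{1}_{\{\gamma<\infty\}})$, whereas you pass through $x_{\mH_C}$ via $H_C\geq c_1 x_{\mH_C}$ and $D_{C'}\leq (U_s/n+\mu)x_{C'}$; these are equivalent up to constants.
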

\begin{proof}
The upper bound for the drift of $E_C$ is  obvious: $Q(E_C)\leq \lambda_{\mE_C}\leq \lambda_{total}$.
Next consider $Q(\phi(H_C))$. Since $H_\mf\equiv 0$, we restrict our attention to the case $C\neq \mf$. Because $\phi$ is a decreasing function, only the rate for $H_C$ to decrease contributes to the positive part in the drift of $\phi(H_C)$, so to consider an upper bound of $Q(\phi(H_C))$ it satisfies to consider the rates of transitions that decrease $H_C$. There are two ways $H_C$ can decrease:  peers with one type in $\mH_C$ becoming another type in $\mH_C$ -- with aggregate rate $\Gamma_{\mH_C,\mH_C}$, and peer seeds departing -- with rate $D_\mf$. Because the maximum that $\phi(H_C)$ can jump up is less than or equal to ${1+\mu/\gamma\over 1-\mu/\gamma}$, an upper bound for the drift of $\phi(H_C)$ is
\ben
Q(\phi(H_C)) &\leq &  {1+\mu/\gamma\over 1-\mu/\gamma}(\Gamma_{\mH_C,\mH_C}+D_\mf) \\
&\leq & {1+\mu/\gamma\over 1-\mu/\gamma}\left[U_s+H_C\left(\mu+\gamma\mathbf{1}_{\{\gamma<\infty\}}\right)\right] \\
&=    &  \Theta(1)+H_C\Theta(1).
\een

We can choose $d$ large enough, i.e. $d>{1+\mu/\gamma\over 1-\mu/\gamma}$, so $M_\phi>2d+1/\beta + {1+\mu/\gamma\over 1-\mu/\gamma}$. Thus $Q(\phi(H_C))$ vanishes when $H_C>M_\phi$, because $\phi(H_C)$ vanishes when $H_C>2d+1/\beta$ and the jump size of $H_C$ is bounded below by $-{1+\mu/\gamma\over 1-\mu/\gamma} \geq -d$. Hence 
$$Q(\phi(H_C))\leq \Theta(1)+M_\phi\Theta(1)\in M_\phi\Theta(1),$$
because $M_\phi>1$.

Finally, the bound on $\LL T_C$ follows from the other two bounds already proved. Hence, Lemma \ref{lem:QQQQ1} is proved. \end{proof}

\begin{lemma} \label{lem:LTSt}
If $d$ is large enough, $1-\alpha,\epsilon M_\phi, \beta$ are small enough and $\beta\left({K+\mu/\gamma\over 1-\mu/\gamma}\right)^2\leq \frac{1}{\alpha}-1$, for any $S\in\mathcal{C}-\{\mathcal{F}\}$ and any nonzero state $\mathbf{x}$ such that $x_S/n>1-\epsilon$,
\be
\LL T_S\leq {1\over2}\triangle_S E_S. \label{eq:toshow}
\ee
\end{lemma}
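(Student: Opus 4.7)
The goal is to show $\LL T_S = E_S Q(E_S) + \alpha E_S Q(\phi(H_S)) \leq \tfrac{1}{2}\triangle_S E_S$ under the heavy-load hypothesis $x_S/n > 1-\epsilon$, where $S\in\mathcal{C}\setminus\{\mathcal{F}\}$. Since $\triangle_S < 0$ by \eqref{eq:repcd}, the right side is a strictly negative multiple of $E_S$. My plan is to extract the leading term matching $\triangle_S E_S$ from the drift and absorb all residual errors into the half-margin $\tfrac12|\triangle_S|E_S$ using the slacks $1-\alpha$, $\epsilon M_\phi$, $\beta$, and the parameter $d$.

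First I would bound $Q(E_S)=\lambda_{\mathcal{E}_S}-\Gamma_{\mathcal{E}_S,\mathcal{H}_S}$ by restricting the lower bound on $\Gamma_{\mathcal{E}_S,\mathcal{H}_S}$ to uploads received by type-$S$ peers: under the heavy-load condition the fixed seed and every peer in $\mathcal{H}_S$ contacts a type-$S$ target with probability at least $1-\epsilon$, and in that case the piece uploaded is necessarily in $\mathcal{F}-S$, so a direct count via \eqref{eq:defGamma} gives $\Gamma_{\mathcal{E}_S,\mathcal{H}_S}\geq \tfrac{x_S}{n}(U_s+\mu x_{\mathcal{H}_S})$. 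For $Q(\phi(H_S))$, convexity of $\phi$ together with the per-step bound $\tfrac{K+\mu/\gamma}{1-\mu/\gamma}$ on jumps of $H_S$ yields $Q(\phi(H_S))\leq -Q(H_S)+R$ in the linear region, where $R\geq 0$ is a convexity correction bounded by $\tfrac{\beta}{2}(\tfrac{K+\mu/\gamma}{1-\mu/\gamma})^2$ per transition. Expanding $Q(H_S)=Q(G_S)/(1-\mu/\gamma)$ produces four contributions: $+\lambda^*_{\mathcal{H}_S}$ from arrivals, $+(K-|C|-1+\mu/\gamma)\Gamma_{C,C\cup\{i\}}$ for each $\mathcal{E}_S\to\mathcal{H}_S$ transition (dominated by the $C=S$ term $(K-|S|-1+\mu/\gamma)\tfrac{x_S}{n}(U_s+\mu x_{\mathcal{H}_S})$), and $-\Gamma_{\mathcal{H}_S,\mathcal{H}_S}-\mu x_{\mathcal{F}}\mathbf{1}_{\gamma<\infty}$. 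The decisive identity is that $1+\tfrac{\alpha(K-|S|-1+\mu/\gamma)}{1-\mu/\gamma}$ simplifies at $\alpha=1$ to $\tfrac{K-|S|}{1-\mu/\gamma}\geq\tfrac{1}{1-\mu/\gamma}$, which is exactly the factor needed so that the combined coefficient of $\tfrac{x_S}{n}(U_s+\mu x_{\mathcal{H}_S})$ in $\LL T_S$ together with $E_S\lambda_{\mathcal{E}_S}$ and $-\tfrac{\alpha E_S\lambda^*_{\mathcal{H}_S}}{1-\mu/\gamma}$ reproduces $\triangle_S E_S$.

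The subtlety is that the positive residual $\tfrac{\alpha E_S(\Gamma_{\mathcal{H}_S,\mathcal{H}_S}+\mu x_{\mathcal{F}})}{1-\mu/\gamma}$ has $\Gamma_{\mathcal{H}_S,\mathcal{H}_S}$ potentially scaling like $\mu x_{\mathcal{H}_S-\{\mathcal{F}\}}$, which by itself could reach order $\epsilon n$; I expect to resolve this by a case analysis on $H_S$. When $H_S\leq 2d$ (linear region of $\phi$), the definition of $H_S$ forces $x_{\mathcal{H}_S}$ to be bounded by a constant depending only on $d$, $\mu$, $\gamma$, so the residual is linear in $E_S$ with a controllable constant. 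When $H_S>2d$ (quadratic or constant region of $\phi$), the direct negative contribution $-\tfrac{x_S}{n}E_S\mu x_{\mathcal{H}_S}$ inside $E_SQ(E_S)$ now grows large and dominates both the now-small $\alpha E_SQ(\phi(H_S))$ and the positive residual. The hypothesis $\beta(\tfrac{K+\mu/\gamma}{1-\mu/\gamma})^2\leq\tfrac{1}{\alpha}-1$ is precisely the calibration needed to keep the convexity correction $R$ (which is only active near/inside the quadratic region) within the $(1-\alpha)$-slack in the main coefficient identity. Choosing $d$ large enough (larger than the maximum $H_S$-jump so the linear region truly behaves linearly under single-step jumps from its interior), and then taking $1-\alpha$, $\epsilon M_\phi$, $\beta$ small with that compatibility, the four residual error types each contribute at most $\Theta(1-\alpha+\epsilon M_\phi+\beta)\cdot E_S$ and together fit within $\tfrac12|\triangle_S|E_S$. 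The main obstacle is precisely this cross-region bookkeeping—matching the convexity correction against the $\alpha$-slack, and matching the $H_S$-dependent bound on $x_{\mathcal{H}_S}$ against the $\mu x_{\mathcal{H}_S}$-driven negative term in $E_SQ(E_S)$—to ensure the positive residual is always dominated regardless of where $H_S$ sits.
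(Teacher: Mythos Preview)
Your overall architecture matches the paper's: decompose $\LL T_S=E_S[Q(E_S)+\alpha Q(\phi(H_S))]$, expand $Q(H_S)$, use the Lipschitz bound on $\phi'$ to control the second-order correction, and calibrate via $\beta\bigl(\tfrac{K+\mu/\gamma}{1-\mu/\gamma}\bigr)^2\leq\tfrac{1}{\alpha}-1$. But your treatment of the ``positive residual'' $\tfrac{\alpha}{1-\mu/\gamma}(\Gamma_{\mathcal{H}_S,\mathcal{H}_S}+\mu x_{\mathcal{F}})$ has a real gap.

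In the case $H_S\leq 2d$ you bound $x_{\mathcal{H}_S}$ by a constant depending on $d$ and call the residual ``controllable.'' It is not: that constant is $\Theta(d)$, and $d$ must be taken large, so the residual is $E_S\cdot\Theta(d)$, which cannot be absorbed into $\tfrac12|\triangle_S|E_S$. In the case $H_S>2d$ you assert $\alpha E_SQ(\phi(H_S))$ is ``now small''; this is false throughout the quadratic region $2d<H_S\leq 2d+1/\beta$, where $|\phi'|$ ranges over all of $[0,1]$ and $Q(\phi(H_S))$ is comparable to $Q(H_S)$. Moreover, in that region your proposed domination fails on coefficients: the residual carries a factor $\tfrac{|\phi'|\alpha}{1-\mu/\gamma}$ in front of $\mu x_{\mathcal{H}_S}$, while the negative term from $Q(E_S)$ carries only $\tfrac{x_S}{n}\approx 1$; for $|\phi'|$ near $1$ and $\alpha$ near $1$ the former exceeds the latter.

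The missing idea is a cancellation, not a direct bound. The paper groups $b_1:=D_S-(\Gamma_{\mathcal{H}_S,\mathcal{H}_S}+\mu x_{\mathcal{F}})$ and shows, from $D_S\geq(1-\epsilon)(U_s+\mu x_{\mathcal{H}_S})$ and $\Gamma_{\mathcal{H}_S,\mathcal{H}_S}+\mu x_{\mathcal{F}}\leq \epsilon U_s+\mu x_{\mathcal{H}_S}$, that $b_1\geq U_s-\epsilon\,\Theta(1+x_{\mathcal{H}_S})$; the $\mu x_{\mathcal{H}_S}$ contributions cancel inside $Q(H_S)$, leaving $Q(H_S)\geq -h_1-\epsilon M_\phi\Theta(1)$ with $h_1:=D_S-\tfrac{U_s+\lambda^*_{\mathcal{H}_S}}{1-\mu/\gamma}$. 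Two further points then handle the entire range $H_S\leq M_\phi$ at once: first, $h_1>0$ whenever $H_S>d$ (from $D_S\geq H_S\Theta(1)$), so $-\phi'(H_S)h_1\leq h_1$ holds uniformly (since $\phi'=-1$ for $H_S\leq d$ and $-1\leq\phi'\leq 0$ otherwise); second, the hypothesis on $\beta$ gives the convexity term the bound $b_2\leq(\tfrac{1}{\alpha}-1)D_S+\beta\Theta(1)$. Adding, $\alpha Q(\phi(H_S))\leq D_S-\tfrac{\alpha}{1-\mu/\gamma}(U_s+\lambda^*_{\mathcal{H}_S})+\epsilon M_\phi\Theta(1)+\beta\Theta(1)$, and the $D_S$ exactly cancels the $-D_S$ in $Q(E_S)\leq\lambda_{\mathcal{E}_S}-D_S$, yielding $\triangle_S+(1-\alpha)\Theta(1)+\epsilon M_\phi\Theta(1)+\beta\Theta(1)$. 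The case $H_S>M_\phi$ is then trivial because $Q(\phi(H_S))$ vanishes and $D_S\geq M_\phi\Theta(1)$. Your decomposition can be repaired by importing exactly this $b_1$-cancellation and the uniform estimate $-\phi'(H_S)h_1\leq h_1$, replacing your two-region case split.
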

\begin{remark}  \label{remark_Lyapunov_function}
Recall that $\LL T_S = E_S[Q(E_S)+\alpha Q(\phi(H_S))],$ where the term $E_SQ(E_S)$ can be traced back to
the quadratic term $\frac{1}{2}E_S^2$ of $W,$ and $\alpha E_S Q(\phi(H_S))$ can be traced back to the
term $\alpha E_SQ(\phi(H_S))$ of $W.$
Before giving the proof of Lemma \ref{lem:LTSt}, we describe why the term $\alpha E_SQ(\phi(H_S))$ is needed
and how it helps $\LL T_S$ be negative. It has been discussed that the worst distribution of heavy load is when the heavy load aggregates in a type with only one missing piece. Consider the case $|S|=K-1$. Notice that $E_SQ(E_S) = E_S(\lambda_{\mE_S}-\Gamma_{\mE_S,\mH_S})$ and $\Gamma_{\mE_S,\mH_S}\geq D_S \geq {x_S\over n}[U_s+H_S\mu{1-\mu/\gamma\over K+\mu/\gamma}]$. Here we assume ${x_S\over n}\geq 1-\epsilon$. So  $\Gamma_{\mE_S,\mH_S}$ increases almost proportionally to  $H_S$. When $H_S$ is larger than $d$ for $d$ sufficiently large, $\Gamma_{\mE_S,\mH_S}$ is larger than $\lambda_{\mE_S}$, so $E_SQ(E_S)$ is negative and is bounded above by $-\Theta(E_S)=-\Theta(n)$. 
But when $H_S$ is smaller than $d$,  $\Gamma_{\mE_S,\mH_S}$ can be smaller than $\lambda_{\mE_S}$, so $E_SQ(E_S)$ is positive and is lower bounded by $\Theta(E_S)=\Theta(n)$, which has the wrong sign. The term $\alpha E_SQ(\phi(H_S))$ is chosen so that $\alpha Q(\phi(H_S))$  balances out the coefficient $\lambda_{\mE_S}-\Gamma_{\mE_S,\mH_S}$ when $H_S$ is small, so that $\LL T_S$ is still  negative and upper bounded by $-\Theta(E_S)$. 

The definition of $H_S$ implies that, when $x_S$ is close to $n$, $H_S$ is the mean number of type $S$ peers that will be helped by the helping peers, which are the ones in $\mH_S.$ (By saying a peer is helped, we mean a piece is uploaded to the peer). In other words, $H_S$ is the stored potential for helping type $S$ peers.  As type $S$ peers are helped by the helping peers, the potential decreases, with the magnitude of decrease equal to the number of type $S$ peers which are helped. So if we only consider the piece transmissions involving one peer of type $S$ and one peer of type in $\mH_S$, the downward drift of $H_S$ has magnitude less than or equal to the downward drift of $E_S$. If we only consider the external arrivals and the uploads from the fixed seed, the terms in the drift of $H_S$ are ${1\over1-\mu/\gamma}\left[\sum_{C:C\in \mH_S}(K-|C|+\mu/\gamma)\lambda_C +U_s\mu/\gamma\right]$, and the  terms in the drift of $E_S$ are $\lambda_{\mE_S}-U_s$; the former is larger than the latter precisely
because of \eqref{eq:repcd}. Finally, $H_S$ has a bit more downward drift due to peers other than type $S$ peers uploading to
peers in $\mH_S$, but that is small for $\epsilon$ sufficiently small. Combining the downward and the other drifts, we  see that the drift of $H_S$ is approximately the same as the drift of $E_S$, with the drift of $H_S$  a little greater. The difference of the two drifts is $\triangle_S$, defined in \eqref{eq:repcd}.
 Also, when $H_S$ is small, the function $\phi$ at $H_S$ has derivative $-1$. Thus the coefficient of $E_S$ in $\LL T_S$, which is $Q(E_S)+\alpha Q(\phi(H_S))$, is negative because $\alpha$ is close to $1$, so  $\LL T_S$ is upper bounded by $-\Theta(E_S)=-\Theta(n)$.
 
 In summary, the above  explains the reason we included the term $E_S\phi(H_S)$ in the Lyapunov function; it balances out the positive drift of ${1\over2}E_S^2$ when $H_S$ is small. 
\end{remark}
\begin{proof}
Now the detailed proof of Lemma \ref{lem:LTSt} is given. Consider a nonzero state $\mathbf{x}$ of type \Rmnum{1}, with $ S\in\mathcal{C}-\{\mf\}$, $x_S/n>1-\epsilon$. Recall that \eqref{eq:repcd} is assumed to hold; $\triangle_S<0$.  We begin
with three observations.

First, consider a lower  bound for $Q(H_S)$:
\begin{eqnarray}
Q(H_S)&=& {1\over 1-\mu/\gamma}\sum_{C':C'\in\mH_S}(K-|C'|+\mu/\gamma)Q(x_{C'}) \nonumber\\
&\geq & {1\over 1-\mu/\gamma}\left[\lambda_{\mH_S}^\ast+D_S{\mu\over\gamma} - \Gamma_{\mH_S,\mH_S} - D_{\mf}{\mu\over\gamma}\right] \nonumber \\
&=& {1\over 1-\mu/\gamma}\left(\lambda_{\mH_S}^\ast + b_1\right) -D_S, \label{eq:qhs}
\end{eqnarray}
where $b_1: = D_S - \Gamma_{\mH_S,\mH_S} - x_{\mf}\mu$. In view of
\be
D_S & \geq & (1-\epsilon)(U_s +x_{\mH_S}\mu ) \nonumber  \\
& \geq &  U_s+x_{\mH_S}\mu -\epsilon[\Theta(1)+x_{\mH_S}\Theta(1)]  \label{eq:DSg}
\ee
and
\be
\Gamma_{\mH_S,\mH_S} + x_{\mf}\mu & \leq &  \epsilon U_s+ x_{\mH_S}\mu , \label{eq:GHH}
\ee
it follows that
\be
b_1 \geq U_s-\epsilon[\Theta(1)+x_{\mH_S}\Theta(1)].  \label{eq:dds}
\end{eqnarray}
Combining \eqref{eq:qhs} with \eqref{eq:dds}, yields:
\begin{eqnarray}
Q(H_S)\geq  - h_1 -\epsilon[\Theta(1)+x_{\mH_S}\Theta(1)], \label{eq:QHS}\\
h_1:= D_S -{1\over 1-\mu/\gamma}\left(\lambda_{\mH_S}^\ast + U_s\right).  \label{eq:defh1}
\end{eqnarray}

Second,
\be
D_S\geq x_{\mH_S}\mu (1-\epsilon) = x_{\mH_S}\Theta(1) = H_S\Theta(1), \label{eq:D_S}
\ee
because $x_{\mH_S}\leq H_S\leq {K+\mu/\gamma\over 1-\mu/\gamma}x_{\mH_S}$ and $\epsilon<{1\over2}$.

Third, substituting \eqref{eq:D_S} into \eqref{eq:defh1} yields that if $d$ is sufficiently
large, then  $h_1 \geq d\Theta(1) - \Theta(1)$ whenever $H_S >d.$  Therefore, if $d$ is sufficiently large,
\be
h_1 > 0 ~~~\mbox{whenever}~~~ H_S > d.      \label{eq;allh1}
\ee

The remainder of the proof is divided into two, according to the value of $H_S.$
\begin{itemize}
\item {\bf $\bf H_S \leq M_\phi$}: Under this condition, $x_{\mH_S}\leq M_\phi$ and $M_\phi>1$,  so \eqref{eq:QHS}  implies:
\begin{eqnarray}
Q(H_S)\geq -h_1 -\epsilon M_\phi \Theta(1).  \label{eq:hs2}
\end{eqnarray}

Because $\phi'$ exists and is Lipschitz continuous with Lipschitz constant $\beta,$
and because the magnitudes of the jumps of $H_S$ are bounded by
${K+\mu/\gamma\over 1-\mu/\gamma}$, Lemma \ref{lem:expansion} yields
\be
Q(\phi(H_S)) \leq \phi'(H_S)Q(H_S) + b_2 \label{eq:phiHS}
\ee
where
\be
b_2 := {\beta\over2}\left({K+\mu/\gamma\over 1-\mu/\gamma}\right)^2\times~~~~~~~~~~~~~~~~~~~~\nonumber \\
 (\lambda_{\mH_S}+\Gamma_{\mE_S,\mH_S}+\Gamma_{\mH_S,\mH_S}+x_{\mf}\mu).  \label{eq:defb}
\ee
Upper bounds for the terms in the right hand side of  \eqref{eq:phiHS} are found next.
First, a bound for $b_2$ is found. By \eqref{eq:DSg} and \eqref{eq:GHH},
\begin{eqnarray}
\Gamma_{\mE_S,\mH_S} & \leq &  D_S + \epsilon(U_s+x_{\mH_S}\mu)  \nonumber \\
& \leq  & D_S+\epsilon M_\phi \Theta(1); \label{eq:gg}  \\
\Gamma_{\mH_S,\mH_S}+x_{\mf}\mu & \leq &  U_s+x_{\mH_S}\mu  \nonumber  \\
& \leq &  D_S+\epsilon M_\phi \Theta(1). \label{eq:g}
\end{eqnarray}
Substituting \eqref{eq:gg} and \eqref{eq:g} into the right side of \eqref{eq:defb}, yields
\begin{eqnarray}
b_2 & \leq & \beta\Theta(1) +\beta\epsilon M_\phi \Theta(1) +\nonumber \\&&\beta\left({K+\frac{\mu}{\gamma}\over 1-\frac{\mu}{\gamma}}\right)^2 D_S \label{eq:b21}\\
&\leq & \left(\frac{1}{\alpha}-1\right)D_S +\beta\Theta(1), \label{eq:b22}
\end{eqnarray}
where to obtain \eqref{eq:b22} from \eqref{eq:b21}, we assume  $\beta\left({K+\mu/\gamma \over 1-\mu/\gamma}\right)^2 \leq \frac{1}{\alpha}-1$ and $\epsilon M_\phi <1$.

Second, a bound for $\phi'(H_S)Q(H_S)$ is found.  Taking into account that $-1 \leq \phi' \leq 0,$   multiply both sides of  \eqref{eq:hs2} by $\phi'(H_S)$ and use the fact $\phi'(H_s)=-1$ for $H_s \leq d,$ and \eqref{eq;allh1}, to  obtain:
\be
\phi'(H_S)Q(H_S)& \leq&  -\phi'(H_S)h_1 + \epsilon M_\phi \Theta(1) \nonumber \\
& \leq  &  h_1  +\epsilon M_\phi \Theta(1). \label{eq:h1}
\ee
Substituting \eqref{eq:defh1}, \eqref{eq:b22} and \eqref{eq:h1}  into \eqref{eq:phiHS} yields
\begin{eqnarray}
Q(\phi(H_S) ) & \leq & \frac{1}{\alpha} D_s  - {1  \over 1-\mu/\gamma}\left(\lambda_{\mH_S}^\ast + U_s\right) \nonumber \\
&& +\epsilon M_\phi \Theta(1) +\beta\Theta(1) .    \label{eqQES_bound} 
\end{eqnarray}

We obtain a bound on $Q(E_S)+\alpha Q(\phi(H_S))$, the coefficient of $E_S$ in $\LL T_S$, using
 \eqref{eqQES_bound} and the facts $Q(E_S)\leq \lambda_{\mE_S}-D_S$  and $\alpha<1$, as follows:
\begin{eqnarray}
\lefteqn{Q(E_S)+\alpha Q(\phi(H_S)) \nonumber}\\
&\leq& \lambda_{\mE_S} - {\alpha\over 1-\mu/\gamma}\left(\lambda_{\mH_S}^\ast + U_s\right) \nonumber \\
&& ~~~~ +\epsilon M_\phi \Theta(1) +\beta\Theta(1)\nonumber\\
&\leq&\triangle_S +(1-\alpha)\Theta(1) \nonumber \\
&& ~~~~+ \epsilon M_\phi \Theta(1) +\beta\Theta(1). \label{eq:qaq} 
\end{eqnarray}

Because $\triangle_S<0$, if $1-\alpha, \epsilon M_\phi, \beta$ are close to $0$, the last three terms in \eqref{eq:qaq} can be made
small compared to $|\triangle_S|$,  so $Q(E_S)+\alpha Q(\phi(H_S))\leq {1\over2}\triangle_S$,
which implies \eqref{eq:toshow}.

\item {\bf $\bf H_S> M_\phi$}:  To take care of this case, assume  $d>{K+\mu/\gamma\over 1-\mu/\gamma}$, so
$M_\phi>2d+1/\beta + {K+\mu/\gamma\over 1-\mu/\gamma}$. Hence $Q(\phi(H_S))$ vanishes for $H_S$ in this range.
By \eqref{eq:D_S},
\ben
Q(E_S)+\alpha Q(\phi(H_S)) & \leq &  \lambda_{\mE_S} - D_S\\
& \leq&  \Theta(1) - M_\phi \Theta(1) \\&<& {1\over2}\triangle_S ,
\een
if $d$ is large enough so that $M_\phi$ is large enough.. Therefore \eqref{eq:toshow} holds.
\end{itemize}
The proof of Lemma \ref{lem:LTSt} is complete.
\end{proof}

Lemmas \ref{lem:QQQQ1} and \ref{lem:LTSt} will be used to prove the following lemma.
\begin{lemma} \label{prop:LW}
If $d$ is large enough, $(1-\alpha),\beta, rM_\phi, \epsilon M_\phi r^{-K}$ are small enough, and $\beta\left({K+\mu/\gamma\over 1-\mu/\gamma}\right)^2\leq \frac{1}{\alpha}-1$,
\begin{itemize} 
\item[(a)] On class \Rmnum{1}, $\LL W\leq -r^{K} \Theta(n)$;
\item[(b)] On class \Rmnum{2}, $\LL W\leq -r^K \epsilon^3\Theta(n^2)+M_\phi \Theta(n)$.
\end{itemize}
\end{lemma}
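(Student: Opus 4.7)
Fix a state in class I and let $S\in\mathcal{C}-\{\mf\}$ satisfy $x_S/n>1-\epsilon.$ Under the parameter hypotheses of the lemma, Lemma \ref{lem:LTSt} applies and gives $\LL T_S\leq\tfrac{1}{2}\triangle_S E_S.$ Because there are only finitely many $S\in\mathcal{C}-\{\mf\}$ and each $\triangle_S<0$ by \eqref{eq:repcd}, the quantity $\max_S\triangle_S$ is a network-dependent negative constant, hence $\tfrac{1}{2}\triangle_S E_S\leq -\Theta(E_S)\leq -\Theta(n)$ (using $E_S\geq x_S\geq (1-\epsilon)n\geq n/2$). Multiplying by $r^{|S|}$ and using $r^{|S|}\geq r^K$ on the negative quantity yields $r^{|S|}\LL T_S\leq -r^K\Theta(n).$ To control $\sum_{C\neq S}r^{|C|}\LL T_C,$ I split the indices by their relation to $S.$ For $C\supsetneq S$ one has $|C|\geq|S|+1,$ so $r^{|C|}\leq r\cdot r^{|S|},$ and Lemma \ref{lem:QQQQ1} gives $\LL T_C\leq M_\phi\Theta(n);$ this batch contributes at most $r\cdot r^{|S|}M_\phi\Theta(n).$ For $S\not\subseteq C$ one has $E_C\leq n-x_S\leq\epsilon n,$ so Lemma \ref{lem:QQQQ1} gives $\LL T_C\leq M_\phi\Theta(\epsilon n),$ and this batch contributes at most $M_\phi\epsilon\Theta(n).$ Summing,
\[
\LL W\leq -r^K\Theta(n)+r^{|S|+1}M_\phi\Theta(n)+M_\phi\epsilon\Theta(n),
\]
and the first term dominates once $rM_\phi$ and $\epsilon M_\phi r^{-K}$ are small enough, which is precisely the hypothesis.

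\textbf{Plan for (b).} Fix a state in class II and let $C_1,C_2$ be two heavy types with $x_{C_i}/n>\epsilon/2^K.$ Two subcases arise. If $C_1=C_2=\mf$ (which forces $\gamma<\infty$), then $x_\mf>\epsilon n/2^K,$ so $Q(n)=\lambda_{total}-\gamma x_\mf\leq \Theta(1)-\Theta(\epsilon n),$ giving $\LL T_\mf=nQ(n)\leq \Theta(n)-\Theta(\epsilon n^2).$ Weighted by $r^K$ and combined with $\sum_{C\neq\mf}r^{|C|}\LL T_C\leq M_\phi\Theta(n)$ from Lemma \ref{lem:QQQQ1}, this already implies the claim (with $\epsilon$ in place of $\epsilon^3,$ which is stronger). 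If instead $C_1,C_2$ are distinct, relabel so that $C_2\not\subseteq C_1,$ which is possible because the two are distinct. Then $C_2\in\mH_{C_1},$ and the explicit form $D_{C_1}=(x_{C_1}/n)(U_s+\mu x_{\mH_{C_1}})$ yields $D_{C_1}\geq \mu x_{C_1}x_{C_2}/n\geq \Theta(\epsilon^2 n).$ Since $E_{C_1}\geq x_{C_1}\geq\epsilon n/2^K,$ the cross-product $E_{C_1}D_{C_1}\geq\epsilon^3\Theta(n^2),$ and so
\[
E_{C_1}Q(E_{C_1})\leq E_{C_1}\lambda_{\mE_{C_1}}-E_{C_1}D_{C_1}\leq\Theta(n)-\epsilon^3\Theta(n^2).
\]
Combined with $\alpha E_{C_1}Q(\phi(H_{C_1}))\leq M_\phi\Theta(n)$ from Lemma \ref{lem:QQQQ1}, this gives $\LL T_{C_1}\leq-\epsilon^3\Theta(n^2)+M_\phi\Theta(n).$ Multiplying by $r^{|C_1|},$ using $r^{|C_1|}\geq r^K$ on the negative part and $r^{|C_1|}\leq 1$ on the positive part, produces $r^{|C_1|}\LL T_{C_1}\leq -r^K\epsilon^3\Theta(n^2)+M_\phi\Theta(n).$ The other $2^K-1$ terms $r^{|C|}\LL T_C$ are each $\leq M_\phi\Theta(n)$ by Lemma \ref{lem:QQQQ1}, and summing yields the stated bound.

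\textbf{Main obstacle.} The real difficulty is the bookkeeping: five parameters $(r,d,\beta,\alpha,\epsilon)$ are being tuned simultaneously, and the geometric weighting $r^{|C|}$ is engineered so that a single dominant term overpowers the $M_\phi\Theta(n)$ noise from the other $2^K-1$ types. Keeping track of which smallness condition neutralizes which leakage is the crux: $rM_\phi$ handles supersets of the heavy type $S,$ while $\epsilon M_\phi r^{-K}$ handles the ``off-support'' types where $E_C\leq\epsilon n.$ A subtler point, which must be respected because the $\Theta$ notation forbids $\epsilon$-dependent thresholds on $n,$ is that one cannot collapse $E_{C_1}\lambda_{\mE_{C_1}}$ into the $-E_{C_1}D_{C_1}$ term in part (b); instead it must be absorbed into the $M_\phi\Theta(n)$ remainder. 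Finally, the regime $\gamma=\infty$ requires only minor modifications, since $T_\mf$ is then absent from $W$ and the $C_1=C_2=\mf$ subcase of class II cannot occur.
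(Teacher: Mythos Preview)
Your approach is essentially the paper's, including the same reliance on Lemmas \ref{lem:QQQQ1} and \ref{lem:LTSt}; your index split in part (a) ($C\supsetneq S$ versus $S\not\subseteq C$) is a cosmetic variant of the paper's split ($|C|>|S|$ versus $|C|\leq|S|,C\neq S$), and your part (b) is the paper's argument with the roles of $C_1,C_2$ swapped.

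There is, however, one genuine bookkeeping slip in part (a). You weaken $r^{|S|}\LL T_S\leq -r^{|S|}\Theta(n)$ to $-r^K\Theta(n)$ \emph{before} comparing with the superset contribution, and then write
\[
\LL W\leq -r^K\Theta(n)+r^{|S|+1}M_\phi\Theta(n)+M_\phi\epsilon\Theta(n).
\]
For the middle term to be dominated by the first you would need $r^{|S|+1}M_\phi\ll r^K$, i.e.\ $M_\phi r^{|S|+1-K}$ small; when $|S|$ is small (e.g.\ $S=\emptyset$) this is $M_\phi r^{1-K}$ small, which is \emph{not} implied by the hypothesis ``$rM_\phi$ small.'' The fix is the one the paper uses: compare at the $r^{|S|}$ scale first. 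Keep $r^{|S|}\LL T_S\leq -r^{|S|}\Theta(n)$, combine it with $r^{|S|+1}M_\phi\Theta(n)=rM_\phi\cdot r^{|S|}\Theta(n)$ so that, once $rM_\phi$ is small enough, the pair is $\leq -\tfrac12 r^{|S|}\Theta(n)\leq -\tfrac12 r^K\Theta(n)$; only then absorb the $M_\phi\epsilon\Theta(n)$ remainder using $\epsilon M_\phi r^{-K}$ small. With that reordering your argument goes through, and the stated hypotheses suffice.
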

\begin{proof} First consider Lemma \ref{prop:LW}(a). Since there are only finitely many types, we can fix a set $S\in \mathcal{C}-\{\mf\}$ and consider the set of class \Rmnum{1} states $\mathbf{x}$ for which $x_S/n>1-\epsilon$. Since $\epsilon\in (0,{1\over2})$, $E_S>{1\over2}n$. By assumption in this section, $\triangle_S<0.$   By Lemma \ref{lem:LTSt},
\be
\LL T_S \leq {1\over4}\triangle_S n\in -\Theta(n). \label{eq:TS4}
\ee
For type $C$ with $|C|> |S|$,  Lemma \ref{lem:QQQQ1} and \eqref{eq:TS4} imply
\be
r^{|C|}\LL T_C  \leq rM_\phi r^{|S|} \Theta(n) < 2^{-K-1}r^{|S|} \big|\LL T_S\big|.  \label{eq:r1}
\ee
if $r M_\phi$ is chosen to be small enough.

For type $C$ with $|C|\leq |S|$ but $C\neq S$, $E_C\leq \epsilon n$; Lemma \ref{lem:QQQQ1} and \eqref{eq:TS4} imply
\be
r^{|C|}\LL T_C &\leq&  r^{|C|}M_\phi \Theta(E_C)\leq \epsilon M_\phi r^{-K} r^{|S|}  \Theta(n) \nonumber \\
& < &  2^{-K-1}r^{|S|} \big|\LL T_S\big| \label{eq:r2}
\ee
if $\epsilon M_\phi r^{-K}$ is chosen to be small enough. 

Equations \eqref{eq:r1} and \eqref{eq:r2} imply that
\ben
\LL W &=& r^{|S|}\LL T_S + \sum_{C: |C|>|S|}r^{|C|}\LL T_C+\\
&&~~~~~\sum_{C:|C|\leq |S|, C\neq S} r^{|C|}\LL T_C \nonumber\\
&\leq& r^{|S|}\LL T_S + {1\over2}r^{|S|}\big|\LL T_S\big| \\
&\leq & {1\over8}r^{|S|}\triangle_S n \leq -r^K \Theta(n),  
\een
which proves Lemma \ref{prop:LW}(a).

Next consider Lemma \ref{prop:LW}(b). First, suppose $C_1\not\subseteq C_2$ and consider the set of class \Rmnum{2} states $\mathbf{x}$ such that $x_{C_1}/n>\eta, x_{C_2}/n>\eta$, where $\eta=\epsilon/2^K$. For such states:
\be
\Gamma_{\mE_{C_2},\mH_{C_2}} \geq D_{C_2}\geq {x_{C_2}\over n} x_{C_1}\mu \geq \mu\eta^2 n \in\epsilon^2\Theta(n). \label{eq:GC_2}
\ee
Since $E_{C_2}\geq x_{C_2}\geq \eta n$, \eqref{eq:GC_2} implies
\be
E_{C_2}Q(E_{C_2}) & = & E_{C_2}(\lambda_{\mE_{C_2}}-\Gamma_{\mE_{C_2},\mH_{C_2}}) \nonumber \\
& \leq &  -\epsilon^3 \Theta(n^2) + \Theta(n). \label{eq:EQC2}
\ee
Lemma \ref{lem:QQQQ1} indicates that $E_{C_2}Q(\phi(H_{C_2}))\leq M_\phi\Theta(n)$, so \eqref{eq:EQC2} implies
\be
\LL T_{C_2} & = &  E_{C_2}Q(E_{C_2}) +\alpha E_{C_2}Q(\phi(H_{C_2})) \nonumber \\
& \leq &  -\epsilon^3\Theta(n^2) + M_\phi \Theta(n). \label{eq:TC2}
\ee
Second, consider the set of class \Rmnum{2} states $\mathbf{x}$ such that $x_{\mathcal{F}}/n>\eta$, where $\eta = \epsilon/2^K$. If $\gamma=\infty$, this set is empty, so suppose $\gamma<\infty$. For such states,
\be
 \LL T_{\mathcal{F}} &= & nQ(n)=n(\lambda_{total}-\gamma x_{\mathcal{F}} )  \nonumber  \\
& \leq & n(\lambda_{total}- \eta \gamma n) \nonumber \\
& = &   -\epsilon\Theta(n^2) +\Theta(n). \label{eq:TF}
\ee

Recall that Lemma \ref{lem:QQQQ1} implies for any $C$, $\LL T_C\leq M_\phi \Theta(n)$.
Therefore, for either condition $C_1\not\subseteq C_2$ or $C_1=C_2=\mathcal{F}$,  \eqref{eq:TC2} and \eqref{eq:TF} imply that, over the set of all class \Rmnum{2} states,
\ben
\LL W & \leq & r^{|C_2|}\LL T_{C_2} +\sum_{C:C\neq C_2}\LL T_C \\
& \leq &  -r^K \epsilon^3\Theta(n^2) + M_\phi \Theta(n),
\een
which proves Lemma \ref{prop:LW}(b). 
\end{proof}

With Lemmas \ref{lem:QWLW} and \ref{prop:LW}, Theorem \ref{thm:main}(b) in the case $0<\mu<\gamma\leq \infty$ can be proved:

\begin{proofof} {\em Theorem \ref{thm:main}(b):} On class \Rmnum{1}, 
\begin{eqnarray*}
D_{total} &\leq & D_S + \sum_{C:C\neq S}D_C\\
& \leq & U_s+x_{\mH_S}\mu + \sum_{C:C\neq S} {x_C\over n}(U_s+n\mu) \\
& \leq  & 2 (U_s+\epsilon n\mu) =\Theta(1)+\epsilon\Theta(n).
\end{eqnarray*}
So  Lemma \ref{lem:QWLW} implies that on class \Rmnum{1}, 
\ben
|Q(W)-\LL W|\leq \epsilon M_\phi \Theta(n) + M_\phi \Theta(1).
\een 
Combining with Lemma \ref{prop:LW}(a), implies that under the conditions of Lemma \ref{prop:LW}, on class \Rmnum{1},
\be Q(W)&\leq& \LL W+ |Q(W)-\LL W| \nonumber \\&  \leq  &  -r^{K} \Theta(n)+ \epsilon M_\phi \Theta(n) + M_\phi \Theta(1) \nonumber \\
&\in& -r^{K} \Theta(n)+ M_\phi \Theta(1).  \label{eq:w1}
\ee
if $\epsilon M_\phi r^{-K} $ is small enough. 

On class \Rmnum{2}, $D_{total}\leq U_s+n\mu=\Theta(n)$, so  Lemma \ref{lem:QWLW} implies that 
\ben
|Q(W)-\LL W|\leq M_\phi \Theta(n).
\een
Combining with Lemma \ref{prop:LW}(b), implies that under the conditions of Lemma \ref{prop:LW}, on class \Rmnum{2},
\be
Q(W) & \leq &  \LL W+ |Q(W)-\LL W| \nonumber \\
& \leq  &  -r^K \epsilon^3\Theta(n^2) + M_\phi \Theta(n).  \label{eq:w2}
\ee

Equations \eqref{eq:w1} and \eqref{eq:w2} imply that if $(r,d,\beta,\alpha, \epsilon)$ satisfies the conditions of Lemma \ref{prop:LW}, there exists $\xi>0$ sufficiently small such that $Q(W)\leq-\xi n$ for all  $n$ larger than some constant. For such $\xi$ and such $(r,d,\beta,\alpha)$, $W$ is a valid Lyapunov function, so by Lemma \ref{prop:FL}, Theorem \ref{thm:main}(b) for the case $0<\mu<\gamma\leq\infty$ is proved.
\end{proofof}

\subsection{Proof of Positive Recurrence when $0<\gamma\leq\mu$ in Theorem \ref{thm:main} }

Now we consider the case when $0<\gamma\leq \mu$. 
Assume $U_s+\sum_{C:k\in C}\lambda_C>0$ for all $k\in\mf$. 
Then  $U_s+\lambda^\ast_{\mH_C}>0$ for any $C\in\mathcal{C}-\{\mf\}$. 
Consider a Lyapunov function of the following form:
\begin{equation}
W': = \sum_{C:C\in\mathcal{C}} r^{|C|}T_C',  \label{eq:WC}
\end{equation}
where
$$T_C': = \begin{cases} {1\over2}E_C^2+pE_C\phi(H_C')& \text{if } C\neq \mf\\
{1\over2}n^2 & \text{if } C=\mf
\end{cases}, 
$$
$H_C':=\sum_{C':C'\in\mH_C}(K+1-|C'|)x_{C'},$ and
$p$ is a constant (i.e. $p=\Theta(1)$) such that
\be
\lambda_{\mE_C}-p(U_s+\lambda^\ast_{\mH_C}) <0, \forall  C\in\mathcal{C}-\{\mf\}. \label{eq:p}
\ee
The variable $\alpha$ is not used in this section, so the big $\Theta$ notation is uniform in $(r, \beta, d, ,\epsilon).$

Define $\LL W'$, as follows:
\begin{equation}
\LL W':= \sum_{C:C\in \mathcal{C}} r^{|C|} \LL T_C' , \label{eq:LWC}
\end{equation}
where
$$  \LL T_C' :=\begin{cases} E_CQ(E_C) + p E_CQ(\phi(H_C'))&\text{if }C\neq \mathcal{F}\\
nQ(n) &\text{if }C=\mathcal{F}\end{cases}. 
$$

Lemmas \ref{lem:QWLW} and \ref{lem:QQQQ1} can be verified as before,
with $H_C,$  $W,$  $\LL W,$ and $\LL T_C$ replaced by $H_C',$  $W',$  $\LL W',$ and $\LL T_C'$, respectively.
The following lemma similar to Lemma \ref{lem:LTSt} can be established:
\begin{lemma} \label{lem:LTS1}
If $d$ is large enough, $\epsilon M_\phi, \beta$ are small enough, for any $S\in\mathcal{C}-\{\mathcal{F}\}$ such that $x_S/n>1-\epsilon$,
\be
\LL T_S'\leq {1\over2}[\lambda_{\mE_S}-p(U_s+\lambda^\ast_{\mH_S})] E_S. \label{eq:toshow1}
\ee
\end{lemma}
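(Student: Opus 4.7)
The plan is to mirror the proof of Lemma \ref{lem:LTSt} almost verbatim, with $H_S$ replaced by $H_S'$, the scalar $\alpha$ replaced by $1$ (so that $p$ alone plays the balancing role), and the coefficients $(K-|C'|+\mu/\gamma)/(1-\mu/\gamma)$ replaced by the simpler $K+1-|C'|$. Since $\LL T_S' = E_S[Q(E_S) + p Q(\phi(H_S'))]$, the task reduces to showing that on the class I states ($x_S/n > 1-\epsilon$) the coefficient of $E_S$ is at most ${1\over2}[\lambda_{\mE_S} - p(U_s+\lambda^\ast_{\mH_S})]$, which is strictly negative by \eqref{eq:p}.

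First, I would derive a lower bound $Q(H_S') \geq -h_1' - \epsilon\Theta(1) - \epsilon x_{\mH_S}\Theta(1)$ in the spirit of \eqref{eq:QHS}. The positive contributions to $Q(H_S')$ are the exogenous arrivals $\sum_{C \in \mH_S}(K+1-|C|)\lambda_C$, together with the $(K-|S|)$-weighted transitions of type-$S$ peers into $\mH_S$, which on class I contribute essentially $(K-|S|)D_S$; the negative contributions (transitions within $\mH_S$, and peer-seed departures bounded by $\gamma x_\mf \leq \mu x_\mf \leq \mu x_{\mH_S}$) are controlled by \eqref{eq:DSg}--\eqref{eq:GHH} and carry the prefactor $\epsilon$. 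The residual $h_1'$ can be arranged so that $h_1' \approx D_S - (U_s + \lambda^\ast_{\mH_S})$, which guarantees the analogue of \eqref{eq;allh1}, namely $h_1' > 0$ whenever $H_S' > d$ for $d$ sufficiently large.

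Second, I would apply Lemma \ref{lem:expansion} to write $Q(\phi(H_S')) \leq \phi'(H_S')Q(H_S') + b_2'$, with $b_2'$ bounded by $\beta$ times second-moment statistics of the jumps of $H_S'$; as in the passage from \eqref{eq:defb} to \eqref{eq:b22}, the estimates \eqref{eq:gg}--\eqref{eq:g} let me absorb $b_2'$ into a small multiple of $D_S$ plus $\beta \Theta(1)$. Then, on the range $H_S' \leq M_\phi$, multiplying the bound on $Q(H_S')$ by $\phi'(H_S') \in [-1,0]$, using $\phi' \equiv -1$ on $[0,d]$, and combining with $Q(E_S) \leq \lambda_{\mE_S}$ produces
$$Q(E_S) + pQ(\phi(H_S')) \leq \lambda_{\mE_S} - p(U_s + \lambda^\ast_{\mH_S}) + \epsilon M_\phi \Theta(1) + \beta \Theta(1),$$
which is at most ${1\over2}[\lambda_{\mE_S} - p(U_s+\lambda^\ast_{\mH_S})]$ once $\epsilon M_\phi$ and $\beta$ are small enough, using the strictly negative slack in \eqref{eq:p}. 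For $H_S' > M_\phi$, $\phi$ and $\phi'$ both vanish (ensured by taking $d$ large enough that $H_S'$ cannot re-enter the support of $\phi$ in a single jump), so $\LL T_S' \leq E_S(\lambda_{\mE_S} - D_S)$, and the bound $D_S \geq H_S'\Theta(1)$ in the spirit of \eqref{eq:D_S} closes the argument.

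The main obstacle I anticipate is reconciling the coefficient $K+1-|C'|$ used inside $H_S'$ with the coefficient $K-|C'|+\mu/\gamma$ appearing inside $\lambda^\ast_{\mH_S}$: in the regime $\mu/\gamma \geq 1$ these differ by $\mu/\gamma - 1$ per piece, so the arrival-driven part of $Q(H_S')$ does not on its own recover the full target $\lambda^\ast_{\mH_S}$. The discrepancy must be made up by the seed-upload and peer-to-peer upload contributions inside $D_S$, together with the strict slack built into \eqref{eq:p}, rather than by any tight drift identity for $Q(H_S')$ alone.
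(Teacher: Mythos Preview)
Your plan to mirror Lemma~\ref{lem:LTSt} verbatim breaks at the point where you introduce $h_1' \approx D_S - (U_s + \lambda^\ast_{\mH_S})$.  In Lemma~\ref{lem:LTSt} the multiplier in front of $Q(\phi(H_S))$ is $\alpha\approx 1$, so the $D_S$ that sits inside $h_1$ (together with the $(\tfrac{1}{\alpha}-1)D_S$ coming from $b_2$) becomes exactly $D_S$ after multiplying by $\alpha$ and cancels against the $-D_S$ in $Q(E_S)\le \lambda_{\mE_S}-D_S$.  Here the multiplier is the \emph{large} constant $p$, so your route yields
\[
Q(E_S)+pQ(\phi(H_S'))\ \le\ \lambda_{\mE_S}-p(U_s+\lambda^\ast_{\mH_S})+\big(p(1+c)-1\big)D_S+\text{small},
\]
and $(p(1+c)-1)D_S$ is a positive term of order $\Theta(M_\phi)$ on the range $H_S'\le M_\phi$; it does not fit inside $\epsilon M_\phi\Theta(1)+\beta\Theta(1)$ as your displayed inequality asserts.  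Using the weaker $Q(E_S)\le\lambda_{\mE_S}$ instead of $\lambda_{\mE_S}-D_S$ only makes this worse.

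The paper avoids this by \emph{not} carrying a $D_S$ inside the lower bound for $Q(H_S')$.  Because $\gamma\le\mu$, the positive contribution $D_S$ to $Q(H_S')$ from $\mE_S\to\mH_S$ transitions already dominates the negative contributions $\Gamma_{\mH_S,\mH_S}+x_\mf\gamma$ (this is exactly the $b_1$ cancellation \eqref{eq:dds}), giving the clean lower bound $Q(H_S')\ge U_s+\lambda^\ast_{\mH_S}-\epsilon M_\phi\Theta(1)$.  The paper then keeps the full $-D_S$ from $Q(E_S)$, sends half of it to absorb $b_2'$ via $b_2'\le \tfrac{1}{2p}D_S+\beta\Theta(1)$, and packages the remaining half with the non-constant $\phi'$ into
\[
b_3'=p(1+\phi'(H_S'))(U_s+\lambda^\ast_{\mH_S})-\tfrac{1}{2}D_S,
\]
which is shown to be $\le 0$ by a short case split on $H_S'\lessgtr d$.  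That $b_3'$ step is the replacement for your ``analogue of \eqref{eq;allh1}'' and is where your proposal is missing the right idea.  The $\lambda^\ast$ versus $K{+}1{-}|C'|$ mismatch you flag in your last paragraph is a side issue (it can be absorbed in the free choice of $p$); the real gap is the uncancelled $pD_S$.
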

\begin{proof}
Suppose $S\in\mathcal{C}-\{\mf\}$ and $x_S/n>1-\epsilon$, $\epsilon\in(0,{1\over2})$,  one lower bound for $Q(H_S')$ is:
\be
Q(H_S') &\geq& \lambda_{\mH_S}^\ast + D_S -\Gamma_{\mH_S,\mH_S}-x_{\mf}\gamma  \\
& \geq &  \lambda^\ast_{\mH_S} + b_1, \label{eq:HS'}
\ee
where $b_1$ was introduced in \eqref{eq:qhs},  because $\gamma\leq \mu$. Substituting \eqref{eq:dds} into \eqref{eq:HS'}
yields
\be
Q(H_S') \geq \lambda_{\mH_S}^\ast +U_s -\epsilon[\Theta(1)+x_{\mH_S}\Theta(1)].  \label{eq:HS11}
\ee

Consider two conditions of $H_S'$:
\begin{itemize}
\item {\bf $\bf H_S' \leq M_\phi$}: Under this condition, $x_{\mH_S}\leq M_\phi$ and $M_\phi>1$,  so \eqref{eq:HS11}  becomes:
\begin{eqnarray}
Q(H_S')\geq U_s+\lambda_{\mH_S}^\ast -\epsilon M_\phi \Theta(1).  \label{eq:hs22}
\end{eqnarray}

Because $\phi'$ exists and is Lipschitz continuous with Lipschitz constant $\beta,$ 
and because the magnitude of the jump of $H_S'$ is bounded by $K+1$, by Lemma \ref{lem:expansion},
\begin{equation}
Q(\phi(H_S')) \leq \phi'(H_S')Q(H_S') + b_2'  \label{eq:QHSS}
\end{equation}
where
$$
 b_2' := {\beta\over2}\left(K+1\right)^2(\lambda_{\mH_S}+\Gamma_{\mE_S,\mH_S}+\Gamma_{\mH_S,\mH_S}+x_{\mf}\mu). \nonumber
$$
Consider the term $b_2'$. By  \eqref{eq:gg} and \eqref{eq:g}, and assuming $\epsilon M_\phi <1$, we have
\begin{eqnarray}
b_2'&\leq& \beta\Theta(1) +\beta\epsilon M_\phi \Theta(1) + \beta D_S\Theta(1)\nonumber \\
&\leq&\beta D_S \Theta(1)+\beta\Theta(1). \label{eq:b2d}
\end{eqnarray}
If $\beta$ is small enough, $\beta D_S\Theta(1)<{1\over2p}D_S$, so \eqref{eq:b2d} becomes:
\be
b_2'\leq {1\over2p}D_S +\beta\Theta(1).\label{eq:b2dd}
\ee

Substituting \eqref{eq:hs22} and \eqref{eq:b2dd}  into \eqref{eq:QHSS}, and applying $Q(E_S)\leq \lambda_{\mE_S}-D_S$, we can  bound $Q(E_S)+p Q(\phi(H_S'))$, the coefficient of $E_S$ in $\LL T_S'$, as follows:
\begin{eqnarray}
\lefteqn{Q(E_S)+p Q(\phi(H_S')) }\nonumber  \\
&\leq& \lambda_{\mE_S} - {1\over2}D_S + p \phi'(H_S')(U_s+\lambda_{\mH_S}^\ast) + \nonumber \\
&&~~~\beta \Theta(1) +\epsilon M_\phi \Theta(1) \nonumber\\
&=& \lambda_{\mE_S} - p(U_s+\lambda_{\mH_S}^\ast) +b_3'+  \nonumber  \\
&&~~~\beta \Theta(1) +  \epsilon M_\phi \Theta(1), \label{eq:qaq1}
\end{eqnarray}
where 
\begin{eqnarray*}
b_3'&: =& p(1+\phi'(H_S'))(U_s+\lambda_{\mH_S}^\ast) - {1\over2}D_S \\
&\leq&
\begin{cases}
-{1\over2} D_S & \text{if } H_S'<d\\
\Theta(1) - d\Theta(1) & \text{if }H_S'\geq d
\end{cases}.
\end{eqnarray*}
because $D_S\geq (1-\epsilon)x_{\mH_S}\mu\geq {1\over2 (K+1)}H_S' \mu =\Theta(H_S')$. Hence if $d$ is large enough, $b_3'\leq 0$. If $\beta,\epsilon M_\phi$ are close to $0$, the last two terms in \eqref{eq:qaq1} can be neglected. Thus, $Q(E_S)+p Q(\phi(H_S'))\leq {1\over2}[\lambda_{\mE_S} - p(U_s+\lambda_{\mH_S}^\ast)]$, which implies \eqref{eq:toshow1}.

\item {\bf $\bf H_S'> M_\phi$}: Under this condition, choose $d$ such that $d>K+1$, so $M_\phi>2d+1/\beta + K+1$. Hence $Q(\phi(H_S'))$ vanishes for $H_S'$ in this range. The fact that $D_S=\Theta(H_S')$ yields that
\begin{eqnarray*}
\lefteqn{Q(E_S)+p Q(\phi(H_S')) }\\
& \leq & \lambda_{\mE_S} - D_S\\
& \leq &  \Theta(1) - M_\phi \Theta(1) \\
&< &{1\over2}[\lambda_{\mE_S} - p(U_s+\lambda_{\mH_S}^\ast)] ,
\end{eqnarray*}
if $d$ is large enough, and hence also $M_\phi$. Therefore \eqref{eq:toshow1} holds.
\end{itemize}
So far, Lemma \ref{lem:LTS1} is proved.
\end{proof}

With Lemma \ref{lem:QWLW}, \ref{lem:QQQQ1} and \ref{lem:LTS1}, Lemma \ref{prop:LW} with $\LL W$ replaced by $\LL W'$ can be easily verified to be valid. Thereby Theorem \ref{thm:main}(b) at condition $0<\gamma\leq\mu$ is proved.

\section{Extensions}  \label{sec:extensions}

\subsection{General Piece Selection Policies}
A piece selection policy is used to choose which piece is transfered whenever one peer or the fixed seed
is to upload a piece to a chosen peer.
The random useful piece selection policy is assumed in Theorem \ref{thm:main}, but the theorem can be extended
to a large class of piece selection policies.  Such extension was noted in \cite{HajekZhu10_full} for the less general model of that paper.   Essentially the only restriction needed is that if the uploading peer or fixed seed has a useful piece for the
downloading peer, then a useful piece must be transferred.     This restriction is similar to a work
conserving restriction in the theory of service systems.     In particular, Theorem 1 extends to cover a broad class of
rarest first piece selection policies.     Peers can estimate which pieces are more rare in a distributed way,
by exchanging information with the peers they contact.   Even more general policies would allow the piece
selection to depend in an arbitrary way on the piece collections of all peers.

To be specific, consider the following family $\cal H$ of piece selection policies.   Each policy
 in $\cal H$
corresponds to a mapping $h$ from ${\cal C} \times ({\cal C}  \cup \{ {\cal F} \}  ) \times {\cal S}$
to the set of probability distributions on $\cal F,$   satisfying the usefulness constraint:
$$ \sum_{i\in B-A}  h_i(A,B,\mathbf{x})=1~~\mbox{whenever}~ B \not\subset A, $$
with the following meaning of $h$:
\begin{itemize}
\item When a type $A$ peer is to download a  piece from a type $B$ peer and the state of the entire network is $\mathbf{x},$
piece $i$ is selected with probability $h_i(A,B,\mathbf{x}),$  for $i \in {\cal F}.$
\item When a type $A$ peer is to download a  piece from the fixed seed and the state of the entire network is $\mathbf{x},$
piece $i$ is selected with probability  $h_i(A,{\cal F},\mathbf{x}),$  for $i \in {\cal F}.$
\end{itemize}

Theorem  \ref{thm:main} can be extended to piece selection policies in $\cal H$.  One minor
change is needed, because the Markov process may not be irreducible for some piece selection
policies.     In general, the set of all states that are reachable from the empty
state is the unique minimal closed set of states, and the process restricted to that set of states
is irreducible.     For example, if the lowest numbered useful piece is selected at each download opportunity,
then the minimal closed set of states consists of the states such that each peer has either no pieces
or a consecutively numbered set of pieces beginning with the first piece.
See  \cite{HajekZhu10_full} for further discussion.   We state the result as a theorem.
\begin{theorem}   (Stability conditions for general useful piece selection policies)
Consider the network model of Section \ref{sec:Model}, except with the random piece
selection policy replaced by a policy $h$ in $\cal H.$  
(a)  If either of the two conditions in Theorem \ref{thm:main}(a) hold then the Markov process
 is transient, and the number of peers in the system converges to infinity with probability one.
(ii)      If either of the two conditions in Theorem \ref{thm:main}(b)  hold  the Markov process
restricted to the closed set of states is positive recurrent,
the mean time to reach the empty state from any initial state has finite mean, and the equilibrium distribution $\pi$ is
such that $\sum_{ \mathbf{x}}  \pi(\mathbf{x})  |\mathbf{x}| < \infty.$   
\end{theorem}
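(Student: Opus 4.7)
I would adapt the two halves of the proof of Theorem \ref{thm:main}, verifying that every step remains valid when the random useful piece selection policy is replaced by an arbitrary $h \in \mathcal{H}$. Throughout, I work with the Markov process restricted to the unique minimal closed set of states reachable from the empty state; on this set it is irreducible by construction. The crucial observation, which must be made explicit at each step, is that the estimates driving both parts depend only on raw contact (peer-selection) rates and on the usefulness constraint on $h$, not on the specific distribution by which a useful piece is chosen.

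\textbf{Transience.} I would re-use the coupling of the original system with the autonomous branching system (ABS) from Section \ref{sec:transience}. The ABS spawns group (b) peers at rate $\xi\mu$ and group (f) peers at rate $\mu$ per peer in groups (b), (g), (f); these are raw-contact-rate upper bounds, valid under any $h \in \mathcal{H}$. Two structural facts underlie this: (i) a peer in groups (b), (g), or (f) contacting a one-club peer is forced to upload piece one, since piece one is the only useful piece for a one-club peer, and (ii) any single contact with a peer in group (a) produces at most one new group (b) peer. Hence the thinning step in the proof of Lemma \ref{lemma.onecompare} gives $D_t \leq \hat D_t$ almost surely for any $h$. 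The remaining ingredients--Corollary \ref{cor.D} on $\hat D_t$, the Poisson bound on $A_t$, and the $M/GI/\infty$ domination of $Y_t^a + Y_t^b + Y_t^g$ in Lemma \ref{lemma.MG1compare}--depend only on arrival and contact rates and on dwell-time distributions, so they transfer unchanged. This gives $P\{N_t \to \infty\} > 0$ from a suitable initial state, whence the irreducible restricted chain is transient. Since $\{\mathbf{x}:|\mathbf{x}|\leq M\}$ is finite for every $M$, and every state of a transient irreducible chain is almost surely visited only finitely often, it follows that $N_t \to \infty$ almost surely from every initial state.

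\textbf{Positive recurrence.} I would show that exactly the same Lyapunov function $W$ of \eqref{eq:defW}/\eqref{eq:defW2} (respectively $W'$ of \eqref{eq:WC} when $0 < \gamma \leq \mu$) remains valid. This reduces to rereading the drift computations in Lemmas \ref{lem:QWLW}, \ref{lem:QQQQ1}, \ref{lem:LTSt}, \ref{prop:LW}, and \ref{lem:LTS1}, and observing that every inequality used there falls into one of two categories. The first consists of upper bounds on transition rates such as $\Gamma_{\mE_S,\mH_S}$, $\Gamma_{\mH_S,\mH_S}$, and magnitudes of jumps of $H_S$ or $H_S'$; each is controlled by raw contact rates and is therefore policy-independent. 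The second is the essential lower bound
\begin{equation*}
D_S \;\geq\; (1-\epsilon)\bigl(U_s + x_{\mH_S}\mu\bigr),
\end{equation*}
which holds under any $h \in \mathcal{H}$: the usefulness constraint $\sum_{i\in B-S} h_i(S,B,\mathbf{x}) = 1$ for $B \not\subseteq S$ forces each contact from the fixed seed or from a peer in $\mH_S$ to a type $S$ peer to transfer a useful piece, producing a departure from type $S$. Granted these two observations, the proofs of Lemmas \ref{lem:LTSt}, \ref{prop:LW}, and their $0<\gamma\leq\mu$ analogues go through verbatim. Lemma \ref{prop:FL} then delivers positive recurrence together with $E[N] < \infty$, which is the moment claim $\sum_{\mathbf{x}} \pi(\mathbf{x})|\mathbf{x}| < \infty$; the finite mean hitting time to the empty state from any initial state in the closed set follows by combining the same drift inequality $QW \leq -\xi n$ outside a finite set with Proposition \ref{cor.FosterCompContinuous} and irreducibility.

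\textbf{Main obstacle.} The real work is a line-by-line audit in part (b) of every inequality in the drift calculation to confirm that no step secretly uses the uniform distribution over useful pieces. In particular, the bound $\Gamma_{\mE_S,\mH_S} \leq D_S + \epsilon(U_s + x_{\mH_S}\mu)$ must be rederived using only the raw contact rate from $\mH_S \cup \{\text{seed}\}$ to $\mE_S - \{S\}$, and the Taylor-type upper bound on $Q(\phi(H_S))$ (and $Q(\phi(H_S'))$) must be reverified from contact-rate estimates alone. Once this bookkeeping is complete the adaptation is otherwise mechanical.
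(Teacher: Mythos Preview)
Your proposal is correct and is essentially the approach the paper takes, though the paper itself does not supply a proof: it simply states the theorem and moves on, implicitly relying on the reader to see that the proofs of Theorem \ref{thm:main} in Sections \ref{sec:transience} and \ref{sec:positiverecurrence} go through for any $h\in\mathcal{H}$. Your audit of the two classes of inequalities (contact-rate upper bounds and the usefulness-driven lower bound on $D_S$) is exactly the verification the paper leaves to the reader, so there is nothing to add.
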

Thus, with the possible exception of the borderline case,
 rarest first piece selection does not increase the
region of stability.

\subsection{Network Coding}

Network coding, introduced by Ahlswede, Cai, and Yeung,  \cite{AhlswedeCaiLiYeung},
can be naturally incorporated into P2P distribution networks, as noted in
\cite{GkantsidisRodriguez05}.    The related work \cite{DebMedardChoute06} considers
all to all exchange of pieces among a fixed population of peers through random contacts
and network coding.    The method
can be described as follows.    The file to be transmitted is divided into
$K$ data pieces, $m_1, m_2, \ldots  , m_K.$     The data pieces are
taken to be vectors of some fixed length $r$ over a finite field $\mathbb{F}_q$ with
$q$ elements, where $q$ is some power of a prime number.
If the piece size is $M$ bits, this can be done by viewing each message as an
$r=\lceil M / \log_2(q) \rceil$ dimensional vector over $\mathbb{F}_q.$
Any coded piece $e$ is a linear combination of
the original $K$ data pieces:
$e=\sum_{i=1}^K \theta_i m_i;$
the vector of coefficients $(\theta_1, \ldots , \theta_K)$ is called
the {\em coding vector} of the coded piece; the coding vector
is included whenever a coded piece is sent.
Suppose the fixed seed uploads coded pieces to peers, and peers
exchange coded pieces.      In this context, the type of a peer $A$ is the
subspace $V_A$ of $\mathbb{F}_q^K$
spanned by the coding vectors of the coded pieces it has
received.   Once the dimension of $V_A$ reaches $K$, peer $A$
can recover the original message. Let $\cal V$ denote the set of all
subspaces of $\mathbb{F}_q^K,$ so $\cal V$ is the set of possible
types.

When peer A contacts peer B, suppose peer B sends peer
A a random linear combination of its coded pieces, where
the coefficients are independent and uniformly distributed
over  $\mathbb{F}_q.$   Equivalently, the coding vector of the
coded piece sent from $B$ is uniformly distributed over
$V_B.$      The coded piece is considered useful to $A$ if adding
it to $A$'s collection of coded pieces increases the dimension of $V_A.$
Equivalently, the piece from $B$ is useful to $A$ if its coding vector is
not in the subspace $V_A \cap V_B.$    The probability the piece is useful to
$A$ is therefore given by
\begin{eqnarray*}
\lefteqn{P\{\mbox{piece from $B$ is useful to $A$}\}} \\& = & 1 -  \frac{|V_A\cap V_B|}{|V_B|} \\
& = &1-q^{  dim(V_A\cap V_B)  - dim(V_B) }.
\end{eqnarray*}
If peer $B$ can possibly help peer $A$, meaning $V_B\not\subset V_A$
(true, for example, if $dim(V_B) > dim(V_A)$),
the probability that a random coded piece from B is helpful to A is greater than or equal to $1-\frac{1}{q}.$
Similarly, the probability a random coded piece from the seed is
useful to any peer $A$ with $dim(V_A)\leq K-1$
is also greater than or equal to $1-\frac{1}{q}.$

The network state $\mathbf{x}$ specifies the number of peers in the network of each type. 
There are only finitely many types, so the overall state space is still countably infinite.  
Moreover, the Markov process is easily seen to be irreducible.  A proof of the following variation
of Theorem \ref{thm:main} is summarized below.  Let $\widetilde{\mu}=\left( 1-\frac{1}{q}\right) \mu.$

\begin{theorem}  \label{thm_netcode}   (Stability conditions for a network coding based system)
Suppose random linear network coding with vectors over $\mathbb{F}_q^K$ is used,
with random peer contacts and parameters $K$, $q,$ $(\lambda_V: V\in {\cal V})$, $U_s,$ $\gamma,$ and $\mu.$
Suppose $\lambda_{\mathbb{F}_q^K}=0$ if $\gamma=\infty$, and $\lambda_{total}>0.$  \\
(a) The Markov process is transient if either of the following two conditions is true:
\begin{itemize}
\item $0<\mu<\gamma\leq\infty$ and for some $V^-\in {\cal V}$ with $dim(V^-)=K-1,$
$$
\lambda_{total}> \frac{ U_s+\sum_{V\not\subset V^-} \lambda_V(K-dim(V)+1) }{1-\frac{\mu}{\gamma}};  
$$
\item $0<\gamma\leq\mu,$ $U_s=0,$  and  $\{V\in {\cal V}: \lambda_V > 0 \}$ does not span $\mathbb{F}_q^K.$
\end{itemize}
(b) The process is positive recurrent and $E[n]<\infty$ in equilibrium, if either of the following two conditions is true:
\begin{itemize}
\item $0< \widetilde{\mu} <\gamma \leq \infty$ and for any $V^-\in {\cal V}$ with $dim(V^-)=K-1,$
\begin{eqnarray}
\lefteqn{ \lambda_{total} <  \nonumber } \\
&& \left[ U_s+\sum_{V:  V\not\subset V^-}
 \lambda_V \left( K-dim(V)  + \frac{q}{q-1}  \right)  \right] \nonumber \\
 && \times \left(   
 { 1-\frac{1}{q} \over 1-\frac{\widetilde{\mu} }{\gamma}} \right);  \label{eq:pcd_netcode}
\end{eqnarray}
\item $0<\gamma\leq \widetilde{\mu} $ and either $U_s >0 $ or $\{V\in {\cal V}: \lambda_V > 0 \}$ spans  $\mathbb{F}_q^K.$
\end{itemize}
\end{theorem}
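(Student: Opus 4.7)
The proof plan is to parallel the argument for Theorem \ref{thm:main}, using subspaces of $\mathbb{F}_q^K$ as the analog of subsets of $\mathcal{F}$ and identifying $\widetilde{\mu}=(1-1/q)\mu$ as the effective useful upload rate in the positive recurrence analysis. The starting point is the observation that when a peer of type $V''$ uploads a random linear combination from $V''$ to a peer of type $V'$, the piece is useful with probability $1-q^{dim(V'\cap V'')-dim(V'')}$, which equals $0$ if $V''\subseteq V'$ and lies in $[1-1/q,\,1-1/q^{dim(V'')}]$ otherwise; moreover, if $V'\subseteq V^-$ and $V''\not\subseteq V^-$, then with probability at least $1-1/q$ the receiver leaves the "one-$V^-$-club'' after the upload. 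These two facts drive all the rate modifications.

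For part (a), the case $0<\gamma\leq\mu$ with $U_s=0$ and $\{V:\lambda_V>0\}$ failing to span $\mathbb{F}_q^K$ is immediate since every peer's type stays inside a fixed proper subspace and no peer can ever become a seed. For $0<\mu<\gamma\leq\infty$, I would fix a $(K{-}1)$-dimensional $V^-$ realizing the transience inequality and partition peers into normal young, infected, gifted, one-$V^-$-club, and former one-$V^-$-club groups exactly as in Section \ref{sec:transience}, where now "one-$V^-$-club'' means peers of type exactly $V^-$ and "missing'' means type $\subseteq V^-$. The autonomous branching system is constructed with the same script-based coupling, using the \emph{upper bound} $\mu$ for the rate at which each peer in groups (b), (g), (f) or the fixed seed spawns offspring of groups (b) and (f); since this bound is unchanged from the non-coded case, the branching factor $m_f$ and the per-gifted-peer contribution become $(K-\dim V+\mu/\gamma)/(1-\mu/\gamma)$, which is bounded by $(K-\dim V+1)/(1-\mu/\gamma)$. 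Corollary \ref{cor.D} and the $M/GI/\infty$ domination of $Y^a+Y^b+Y^g$ go through verbatim with $|C|$ replaced by $\dim V$, yielding transience under \eqref{eq:pcd_netcode} with strict inequality reversed.

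For part (b), the easy case $0<\gamma\leq\widetilde{\mu}$ with $U_s>0$ or the $\lambda_V$'s spanning $\mathbb{F}_q^K$ is handled by the Lyapunov function $W'$ from Section~VII-B with $\mu$ replaced by $\widetilde{\mu}$ throughout; irreducibility of the chain (easily checked since any coded piece collection can be generated from enough injections) means positive recurrence of the whole chain follows. For $0<\widetilde{\mu}<\gamma\leq\infty$ I would use the Lyapunov function
\begin{equation*}
W=\sum_{V\in\mathcal{V}} r^{\dim V}\, T_V,\quad T_V=\tfrac{1}{2}E_V^2+\alpha E_V\phi(H_V)\text{ for }V\neq\mathbb{F}_q^K,
\end{equation*}
with $E_V=\sum_{V'\subseteq V}x_{V'}$ and $H_V=\frac{1}{1-\widetilde{\mu}/\gamma}\sum_{V'\not\subseteq V}(K-\dim V'+\widetilde{\mu}/\gamma)x_{V'}$. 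The heavy-load classes and the auxiliary lemmas (drift approximation, bounds on $Q(E_V)$, $Q(\phi(H_V))$, and the main Lemma \ref{lem:LTSt} analog) are proved along the same lines, using the key rate identities: when $x_{V^-}/n$ is close to $1$, the aggregate rate at which the one-$V^-$-club loses peers through uploads from $V''\not\subseteq V^-$ peers is $(1-1/q)(U_s+x_{\{V'':V''\not\subseteq V^-\}}\mu)+o(1)=U_s\cdot\tfrac{q-1}{q}+\widetilde{\mu}x_{\{V'':V''\not\subseteq V^-\}}+o(1)$, and correspondingly $Q(H_{V^-})$ is bounded below by $\lambda^\ast_{\mathcal H_{V^-}}+U_s(1-1/q)/(1-\widetilde{\mu}/\gamma)-D_{V^-}+o(1)$, with the right normalizations producing exactly the quantity $[U_s+\sum_V\lambda_V(K-\dim V+q/(q-1))](1-1/q)/(1-\widetilde\mu/\gamma)$ on the right of \eqref{eq:pcd_netcode}.

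The main obstacle is bookkeeping the two independent $(1-1/q)$ factors that appear in the drift of $E_{V^-}$ and $H_{V^-}$: one factor from "the piece sent is useful'' and another from "if useful, it actually lifts the receiver out of $V^-$.'' These factors multiply in some terms (giving the $(1-1/q)$ prefactor on the right-hand side of \eqref{eq:pcd_netcode}), but in the $H_{V^-}$ contribution from new arrivals of type $V\not\subseteq V^-$ the expected number of useful uploads during the peer's stay combines a non-seed portion of length $(K-\dim V)/\widetilde{\mu}$ with a seed portion of length $1/\gamma$; working these out carefully is what produces the $K-\dim V+q/(q-1)$ coefficient rather than $K-\dim V+\widetilde{\mu}/\gamma$ that a naive transcription of Section~\ref{sec:positiverecurrence} would yield. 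Once the analogs of \eqref{eq:qhs}--\eqref{eq;allh1} are verified with the correct constants, the remainder of the Foster--Lyapunov argument (Lemmas \ref{lem:QQQQ1}--\ref{prop:LW} and their combination) carries over unchanged.
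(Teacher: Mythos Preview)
Your overall plan is the same as the paper's: replace subsets by subspaces, run the five-group partition and ABS argument for transience, and use the $r^{\dim V}T_V$ Lyapunov function for positive recurrence. But there is one genuine gap in part (a) and one discrepancy in part (b) worth noting.

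\textbf{Part (a), the coupling.} You say the script-based coupling of Lemma~\ref{lemma.onecompare} ``goes through verbatim.'' It does not. Property 6) of the alternative system requires that a group (b) or (g) peer that is not yet a seed receives \emph{usable} download opportunities at rate at least $(1-\xi)\mu$. With network coding this fails: when a one-club peer (type $V^-$) uploads a random combination to a group (b) or (g) peer $A$, the piece is useful only with probability $1-q^{\dim(V_A\cap V^-)-(K-1)}$, which can be as small as $1-1/q$. So group (b) and (g) peers may linger in the original system longer than in the ABS, and while lingering they continue to upload piece-one analogues to one-club peers, potentially breaking $D_t\le \hD_t$. The paper resolves this with a symmetry observation you omit: if $A$ is a non-seed group (b) or (g) peer and $B$ is a one-club peer, then $P\{\text{random piece from }A\text{ useful to }B\}\le P\{\text{random piece from }B\text{ useful to }A\}$, because $\dim V_A\le K-1=\dim V_B$ forces $q^{\dim(V_A\cap V_B)-\dim V_A}\ge q^{\dim(V_A\cap V_B)-\dim V_B}$. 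Hence if one slows down the internal clock of each group (b) or (g) peer in the ABS so that its download rate matches the original system, its upload rate to the one club is simultaneously slowed by at least as much, and domination survives. Without this step the coupling argument does not close.

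\textbf{Part (b), the definition of $H_V$.} Your $H_V=\frac{1}{1-\widetilde\mu/\gamma}\sum_{V'\not\subseteq V}(K-\dim V'+\widetilde\mu/\gamma)x_{V'}$ differs from the paper's choice $H_V=\frac{1-1/q}{1-\widetilde\mu/\gamma}\sum_{V'\not\subseteq V}(K-\dim V'+\mu/\gamma)x_{V'}$. The paper's weights are motivated directly by Remark~\ref{remark_Lyapunov_function}: a helping peer of type $V'$ uploads at rate $\mu$ (not $\widetilde\mu$) while a non-seed, and each upload to the dominant $S$-group is useful with probability at least $1-1/q$, so its expected number of useful uploads to type-$S$ peers over its lifetime is at least $(1-1/q)\bigl(\frac{K-\dim V'}{?}\cdot\mu+\frac{\mu}{\gamma}\bigr)$, with the branching multiplier $1/(1-\widetilde\mu/\gamma)$; the paper's prefactor and weight encode this lower bound cleanly and feed directly into the $\triangle_S$ they display. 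Your version may also work, but you would have to redo the analogue of \eqref{eq:qhs}--\eqref{eq:defh1} from scratch rather than ``along the same lines,'' and your own discussion of the two $(1-1/q)$ factors already signals that the naive transcription with $\widetilde\mu$ everywhere does not immediately give the coefficient $K-\dim V+q/(q-1)$ in \eqref{eq:pcd_netcode}.
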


The gap between the necessary and sufficient conditions in Theorem \ref{thm_netcode} can be made arbitrarily small by
taking $q$ large enough.  

For the case that peers arrive with pieces, network coding is quite effective at reducing the
impact of the missing piece syndrome.   For example, suppose peers with no pieces arrive at
rate $\lambda_0$ and peers with one piece arrive at rate $\lambda_1$, where the coding vector
for the piece given to a peer at time of arrival is uniformly distributed over all $q^K$ possibilities.
(So with probability $q^{-K}$ the coding vector is the all zero vector and the piece is useless.)
Suppose there are no other arrivals, that $U_s=0,$ and $\gamma=\infty.$
Thus, the total arrival rate is $\lambda_{total}=\lambda_0+\lambda_1,$ and the fraction of peers arriving
with one (possibly useless) piece is $f=\frac{\lambda_1}{\lambda_0+\lambda_1}.$   Then Theorem \ref{thm_netcode}
yields that the Markov process is transient if $f < \frac{q}{(q-1)K}$ and positive recurrent if
$f >  \frac{q^2}{(q-1)^2K}.$   For example, if $q=64$ and $K=200$, the Markov process is
transient if $f  \leq \frac{1.014}{K} = 0.00507$ and positive recurrent if $f \geq \frac{1.032}{K}=0.00516.$
In contrast, without network coding and a fraction $f$ of peers arriving with one
uniformly randomly selected data piece, Theorem \ref{thm:main} implies the network is transient for any $f  < 1.$

We comment briefly on how the proof of Theorem \ref{thm:main} can be modified to yield
Theorem \ref{thm_netcode}.    First, consider how the proof of Theorem \ref{thm:main}(a) can be modified
to prove Theorem \ref{thm_netcode}(a).  Consider the main case, $0 < \mu < \gamma \leq \infty$.
Let $V^-$ be the subspace of $\mathbb{F}_q^K$ with dimension $K-1$ appearing in part (a).
To incorporate network coding,
the partition of peers described in Section \ref{sec:outline_of_proof} should be replaced by the following
partition:
\begin{itemize}
\item{\em Normal young peer}: A normal young peer is a peer $A$ such that $V_A$ is a proper subset of $V^-.$
\item{\em Infected peer}: An infected peer is a peer $B$ that was a normal young peer when it first arrived,
but at the current time, $V_B\not\subset V^-.$
\item{\em Gifted peer}: A gifted peer is a peer $G$ such that at the time of its arrival, $V_G \not\subset V^-.$
\item{\em One-club peer}: A peer of type $V^-.$
\item{\em Former one-club peer}: A former one-club peer is a peer in the system that is not a one-club peer but at some earlier time was a one-club peer.
\end{itemize}
For any $N_o\geq 1$, it is possible to reach the state with $N_o$ one-club peers and no other
peers in the network.

Call a peer $A$ {\em enlightened} if $V_A \not\subset V^-.$  Note that gifted peers are
enlightened when they arrive, and every other peer must become enlightened before departing.
A peer becoming enlightened with network coding is analogous to a peer
downloading the missing piece without network coding.  In particular,
for the proof of Theorem \ref{thm_netcode}, the process $D_t$ should be the cumulative
number of downloads causing the recipient peers to become enlightened.  

The same autonomous branching system (ABS) can be used as in the proof of Theorem \ref{thm:main}.
Lemma \ref{lemma.onecompare} remains true, but the coupling argument used to prove it becomes
more subtle.  The issue is that the rate that a group (b) or (g) peer downloads pieces can be less than
$\mu(1-\xi),$  because random linear combinations are sent that are not always useful.
This effect causes the group (b) and group (g) peers to remain in the system longer, so that they can
continue to upload useful pieces to one club peers for longer.  However,
note that if $A$ is a group (b) or (g) peer that is not a peer seed, and $B$ is a one-club peer, then the
probability a random piece from $A$ is useful to $B$ is less than or equal to the probability
a random piece from $B$ is useful to $A$.   Therefore,  if the internal clocks of the group (b) and
(g) peers are slowed down so that their download rate of useful pieces matches that of the original
system, then their upload rate of useful pieces to the one club peers will still be at least as large
as in the original system.  

The other parts of the proof of Theorem \ref{thm:main}(a) readily carry over to imply Theorem
\ref{thm_netcode}(a).

Next, the modifications of Theorem \ref{thm:main}(b) needed to prove Theorem \ref{thm_netcode}(b)
are described.   The same approach works with the same form of Lyapunov function, except $\cal V$
is used as the set of types instead of $\cal C.$   In places the cardinality $|C|$ of a type $C$ is used in the proof
of Theorem \ref{thm:main}(b), the dimension $dim(V)$ of a type $V$ is used in the proof of
Theorem \ref{thm_netcode}(b).   In some of the places that $\mu$ is used in the proof
of Theorem \ref{thm:main}(b), it should be replaced by $\widetilde{\mu}. $

The condition \eqref{eq:pcd_netcode} holding for all $V^- \in {\cal V}$ is equivalent to the following:
for any $S\in\mathcal{V}- \mathbb{F}_q^K$, $\triangle_S < 0,$ where
\begin{eqnarray*}
\lefteqn{\triangle_S  =  \sum_{V:V\subseteq S,V\in\mathcal{V}}\lambda_V }  \\
&&  - \left[U_s+  \sum_{V:V\not\subseteq S, V\in\mathcal{V}}\lambda_V\left(K-dim(V)+\frac{\mu}{\gamma}\right)\right] \\
&&~~~~~~~~~\times
 \left(   {1-\frac{1}{q}\over 1-\frac{\widetilde{\mu}}{\gamma}} \right) .
\end{eqnarray*}
The condition  $\triangle_S < 0$ means that the rate of arrival of peers that can become type $S$ peers
is less than a lower bound on the long term rate that peers of type $S$ receive useful pieces.
The particular Lyapunov function we  use in case $0<\widetilde{\mu} <\gamma<\infty$, is:
\begin{equation}
W : = \sum_{V:V\in \mathcal{V}} r^{dim(V)} T_V,     \label{eq:defW}
\end{equation}
where
$$T_V :=
\begin{cases}
{1\over2}E_V^2 + \alpha E_V\phi(H_V)& \text{ if }V\neq \mathbb{F}_q^K\\
{1\over2}n^2&\text{ if }V=\mathbb{F}_q^K
\end{cases}.
$$
with $\alpha, r,d,\beta,$ and $d$ and the function $\phi$ as in the proof of Theorem \ref{thm:main}(b), and
\begin{itemize}
\item$\mE_V: = \{V':V'\subseteq V\}$, which is the collection of types of peers which are or can become type $V$ peers.
\item $\mH_V: = \{V': V'\in\mathcal{V},V'\not\subseteq V\}$,
which is the collection of types of peers which can help type $V$ peers. Notice that
$\mathbb{F}_q^K\in\mH_V$ for any $V\in\mathcal{V}-\mathbb{F}_q^K$ and $\mH_{\mathbb{F}_q^K}= \emptyset$.
\item$E_V: = \sum_{V':V'\in\mE_V}x_{V'}$, 
\item
$H_V: = \left(  {1-\frac{1}{q} \over 1-\widetilde{\mu}/\gamma}\right)  \sum_{V':V'\in\mH_V} (K-dim(V')+\mu/\gamma) x_{V'}$. 
\end{itemize}
The choice of $H_V$ here is motivated by Remark \ref{remark_Lyapunov_function}.
The proof that this Lyapunov function works for proving Theorem \ref{thm_netcode}(b)
parallels the proof of Theorem \ref{thm:main}(b). 

\begin{remark} When network coding is considered, it is typically assumed that
peers do not exchange descriptions of the pieces they already have.   This is likely because such
descriptions are more complex than simple bit vectors indicating data pieces used without network coding, and
because network coding works quite well even without such exchange.   If exchange of information were used,
then any time a peer $A$ with subspace $V_A$ transfers a piece to a peer $B$ with subspace $V_B$ such that
$V_A \not\subset V_B$,  a useful transfer could be achieved.  Theorem \ref{thm_netcode} remains true under this
mode of operation if $\widetilde{\mu}=\mu$ and $q\rightarrow\infty$ is taken in part (b), and the gap
between parts (a) and (b) shrinks to zero.
\end{remark}

\subsection{Modeling faster recovery for unsuccessful contacts}  

One aspect of the model is that the time between upload attempts by a peer or a seed
do not depend on whether the attempts are successful.   In practice, it can be
expected that if an attempt is not successful because there is no useful piece
to transfer, then the time to the next attempt can be reduced, perhaps by some
constant factor $\eta > 1,$   such as $\eta=10.$  We discuss briefly how this might be addressed in the
model of this paper, and the implications.   The model of this paper is push oriented, in
that the times that peers and the fixed seed attempt uploads are generated by
their internal Poisson clocks.   If we assume that each of those clocks runs faster by a
factor $\eta$ until the next clock tick, whenever there is no useful piece to upload,
then two things happen when there is a large one club.   First, the rate of download
opportunities for a young, gifted, or infected peer increases by a factor close to $\eta,$  which
is probably a violation of an implied soft download constraint in our model.  Secondly,
this would worsen the missing piece syndrome if some of the peers arrive with
pieces at time of arrival (i.e. if there are gifted peers) because those peers would be
uploading piece one a factor $\eta$ more slowly than they would be downloading other pieces.
Their contribution to the upload rate of piece one before they  become
peer seeds would thus be reduced by a factor $\eta.$

For the original model, it would be mathematically equivalent for peer-to-peer contacts
to be modeled as pulls, with a peer randomly contacting another peer to download from
at the times of an internal rate $\mu$ Poisson clock  (while the fixed seed would still push
pieces).    If each peer attempting to download a piece
would run its clock faster by a factor $\eta > 1$ following an unsuccessful attempt, until the
next attempt, then again two things happen when there is a large one club.   First, the rate of upload
opportunities for a young, gifted, or infected peer increases by a factor close to $\eta,$  which
is probably a violation of the implied soft upload constraint in our model.  Secondly,
this would lessen the missing piece syndrome if some of the peers arrive with
pieces at time of arrival (i.e. if there are gifted peers) because those peers would be
uploading piece one a factor $\eta$ more quickly than they would be downloading other pieces.
Their contribution to the upload rate of piece one before they  become
peer seeds would thus be increased by a factor $\eta.$

Either of the above two approaches leads to a violation of our implicit assumption that peers
upload and download at the same rates.  Further, if no peers arrive with pieces (so there are
no gifted peers) the stability condition wouldn't change anyway.   A third approach would be to
consider a push or pull scenario, but explicitly limit upload and download rates at a peer to
be equal when they occur simultaneously.    Specifically, if some peers arrive with pieces, so there exist
gifted peers, then in the original model those gifted peers tend to have successful uploads--they
have the rare piece one to give to other peers--and successful downloads--they need pieces other
than piece one that most of the other peers have.   Having those gifted peers upload and download
at equal rates would preserve the balance implicit in our original model, and the condition for
stability would remain unchanged.

\subsection{The Borderline of Stability}    \label{sec:borderline}

Theorem \ref{thm:main} provides a sufficient condition for stability
and a matching sufficient condition for instability, but it leaves open the
borderline case: namely, when equality holds in  \eqref{eq:pcd} (or, equivalently,
\eqref{eq:qcd}) for one or more values of $k\in {\cal F}$ and the
strict inequality \eqref{eq:pcd} holds for all  other $k.$
While it may not be interesting from a practical point of view,  we comment
on the borderline case.   First, we give a precise result
for a limiting case of the original system, and then we offer
a conjecture.    As in  \cite{HajekZhu10_full}, a simpler network model results by taking a limit
as $\mu \rightarrow \infty.$  
Call a state {\em slow} if all peers in the system have the same type, which includes
the state such that there are no peers in the system.  Otherwise, call a state {\em fast}. 
The total rate of transition out of any slow state does not depend on $\mu,$ and the
total rate out of any fast state is bounded below by a positive constant times $\mu.$
For very large values of $\mu,$  the process spends most of its time in slow states.
The original Markov process can be transformed into a new one by {\em watching} the
original process while it is in the set of slow states.   This means removing
the portions of each sample path during which the process is in fast states,
and time-shifting the remaining parts of the sample path to leave no
gaps in time.  The limiting Markov process, which we call the
$\mu=\infty$ process, is the weak limit (defined as usual for probability measures
on the space of c\`{a}dl\`{a}g sample paths equipped
with the Skorohod topology)  of the original process
watched in the set of slow states, as $\mu \rightarrow \infty.$
If $\gamma$ is fixed as $\mu\rightarrow\infty$ the model becomes
degenerate, because a single peer seed would quickly convert all
other peers into peer seeds.   If $\gamma=\theta\mu$ for fixed $\theta$
and $\mu\rightarrow\infty$ then the $\mu=\infty$ model is more interesting
but somewhat complicated.   So we consider $\gamma=\infty$ for simplicity.
For further simplicity we consider networks of the form in Example 3 (for any $K\geq 2$)
for symmetric arrival rates.  Thus $\lambda_C=\lambda$ if $|C|=1$ and $\lambda_C=0$ otherwise.
Also,  $U_s=0$ (no fixed seed) and $\gamma=\infty.$  Note that these networks
are borderline cases, not covered by Theorem \ref{thm:main}.

By symmetry of the model, the state space of the $\mu=\infty$ process
can be reduced to
$\widehat{\cal S}=\{(0,0)\} \cup \{(n,k): n\geq 1, 1\leq k \leq K-1\},$
where a state $(n,k)$ corresponds to $n$ peers in the system which
all possess the same set of $k$ pieces.  State (0,0) is transient.
The transition rate diagram is pictured in Figure \ref{fig.mu_infinite}
for $K=3.$   
\begin{figure}
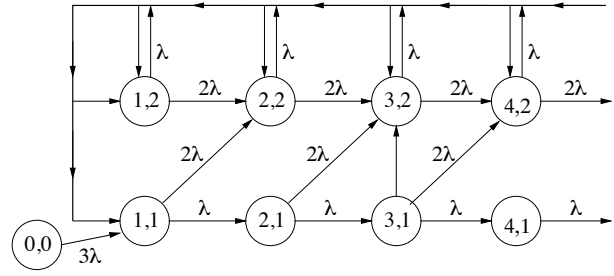

\post{mu_infinite}{8}
\caption{Transition rates of the $\mu=\infty$ variation of Example 3 with $\lambda_i=\lambda$ for all $i.$}
\label{fig.mu_infinite}
\end{figure}
States of the form $(n,K-1)$ form the top layer of states, and
are those for which all peers have the same set of $K-1$ pieces.
The transitions out of such a state $(n,K-1)$ is described as
follows.   There is a transition to state $(n+1,K-1)$ with rate $(K-1)\lambda,$
corresponding to the arrival of a new peer possessing one of the
 $K-1$ pieces that the other peers already have; the new peer instantly
 obtains all of the other $K-1$ pieces from the other peers.
At rate $\lambda$ a new peer arrives with the piece missing by all the
other peers.   The new peer downloads and uploads at equal rates, until
it either obtains all the $K-1$ other pieces, or until all the other peers
have departed.   By the nature of Poisson processes, the probability distribution
of the next state can be described in terms of flips of a fair coin, with ``heads" denoting
an upload by the new peer and ``tails" denoting a download by the new peer.
Let $Z$ denote the number of ``heads" in an experiment
of repeated coin flips, when a fair coin is flipped until ``tails" is observed $K-1$ times.
Then $Z$ represents the potential number of peers already in the system
that can leave due to uploads from the new peer.   If $Z\leq n-1$, then
the next state is $(n-Z,K-1).$   If $Z\geq n$ then the new state will have
the form $(1,j)$ with $1\leq j \leq K-1,$  corresponding to the case that
all peers that were originally in the system depart, and the new peer remains.
(The distribution over $j$ can be computed easily but is not important.)  Note
that $E[Z]=K-1,$   so the rate $(K-1)\lambda$ of upward unit jumps is equal to
the mean rate $\lambda E[Z]$ of decrease due to downward jumps (ignoring the
lower boundary).   Thus, when the process is in the top layer of states, it
evolves as a stationary, independent increment process with zero drift. 
Such processes are null-recurrent, and therefore, {\em the $\mu=\infty$
process is null-recurrent.}

In essence, the $\mu=\infty$ process is simple because peers remain young for only
an instant; there are no infections of young peers by gifted peers.  If $\mu$
is finite, such infections effectively increase the departure rate, by roughly a constant
divided by the number of peers in the system.  The constant is decreasing in $\mu.$
A reflecting Brownian motion with negative drift inversely
proportional to the state is positive recurrent if the constant of proportionality is
sufficiently large, and is null recurrent otherwise.  So the use of a diffusion approximation
leads us to pose the following conjecture, which pertains to the symmetric flat-network model
considered in \cite{MassoulieVojnovic08}:
\begin{conjecture}
Let $K\geq 1$ and suppose $\lambda_C=\lambda$ for $|C|=1$ and $\lambda_C=0$ otherwise.
For some $a_K > 0,$   the process is positive recurrent if $0 < \mu/\lambda < a_K$
and is null recurrent if $\mu/\lambda > a_K.$
\end{conjecture}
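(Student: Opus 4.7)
My plan is to attack the positive recurrence direction via a refined Foster-Lyapunov argument and the null recurrence direction via coupling with the $\mu=\infty$ process whose null recurrence is already established in the excerpt. The essential point is that at the critical boundary the quadratic-in-$E_S$ terms of the Lyapunov function from Section \ref{sec:positiverecurrence} contribute zero asymptotic drift, so one must harvest the sub-leading drift of order $1/n$ coming from ``infections'' of young peers, exactly as suggested by the heuristic reflecting-Brownian-motion discussion preceding the conjecture.

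For the positive recurrence half, the first step would be to quantify carefully the net drift of $E_S$ when the state is in the top layer $\{\mathbf{x}: x_S/n > 1-\epsilon,~|S|=K-1\}$. A direct accounting following Remark \ref{remark_Lyapunov_function} shows that in the symmetric borderline model the leading-order terms in $Q(E_S)$ cancel, and one must expand one order further: a new peer of the ``missing type'' arrives at rate $\lambda$, and during its random lifetime of order $1/\mu$ it has a chance of order $1/n$ to be infected by a gifted peer before escaping, and each infection then triggers a bounded branching subtree of additional departures from the one-club. This gives an effective drift of the form $Q(E_S)\approx - c(\mu,\lambda,K)/n$ with $c(\mu,\lambda,K)$ decreasing in $\mu/\lambda$ and strictly positive for $\mu/\lambda$ small. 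I would then take the Lyapunov function $W = \sum_C r^{|C|}T_C$ with the top-layer term replaced by $T_S = E_S\log(1+E_S) + \alpha E_S\phi(H_S)$, or more generally a concave increasing function $\psi(E_S)$ in place of $\frac{1}{2}E_S^2$, so that $Q(T_S)\approx \psi'(E_S)Q(E_S)\approx - c\,\psi'(E_S)/n$; with $\psi(x)\sim x\log x$ this is of order $-c(\log n)/n$, and a further refinement (for instance $\psi$ with $\psi'(x)\sim x/\log x$) can produce a drift of order $-\Theta(1)$ uniformly in $n$ once $\mu/\lambda < a_K$ for an appropriate threshold. The remaining bookkeeping (the class I vs.\ class II case split, the $\phi(H_S)$ correction for small $H_S$, the bounds on approximation error by $Q(W) - \LL W$) would proceed exactly as in Section \ref{sec:positiverecurrence}.

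For the null recurrence half, I would show that any putative invariant distribution for the finite-$\mu$ process is incompatible with the $\mu=\infty$ null-recurrence result when $\mu/\lambda > a_K$. Concretely, I would construct a coupling in which the finite-$\mu$ process and the $\mu=\infty$ watched process are driven by common Poisson clocks for arrivals, and bound the discrepancy by the aggregate sojourn time in fast states, which tends to $0$ as $\mu\to\infty$. For fixed $\mu$ above threshold one should instead establish a non-positive-recurrence test: exhibit a function $V\geq 0$ on $\widehat{\cal S}$ with $QV\geq 0$ off a compact set and $V(\mathbf{x})\to\infty$ along a sequence escaping to infinity (the standard Foster converse). A natural candidate is $V(\mathbf{x})=\log n$ augmented by the $\phi(H_S)$ correction; the zero drift of $E_S$ in the top layer at the borderline translates, via It\^o's formula for jump processes, into $Q(\log n) \approx \text{Var}(\Delta n)/n^2 - c/n^2$ which has sign determined by $\mu/\lambda$.

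The main obstacle I anticipate is pinning down the precise constant $c(\mu,\lambda,K)$ that governs the effective $1/n$-drift, and hence identifying $a_K$ sharply. The difficulty is that this constant is the expectation of the size of a branching process of infections whose offspring distribution depends on the empirical composition of the one-club, which itself fluctuates; a naive computation gives only crude upper and lower bounds on $a_K$. A diffusion-limit approach under the scaling $n = O(\sqrt{\text{time}})$ should in principle pin down $a_K$ as the critical coefficient of a reflecting diffusion $dX_t = -\frac{c}{X_t}dt + \sigma dB_t$ with reflection at $0$, which is positive recurrent iff $2c/\sigma^2 > 1$; but converting this heuristic into a rigorous identification (rather than bounds) of $a_K$ is precisely the content of the conjecture and is likely to require a separate functional central limit theorem for the watched process.
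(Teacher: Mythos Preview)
The statement you are attempting to prove is labeled a \emph{Conjecture} in the paper, not a theorem or proposition: the authors explicitly do not prove it. They offer only a heuristic in the paragraph preceding the conjecture, namely that infections of young peers contribute an effective drift of order a constant over $n$, decreasing in $\mu$, and that the analogy with a reflecting Brownian motion with drift $-c/x$ suggests a threshold. There is therefore no ``paper's own proof'' against which to compare your proposal.

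Your proposal is a reasonable articulation of the heuristic the paper already gives, but it is not a proof and you yourself identify the gap in your final paragraph. Two concrete obstructions stand out. First, the conjecture asserts the existence of a \emph{single} threshold $a_K$: this requires some form of monotonicity of the recurrence class in $\mu/\lambda$, which your argument does not supply. Even if your Lyapunov computation produced a negative drift for small $\mu/\lambda$ and your converse test produced nonnegative drift for large $\mu/\lambda$, nothing rules out alternating bands of positive and null recurrence in between. Second, your positive-recurrence argument hinges on the claim that $Q(E_S)\approx -c(\mu,\lambda,K)/n$ with $c$ strictly positive for small $\mu/\lambda$, but the infection mechanism you invoke (a gifted peer infecting a young peer with probability of order $1/n$) is a second-order effect buried inside the class~\Rmnum{1} analysis, and extracting it rigorously requires controlling the joint distribution of the number of young peers and gifted peers conditional on being in the top layer; the Lyapunov machinery of Section~\ref{sec:positiverecurrence} is not designed to resolve fluctuations at that scale. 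Your suggestion of replacing $\tfrac12 E_S^2$ by a concave $\psi(E_S)$ is the right instinct for borderline drift, but the specific choices you mention ($x\log x$, or $\psi'(x)\sim x/\log x$) do not obviously yield a uniformly bounded-below negative drift once the $\phi(H_S)$ correction and the class~\Rmnum{2} contributions are folded back in. In short, your outline faithfully reproduces the paper's own heuristic and correctly locates the difficulty, but does not resolve it; the statement remains open.
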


\section{Conclusion \label{sec:conclusion}}
By focusing on the missing piece syndrome, which affects the performance of a P2P system,
we have identified the minimum seed dwell times needed to stabilize the system. The model
includes a fixed seed, peers arriving with pieces, and seeds dwelling for a while as peer seeds
after obtaining the complete file.   It is a mathematical simplification of a P2P system
during the period of several hours or days after a flash crowd initiation of a file transfer, when
the arrival of new peers is relatively steady.   Our result identifies the stability region under all
possible rates of arrival, mean  times between transfer attempts, and distribution of pieces
brought in by new peers.   For tractability, we assumed that the times between upload attempts and the dwell times
of seeds are exponentially distributed random variables.   However,  we conjecture the
results hold for more general distributions; the instability half of our proof does
not rely on the assumption of exponential distributions.    Theorem 1 and its extensions
given here hold under the stated specific modeling assumptions.   It is our hope that the
results help with intuition and can be adapted to many other scenarios, including such effects as
heterogeneous link speeds or network topologies other than the fully connected one.

We summarize what can be taken away from our analysis, and point to
future work.   The first point is that stability can be
achieved (within the confines of the model) if peers remain in the system
a relatively short amount of time--no longer than the time needed to
upload one piece after obtaining a complete collection.  A second point is
that network coding can significantly lessen the effect of the missing piece
syndrome in the case that some peers are given pieces (random linear
combinations of data pieces) upon arrival.

A third point is that the stability condition is insensitive to the piece selection
policy,  and to network coding if peers don't arrive with pieces (i.e. no gifted peers).
However,  some systems that are provably unstable in the sense that they are
modeled by transient Markov processes, can be well behaved over long periods
of time in practice.  There may be a quasi-stable portion of the state space in which
the process dwells for a long time before the onset of a large one club occurs.   The use of network
coding or choice of piece selection policies can have a large impact on how long it
takes the system to enter a state with a large one club--a possible study for future
work would be to explore the longevity of a quasi-equilibrium in good network states.

\section{Appendix}

Miscellaneous results and a definition used in the main part of the paper are collected in this appendix.

\begin{prop}   \label{cor.FosterCompContinuous}
{\em Combined Foster-Lyapunov stability criterion and moment bound--continuous time} (See
\cite{Hajek567,MeynTweedie09}.)
Suppose $X$ is a continuous-time, irreducible Markov process on a countable state space ${\cal S}$ with generator matrix $Q.$
Suppose $V$, $f$, and $g$ are nonnegative functions on $\cal S$ such that
$QV(\mathbf{x})   \leq -f(\mathbf{x}) +g(\mathbf{x})$ for all $\mathbf{x}\in {\cal S}$, and, for some $\delta > 0$,  the set $C$ defined by
$C=\{ \mathbf{x} : f(\mathbf{x}) < g(\mathbf{x})+\delta\}$ is finite.  Suppose also that $\{ \mathbf{x} : V(\mathbf{x}) \leq K\}$ is finite for all $K$.
Then $X$ is positive recurrent and, if $\pi$ denotes the equilibrium distribution,   $\sum_\mathbf{x}  f(\mathbf{x})\pi(\mathbf{x})
 \leq \sum_\mathbf{x}  g(\mathbf{x})\pi(\mathbf{x})$.
\end{prop}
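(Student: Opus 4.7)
The plan is to split the argument into two parts: establishing positive recurrence via a standard Foster-Lyapunov argument, and then deriving the moment inequality by integrating the drift bound against the stationary distribution.

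First, for positive recurrence, I would observe that on $\mathcal{S} \setminus C$ we have $f(\mathbf{x}) \geq g(\mathbf{x}) + \delta$, so $QV(\mathbf{x}) \leq -f(\mathbf{x}) + g(\mathbf{x}) \leq -\delta$. Thus outside the finite set $C$, the Lyapunov function $V$ has uniform negative drift bounded above by $-\delta$. Combined with the hypothesis that $\{V \leq K\}$ is finite for every $K$ (so that $V$ has compact sub-level sets and tends to infinity along any escape sequence in the countable state space), the classical Foster-Lyapunov criterion for continuous-time Markov chains yields that $X$ is positive recurrent. Concretely, starting from any $\mathbf{x}_0$, Dynkin's formula applied up to the hitting time $\sigma_C$ of $C$ gives $V(\mathbf{x}_0) \geq \delta \, \mathbb{E}_{\mathbf{x}_0}[\sigma_C]$, so $C$ is reached in finite expected time from every state.

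Second, for the moment bound, I would use Dynkin's formula together with a truncation in $V$. Fix $N\geq 1$ and let $\tau_N$ be the first exit time from the finite set $B_N := \{\mathbf{x} : V(\mathbf{x}) \leq N\}$. For any starting state $\mathbf{x}_0 \in B_N$, Dynkin's identity gives
\begin{equation*}
\mathbb{E}_{\mathbf{x}_0}\!\left[V(X_{t\wedge \tau_N})\right] - V(\mathbf{x}_0) = \mathbb{E}_{\mathbf{x}_0}\!\left[\int_0^{t\wedge \tau_N} QV(X_s)\,ds\right],
\end{equation*}
which, combined with $QV \leq -f + g$ and nonnegativity of $V$, yields
\begin{equation*}
\mathbb{E}_{\mathbf{x}_0}\!\left[\int_0^{t\wedge \tau_N} f(X_s)\,ds\right] \leq V(\mathbf{x}_0) + \mathbb{E}_{\mathbf{x}_0}\!\left[\int_0^{t\wedge \tau_N} g(X_s)\,ds\right].
\end{equation*}
Dividing by $t$, letting $t \to \infty$, and using positive recurrence plus the ergodic theorem for continuous-time Markov chains, the time averages converge to the stationary expectations $\pi(f\,\mathbf{1}_{B_N})$ and $\pi(g\,\mathbf{1}_{B_N})$; the boundary term $V(\mathbf{x}_0)/t$ vanishes. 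Sending $N \to \infty$ and applying monotone convergence gives $\pi(f) \leq \pi(g)$.

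The main obstacle is the delicate interchange of limits: since $f$ and $g$ may be unbounded and $\pi(g)$ could a priori be infinite, one must justify that the boundary contribution $\mathbb{E}_{\mathbf{x}_0}[V(X_{t\wedge\tau_N})]$ does not spoil the limit. The truncation by $\tau_N$ handles this because $V$ is bounded by $N$ on $B_N$, and positive recurrence ensures that for each fixed $N$ the process spends a well-defined long-run fraction of time in $B_N$. A cleaner alternative — which avoids truncation entirely — is to use Kac's cycle formula to represent $\pi$ in terms of excursions from the finite atom $C$ and then integrate the drift inequality once along an excursion, so that the boundary term $V(\mathbf{x}_0)$ is controlled by $\max_{\mathbf{x} \in C} V(\mathbf{x}) < \infty$ and the inequality $\pi(f)\leq\pi(g)$ drops out immediately.
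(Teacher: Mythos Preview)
The paper does not prove this proposition; it is stated in the appendix with a citation to standard references \cite{Hajek567,MeynTweedie09} and used as a black box. So there is no ``paper's own proof'' to compare against, and your outline is being measured only against the standard argument.

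Your positive-recurrence half is fine: uniform negative drift $QV \le -\delta$ off the finite set $C$, together with finite sublevel sets of $V$, is exactly the Foster criterion, and the Dynkin bound $\mathbb{E}_{\mathbf{x}_0}[\sigma_C] \le V(\mathbf{x}_0)/\delta$ is the usual way to conclude.

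The moment-bound half has a gap in the primary route you describe. You let $\tau_N$ be the \emph{first exit time} from the finite set $B_N = \{V \le N\}$. For a positive recurrent chain on an infinite state space, $\tau_N < \infty$ almost surely, so $t \wedge \tau_N \to \tau_N$ as $t\to\infty$, and dividing $\int_0^{t\wedge\tau_N} f(X_s)\,ds$ by $t$ gives zero, not $\pi(f\,\mathbf{1}_{B_N})$. The ergodic theorem does not apply to an integral truncated at a finite stopping time. To make a truncation argument work you would truncate $V$ itself (say $V_N = V\wedge N$, computing $QV_N$ carefully) or use a localizing sequence of stopping times that diverges to infinity, not a single finite exit time.

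That said, the ``cleaner alternative'' you mention at the end is correct and is in fact the standard proof: apply Dynkin's formula over one regeneration cycle between successive returns to a fixed reference state (or to the finite set $C$), so that the boundary terms in $V$ cancel exactly, and then invoke the cycle (Kac) representation of $\pi$ to turn the per-cycle inequality into $\pi(f) \le \pi(g)$. Since you already identified this route, your proposal is essentially sound; just promote the cycle-formula argument to be the main proof rather than an afterthought, and drop the flawed first-exit-time truncation.
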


\begin{lemma} \label{lem:expansion}
{\em Bounding the drift of a function of a function of the state}
Suppose $X$ is a continuous-time, irreducible Markov process with countable state space $\mathcal{S}$ and with generator matrix $Q=\left(q(x,x'), x,x'\in\mathcal{S}\right)$. Suppose $f:\mathcal{S}\rightarrow  [0,\infty)$ and $V:\mathbb{R}\rightarrow [0,\infty)$ are two nonnegative functions; and suppose $V$ is differentiable with derivative $V'$ that is Lipschitz continuous
with Lipschitz constant $M$.    Then  $QV(f)$, the drift of $V(f)$, satisfies
\ben
\lefteqn{QV(f)(z) }\\
&:=& \sum_{z':z'\in\mathcal{S},z'\neq z} q(z,z')\left[V(f(z')) - V(f(z))\right] \\
&\leq& V'(f(z))Qf(z) +\\&& {M\over2}\sum_{z':z'\in\mathcal{S},z'\neq z}q(z,z')[f(z')-f(z)]^2,
\een
for all $z\in\mathcal{S},$ where $Qf$ is the drift of $f$.
\end{lemma}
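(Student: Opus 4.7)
\medskip

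The plan is to reduce the lemma to a standard pointwise quadratic upper bound for functions with Lipschitz derivative, and then integrate against the nonnegative off-diagonal transition rates $q(z,z')$. First I would prove the elementary fact: if $V:\mathbb{R}\to[0,\infty)$ is differentiable with $V'$ Lipschitz of constant $M$, then for all real $a,b$,
\begin{equation*}
V(b)\ \leq\ V(a)+V'(a)(b-a)+\tfrac{M}{2}(b-a)^2.
\end{equation*}
This is immediate from the fundamental theorem of calculus: writing $V(b)-V(a)-V'(a)(b-a)=\int_a^b \bigl(V'(t)-V'(a)\bigr)\,dt$ and applying $|V'(t)-V'(a)|\leq M|t-a|$ to the integrand yields $|V(b)-V(a)-V'(a)(b-a)|\leq \tfrac{M}{2}(b-a)^2$, and in particular the one-sided bound above.

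Next I would specialize this inequality to $a=f(z)$ and $b=f(z')$ for an arbitrary $z'\neq z$, giving
\begin{equation*}
V(f(z'))-V(f(z))\ \leq\ V'(f(z))\bigl(f(z')-f(z)\bigr)+\tfrac{M}{2}\bigl(f(z')-f(z)\bigr)^2.
\end{equation*}
Since $q(z,z')\geq 0$ for $z'\neq z$, I can multiply through by $q(z,z')$ without flipping the inequality and then sum over all $z'\neq z$. The factor $V'(f(z))$ is a constant with respect to the summation variable $z'$, so it pulls out of the first sum, producing exactly $V'(f(z))\,Qf(z)$ by the definition \eqref{eq:drift} of the drift of $f$. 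The second sum is the asserted quadratic remainder term, and the left-hand side is by definition $QV(f)(z)$. Combining these yields the claimed bound.

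The only care needed is that all the sums make sense: the assertion is taken in $[0,\infty]$, and the hypothesis implicitly requires $Qf(z)$ and the sum of squared jumps at $z$ to be finite at the states $z$ where the bound is to be applied (otherwise the bound is trivial). In the intended application (inside the proofs of Lemmas~\ref{lem:LTSt} and \ref{lem:LTS1}), $f$ is either $H_S$ or $H_S'$, the state has only finitely many outgoing transitions with positive rate, and the jumps of $f$ are uniformly bounded, so all sums converge. There is no substantive obstacle here; the proof is essentially the observation that a function with Lipschitz derivative admits a quadratic majorant at every point, lifted to the Markov drift via linearity of $\sum_{z'}q(z,z')(\cdot)$.
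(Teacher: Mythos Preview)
Your proposal is correct and follows essentially the same approach as the paper: both derive the pointwise inequality $V(f(z'))-V(f(z))\le V'(f(z))(f(z')-f(z))+\tfrac{M}{2}(f(z')-f(z))^2$ via the fundamental theorem of calculus and the Lipschitz bound on $V'$, then multiply by $q(z,z')\ge0$ and sum. The only cosmetic difference is that you isolate the quadratic-majorant fact as a separate step before specializing to $a=f(z)$, $b=f(z')$, whereas the paper carries out the integral computation directly in those variables.
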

\begin{proof}
The lemma follows from:
\ben
\lefteqn{V(f(z')) - V(f(z))}\\ &=& \int_{f(z)}^{f(z')} V'(x) dx \\
& = & \int_{f(z)}^{f(z')}\left[V'(f(z))+ \left(V'(x)-V'(f(z))\right)\right] dx\\
&\leq& \int_{f(z)}^{f(z')}  V'(f(z) )  dx  +  \bigg|   \int_{f(z)}^{f(z')}  M|x-f(z)| dx   \bigg|  
 \\
&=& V'(f(z))[f(z')-f(z)]+ {M\over2}[f(z')-f(z)]^2.\een
\end{proof}

\begin{definition} {\em Stochastic domination or coupling}
Suppose $A=(A_t : t\geq 0)$ and  $B=(B_t : t\geq 0)$  are two random processes, either both
discrete-time random processes, or both continuous time  random processes having right-continuous with
left limits sample paths.   Then $A$ is {\em stochastically dominated} by $B$ if there is a single probability
space $(\Omega, {\cal F}, P)$, and two random processes $\tilde{A}$ and $\tilde{B}$ on
 $(\Omega, {\cal F}, P)$, such that\\
\parbox{3.5in}{
(a) $A,\tilde{A}$ have the same finite dimensional  distributions,   \\
(b) $B, \tilde{B}$ have the same finite dimensional  distributions,   \\
and (c)  $P\{\tilde{A}_t \leq \tilde{B}_t~\mbox{for all}~t\}=1.$
}
 \end{definition}
 Clearly if $A$ is stochastically dominated by $B$, then for any $a$ and $t$, 
  $P\{ A_t \geq a\}  \leq P\{ B_t \geq a\}.$

\begin{proposition}  \label{prop:compoundKingman} 
{\em Kingman's Moment bound adapted to compound Poisson processes}
 (\cite{Kingman62}, see \cite{HajekZhu10_full})
Let $C$ be a compound Poisson process with $C_0=0$, with jump times given by a Poisson process
of rate $\alpha$, and jump sizes having mean $m_1$ and mean square value $m_2.$
Then for all $B >0$ and $\epsilon > \alpha m_1$
\begin{equation}   \label{eq:compoundKingman}
P\{ C_t   < B+\epsilon t ~ \mbox{for all}~ t \} \geq 1 -   \frac{\alpha m_2}{2B(\epsilon- \alpha m_1)} .  
\end{equation}
\end{proposition}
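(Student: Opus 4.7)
The plan is to adapt Kingman's classical moment bound to the continuous-time, compound-Poisson setting by a martingale / optional stopping argument.

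First I would set up the key martingales associated with $C$. Let $M_t := C_t - \alpha m_1 t$. Since $C_t$ is a compound Poisson process with jump rate $\alpha$ and mean jump size $m_1$, the process $M_t$ is a zero-mean, square-integrable martingale with independent increments. A direct second-moment calculation (equivalently, the Doob--Meyer decomposition of the submartingale $M_t^2$) yields $E[M_t^2] = \alpha m_2 t$ and, more strongly, the process $M_t^2 - \alpha m_2 t$ is itself a martingale.

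Next I would reformulate the claim in terms of a stopping time. Set $\rho := \epsilon - \alpha m_1 > 0$ and define $\tau := \inf\{t \geq 0 : C_t \geq B + \epsilon t\}$. The event whose probability is bounded from below in \eqref{eq:compoundKingman} is exactly $\{\tau = \infty\}$, so it suffices to show $P\{\tau < \infty\} \leq \alpha m_2/(2B\rho)$. By right-continuity of $C$, on $\{\tau<\infty\}$ we have $C_\tau \geq B+\epsilon\tau$, hence
$$ M_\tau = C_\tau - \alpha m_1\tau \;\geq\; B + \rho\tau, \qquad \text{so}\qquad M_\tau^2 \;\geq\; (B+\rho\tau)^2 \;\geq\; 2B\rho\,\tau . $$

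The main step is to apply optional stopping to the martingale $M_t^2 - \alpha m_2 t$ at the bounded stopping time $\tau\wedge T$, giving $E[M_{\tau\wedge T}^2] = \alpha m_2\, E[\tau\wedge T]$, and then split the left-hand side according to $\{\tau\leq T\}$ versus $\{\tau>T\}$ using the lower bound $M_\tau^2 \geq 2B\rho\tau$. After this manipulation, the bound is obtained in the limit $T\to\infty$. The main obstacle is precisely this limiting step: on $\{\tau>T\}$ the term $E[M_T^2 I_{\{\tau>T\}}]$ need not be small a priori, because $E[M_T^2] = \alpha m_2 T$ grows linearly, and one must control it using the martingale identity for $M_t$ together with the fact that on $\{\tau>T\}$ the value $M_T$ must be less than $B+\rho T$.

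A clean alternative that sidesteps the $T\to\infty$ subtlety is to identify $W := \sup_{t\geq 0}(C_t - \epsilon t)$ with the stationary workload of an $M/G/1$ queue with Poisson arrivals of rate $\alpha$, service-time distribution matching the jump distribution of $C$, and service rate $\epsilon$ (the condition $\epsilon > \alpha m_1$ is exactly the stability condition). The Pollaczek--Khinchine formula then gives $E[W] = \alpha m_2/(2\rho)$, and Markov's inequality yields $P\{W \geq B\} \leq E[W]/B = \alpha m_2/(2B\rho)$, which is precisely the claimed bound. Either route concludes the proof.
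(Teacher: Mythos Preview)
The paper does not actually prove this proposition; it is stated in the appendix with citations to \cite{Kingman62} and \cite{HajekZhu10_full} and no argument given. So there is no in-paper proof to compare against beyond those references.

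Your second route---identifying $\sup_{t\ge 0}(C_t-\epsilon t)$ with the stationary $M/G/1$ workload, reading off $E[W]=\alpha m_2/\bigl(2(\epsilon-\alpha m_1)\bigr)$ from the Pollaczek--Khinchine formula, and applying Markov's inequality---is correct and yields exactly \eqref{eq:compoundKingman}. This is precisely the classical derivation the citations point to: Kingman's 1962 inequality is the $GI/G/1$ upper bound on mean waiting time, which in the $M/G/1$ (Poisson-arrival) case coincides with the Pollaczek--Khinchine value. So your argument is in the same spirit as the references, specialized to the setting where the bound is in fact an equality for $E[W]$.

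One caveat is worth noting. The $M/G/1$ identification requires nonnegative jump sizes, whereas the proposition as stated does not impose this, and the paper does use it with negative jumps: in Lemma~5 it is applied to $-A_t$, a compound Poisson process with deterministic jump size $-1$. For signed jumps the workload interpretation is not literal. The standard fix is to sample $C_t-\epsilon t$ at the discrete set of times where it can attain a new running maximum (the jump times when $\epsilon>0$, or the pre-jump instants when $\epsilon<0$), obtaining a random walk with negative-mean increments, and then invoke Kingman's original random-walk inequality. Your first, martingale-based sketch is aimed in the same direction, and your diagnosis of the obstacle is accurate: the term $E[M_T^2 I_{\{\tau>T\}}]$ need not vanish as $T\to\infty$, so optional stopping on $M_t^2-\alpha m_2 t$ alone does not close the argument without importing the random-walk embedding or an equivalent renewal ingredient.
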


 \begin{lemma}   \label{lemma.mginfty} {\em A maximal bound for an $\mathbf M/GI/\infty$ queue}  (\cite{HajekZhu10_full})
 Let $M$ denote the number of customers in an $M/GI/\infty$ queueing system, with arrival rate
 $\lambda$ and mean service time $m.$    Suppose that $M_0=0.$  Then for $B,\epsilon > 0$,
 \begin{equation}  \label{eq.mginfty}
 P\{ M_t \geq B+\epsilon t ~~\mbox{for some}~t\geq 0\}  \leq   \frac{ e^{\lambda(m+1) } 2^{-B} }{1-2^{-\epsilon}}.
 \end{equation}
 \end{lemma}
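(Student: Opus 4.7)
The plan is to prove the maximal bound by a union bound over integer time epochs, combined with the moment generating function of the Poisson distribution. The key observation is that because the $M/GI/\infty$ system starts empty, for each fixed $t$ the count $M_t$ is Poisson distributed with mean $\lambda \int_0^t \bar G(s)\, ds \leq \lambda m,$ where $\bar G$ is the survival function of the service time distribution.

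First I would replace the supremum over the continuum $t\geq 0$ by a supremum over integers. For each integer $k\geq 0,$ note the crude bound
\[
\sup_{t\in[k,k+1)} M_t \;\leq\; M_k + A_k,
\]
where $A_k$ is the number of arrivals in $[k,k+1).$ Since $M_k$ depends only on the arrivals up to time $k$ and their (independent) service times, while $A_k$ counts only later arrivals, $M_k$ and $A_k$ are independent. Thus $M_k+A_k$ is itself Poisson with mean at most $\lambda m + \lambda = \lambda(m+1).$ Moreover, if $M_t\geq B+\epsilon t$ for some $t\in[k,k+1),$ then $M_k+A_k \geq B+\epsilon k.$

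Second I would apply a Chernoff-type bound at parameter $s=\ln 2$ to this Poisson variable: for $X\sim\mathrm{Poisson}(\eta),$
\[
P\{X\geq a\} \;\leq\; 2^{-a}\, E[2^X] \;=\; 2^{-a}\, e^{\eta(2-1)} \;=\; e^\eta\, 2^{-a}.
\]
Applying this with $\eta\leq\lambda(m+1)$ and $a=B+\epsilon k$ gives
\[
P\bigl\{\sup_{t\in[k,k+1)} M_t \geq B+\epsilon k\bigr\} \;\leq\; e^{\lambda(m+1)}\,2^{-B-\epsilon k}.
\]

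Finally, I would sum over $k\geq 0$ via a union bound. Since $\{M_t\geq B+\epsilon t\text{ for some }t\geq 0\}$ is contained in the union of the events $\{\sup_{t\in[k,k+1)} M_t\geq B+\epsilon k\},$ the bound becomes the geometric series
\[
\sum_{k=0}^{\infty} e^{\lambda(m+1)}\,2^{-B-\epsilon k} \;=\; \frac{e^{\lambda(m+1)}\,2^{-B}}{1-2^{-\epsilon}},
\]
which is precisely \eqref{eq.mginfty}. There is no real obstacle here; the only delicate point is the independence argument that upgrades the pointwise Poisson bound on $M_k$ to a bound on the supremum over $[k,k+1),$ and the choice of base $2$ in the Chernoff step, which is made to produce the clean $2^{-\epsilon}$ denominator appearing in the statement.
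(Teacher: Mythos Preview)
Your proof is correct. The paper itself does not prove this lemma; it merely states it in the appendix with a citation to \cite{HajekZhu10_full}, so there is no in-paper argument to compare against. That said, the specific form of the bound --- the factor $e^{\lambda(m+1)}$, the base-$2$ Chernoff term $2^{-B}$, and the geometric denominator $1-2^{-\epsilon}$ --- essentially encodes your proof: discretize to unit intervals, use $M_k+A_k$ as a Poisson($\leq\lambda(m+1)$) envelope for $\sup_{t\in[k,k+1)}M_t$, apply Markov's inequality to $2^X$, and sum. Your argument is exactly the natural one and matches the constants, so it is almost certainly the proof the authors had in mind from \cite{HajekZhu10_full}.
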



\end{document}